\numberwithin{equation}{section}
\newtheorem{theorem}{Theorem}
\newtheorem{lemma}{Lemma}
\theoremstyle{definition}
\newtheorem{assumption}{Assumption}
\newcommand{\one}{\cellcolor{gray!50}}
\newcommand{\two}{\cellcolor{gray!44}}
\newcommand{\thr}{\cellcolor{gray!38}}
\newcommand{\fou}{\cellcolor{gray!32}}
\newcommand{\fiv}{\cellcolor{gray!26}}
\newcommand{\six}{\cellcolor{gray!20}}
\newcommand{\sev}{\cellcolor{gray!15}}
\newcommand{\eig}{\cellcolor{gray!10}}
\newcommand{\nin}{\cellcolor{gray!5}}
\newcommand{\ten}{\cellcolor{gray!0}}
\renewcommand{\hat}{\widehat}
\renewcommand{\tilde}{\widetilde}
\newcommand{\calF}[0]{\mathcal{F}}
\renewcommand{\hat}{\widehat}
\renewcommand{\bar}{\overline}
\begin{document}

\title{Standard Errors for Two-Way Clustering \\with Serially Correlated Time Effects\thanks{\setlength{\baselineskip}{5.0mm}We benefited from useful comments by A. Colin Cameron and seminar participants at Essex, Kobe, LSE, Michigan State, North Carolina State, Singapore Management University, and UC Davis, and participants in AMES in East and South-East Asia 2022, Cemmap/SNU Workshop on Advances in Econometrics 2022,  CIREQ Montr\'eal Econometrics Conference 2022, and NAWM 2023. All the remaining errors are ours. Hansen thanks the National Science Foundation and Phipps Chair for research support. The Stata command is available to install by \texttt{ssc install xtregtwo}.}}
\author{
	Harold D. Chiang\thanks{\setlength{\baselineskip}{5.0mm}Harold D. Chiang: hdchiang@wisc.edu. Department of Economics, University of Wisconsin-Madison, William H. Sewell Social Science Building, 1180 Observatory Drive,	Madison, WI 53706-1393, USA\smallskip} 
	\qquad 
		Bruce E. Hansen\thanks{\setlength{\baselineskip}{5.0mm}Bruce E. Hansen: bruce.hansen@wisc.edu. Department of Economics, University of Wisconsin-Madison, William H. Sewell Social Science Building, 1180 Observatory Drive,	Madison, WI 53706-1393, USA\smallskip} 
	\qquad 
	Yuya Sasaki\thanks{\setlength{\baselineskip}{5.0mm}Yuya Sasaki: yuya.sasaki@vanderbilt.edu. Department of Economics, Vanderbilt University, VU Station B \#351819, 2301 Vanderbilt Place, Nashville, TN 37235-1819, USA\smallskip}
}
\date{}
\maketitle

\begin{abstract}\setlength{\baselineskip}{6.6mm}
We propose improved standard errors and an asymptotic distribution theory for two-way clustered panels.
Our proposed estimator and theory allow for arbitrary serial dependence in the common time effects, which is excluded by existing two-way methods, including the popular two-way cluster standard errors of \citet*{CGM2011} and the cluster bootstrap of \cite*{menzel2021bootstrap}. 
Our asymptotic distribution theory is the first which allows for this level of inter-dependence among the observations. 
Under weak regularity conditions, we demonstrate that the least squares estimator is asymptotically normal, our proposed variance estimator is consistent, and t-ratios are asymptotically standard normal, permitting conventional inference.
The main results extend to two-way fixed-effect models.
We present simulation evidence that confidence intervals constructed with our proposed standard errors obtain superior coverage performance relative to existing methods. 
We illustrate the relevance of the proposed method in an empirical application to a standard Fama-French three-factor regression.
\medskip\\
{\bf Keywords:} panel data, serial correlation, standard errors, two-way clustering.
\end{abstract}

\newpage

\section{Introduction}

A standard panel data set has observations double-indexed over firms\footnote{The index $i$ can refer to any entity, such as firms, individuals, or households. For simplicity we will refer to these entities as ``firms''.} $i$ and time $t$. A panel is said to have a two-way dependence structure if there is dependence across individuals at any given time, and across time for any given individual. A common model for two-way dependence is the components structure $U_{it} = f(\alpha_i,\gamma_t,\varepsilon_{it})$, where $\alpha_i$ is a firm effect, $\gamma_t$ is a time effect, and $\varepsilon_{it}$ is an idiosyncratic effect. It is typical to view the time effects $\gamma_t$ as omitted macroeconomic variables, such as the state of the business cycle. Therefore, they are unlikely to be serially independent. Consequently, it is reasonable to treat $\gamma_t$ as a serially correlated time-series process.

Serial correlation in the common time-effects, however, creates an extra layer of serial dependence beyond two-way dependence. It induces dependence among observations which do not share a common firm or time index. This fundamentally complicates the dependence structure, rendering existing theory and methods inappropriate. Most importantly for practice, existing two-way clustered inference methods do not allow time effects with arbitrary serial correlation.  Moreover, no formal asymptotic theory has previously been developed under the two-way clustering setting with serially correlated time effects. This paper sheds new light on literature of two-way clustering by formally investigating asymptotic theory for serially correlated time effects in this context and proposing novel and theoretically supported method of  inference under such settings.  

The most popular inference method for two-way dependent panels is the two-way clustered standard errors of \citet*{CGM2011}, which we shall henceforth refer to as CGM. A related recently method is the bootstraps of \cite*{menzel2021bootstrap}; see also \citet[Sec. 3.3]{DDG2019} for a generalization to empirical processes. Both of these approaches allow for the components structure $U_{it} = f(\alpha_i,\gamma_t,\varepsilon_{it})$, but only under the additional strong condition that the time effects $\gamma_t$ are serially independent. The CGM standard errors explicitly calculate the variance allowing for traditional two-way dependence, not allowing for dependence induced by serially correlated time effects. Consequently, these methods exclude, by construction, the possibility that the common time component $\gamma_t$ is an unmodelled macroeconomic effect. 

 An important intuitive extension due to \citet*{Thompson2011} allows $\gamma_t$ to be serially correlated up to a known fixed number of lags and suggests to estimate the asymptotic variance by including unweighted, lagged autocovariance estimates to the CGM estimator. This relaxes the CGM assumptions by allowing serial correlation structures that are of $m$-dependence. In practice, however, it is difficult to implement since the serial dependence structure is not known a priori. Also, even under this $m$-dependence setting, no asymptotic distribution theory was provided. This is particularly troubling since serial correlated time effects induces a complicated dependence structure and thus was unclear whether this inference procedure is theoretically justified and under which conditions it is so. Indeed, our simulations unveil that this unweighted, fixed number of lags approach shows unsatisfactory finite sample performances under various DGPs; see Section \ref{sec:simulations}.  In addition, based on his own simulations, \citet*{Thompson2011} recommends omitting the correction for serial correlation unless the time dimension is large. Thus in practice, the Thompson estimator actually implemented by most (if not all) empirical researchers reduces to the CGM two-way estimator.

Furthermore, an asymptotic distribution theory for regression with two-way clustering with general serial correlated time effects is missing. \citet*{CGM2011} assert an asymptotic theory for estimation, but do not examine the impact of two-way dependence, nor examine standard error estimation. As previously mentioned, \citet*{Thompson2011} does not provide a distribution theory even under the $m$-dependence setting. \cite*{DDG2019}, \cite{mackinnon2021wild}, and \cite*{menzel2021bootstrap} do provide rigorous theory, yet only for settings without serial dependence. 

Clustered inference can alternatively be based on unstructured one-way dependence (over either $i$ or $t$, but not both simultaneously) using the popular clustered variance estimator of \citet*{liang1986longitudinal} and \citet*{arellano1987computing}. These methods, however, cannot account for two-way dependence.
 An alternative framework is one-way-cluster dependence across $i$ with weak serial dependence across $t$ \citep{driscoll_kraay_1998}.
A yet alternative framework has been provided by  \citet{vogelsang2012} and \cite{hidalgo2021inference}, which study panels with cross-sectional and temporal dependence under a different set of conditions.
They allow dependence across firms and time, but the dependence between observations within a time period, as well as the dependence of a cross-sectional unit observed over time, both decay as observations get further apart in time and space. 
Consequently, these alternative frameworks do not allow arbitrary two-way clustering. 

Clustered standard errors have become ubiquitous in applied economic research, as evidenced by a perusal of current applied journals, and by the enormous citations to several of the above-mentioned papers. \citet*{petersen2009estimating} provides an excellent review of these popular methods and their use in empirical research through 2009. Our perusal of current applied journals reveals that nearly all applications use either Liang-Zeger-Arelleno one-way clustering or CGM two-way clustering. While \citet*{Thompson2011} is also highly cited, our review indicates that empirical applications do not employ his correction for correlated time effects, but rather use the simpler CGM two-way clustering.

In this article, we modify the CGM and Thompson two-way clustered standard error to accommodates time effects with arbitrary stationary serial dependence. Our approach allows for cluster dependence within individuals $i$, within time periods $t$, and allows the common time component $\gamma_t$ to be serially dependent of arbitrary order. Ours is the first approach which allows this complexity of two-way dependence. This is accomplished by a correction involving kernel smoothing over the autocorrelations, with the number of autocorrelation lags increasing with sample size. To select the lag truncation parameter, we propose a simple rule based on \citet{Andrews1991}.

We provide an asymptotic theory of inference under weak regularity conditions, including the assumption that the time effects $\gamma_t$ are strictly stationary and mixing. We show that the least squares estimator is asymptotically normal, our proposed variance estimator is consistent, and t-ratios are asymptotically standard normal, permitting conventional inference. The proofs of these results are far from trivial. For example, our consistency proof for our proposed cluster-robust variance estimator is nonstandard. We show that the problem can be re-written into a claim of bounding a fourth-order sum of cross-moments of dependent time series, for which a mixing bound due to \cite{yoshihara1976limiting} can be applied. Furthermore, the same proof uses novel projection arguments which simplify the derivations.  

We explore the performance of our proposed method in a simple simulation experiment which compares the coverage probability of confidence intervals constructed with six different standard error methods. We find that our proposed method has the best performance relative to the competitors in each simulation design considered, and in some settings the difference is substantial. 

We also illustrate the relevance of the method with an empirical application to estimation of the slope coefficients in a standard Fama-French three-factor regression using two panels of stock returns. We find that our proposed standard errors are different -- and larger -- than conventional standard errors, for four of six regression estimates examined. 

The Stata command is available to install by \texttt{ssc install xtregtwo}.

The rest of this paper is organized as follows.
Section \ref{sec:background} discusses two-way dependence with correlated time effects and provides an informal overview of the method with a practical guide.
Section \ref{sec:theory} presents the formal theoretical results.
Section \ref{sec:simulations} provides simulation evidence on the practical performance of our proposed method.
Section \ref{sec:application} presents an empirical application to a standard Fama-French regression.
The appendix collects a mathematical proof of the main result, auxiliary lemmas and their proofs, and additional details omitted from the main text. 

\section{Two-Way Dependence with Correlated Time Effects}\label{sec:background}

\subsection{Least Squares Estimation}\label{sec:simple_setting}

Let $\left(Y_{it},X_{it}'\right)$ be a panel of observations over $i=1,...,N$ and $t=1,...,T$, where $Y_{it}$ is real-valued and $X_{it}$ is a $k \times 1$ vector. The model is the linear regression equation
\begin{align}\label{eq:linear_regression}
Y_{it} =X_{it}'\beta + U_{it}
\end{align}
with
\begin{align}\label{eq:linear_error}
E[X_{it}U_{it}]=0.
\end{align}

The standard estimator for $\beta$ is least squares
\begin{align}\label{eq:beta_hat}
\hat \beta=\left(\sum_{i=1}^N\sum_{t=1}^T X_{it} X_{it}'\right)^{-1} \left(\sum_{i=1}^N\sum_{t=1}^T X_{it} Y_{it}\right).
\end{align}
The least squares residuals are $\hat U_{it}=Y_{it}-X_{it}'\hat \beta $.

We are interested in the variance of $\hat \beta$. It can be calculated explicitly under the auxiliary assumption that the regressors are fixed and the error is strictly exogenous.\footnote{The strict exogeneity condition is not required for our main theory and is used only for an illustration purpose in the current section for the exact variance calculations.} We only use this assumption to motivate our covariance matrix estimator, however, and will not be needed for our asymptotic distribution theory.

The variance of $\hat \beta$ can be written as follows. Define the firm sums $R_i=\sum_{t=1}^T X_{it} U_{it}$, the time sums $S_t=\sum_{i=1}^N X_{it} U_{it}$, and the cross-sums $G_m=\sum_{t=1}^{T-m} S_{t} S_{t+m}'$ and $H_m =\sum_{i=1}^N \sum_{t=1}^{T-m} X_{it} U_{it}X_{i,t+m}' U_{i,t+m}$. With a little algebra we obtain the following decomposition.
\begin{align}\label{eq:v_hat}
V_{NT}=var(\hat \beta)=\hat Q^{-1} \Omega_{NT} \hat Q^{-1}
\end{align}
where
\begin{align}\label{eq:Q_hat}
\hat Q =\frac{1}{NT}\sum_{i=1}^N\sum_{t=1}^T X_{it} X_{it}' 
\end{align}
and
\begin{align}
\Omega_{NT} & = \frac{1}{(NT)^2}\sum_{i=1}^N E\left[R_i R_i' \right] \label{eq:Variance1}\\
 & + \frac{1}{(NT)^2}\sum_{t=1}^T E\left[S_t S_t' \right] \label{eq:Variance2}\\
 & - \frac{1}{(NT)^2}\sum_{i=1}^N\sum_{t=1}^T E[X_{it}X_{it}'U_{it}^2] \label{eq:Variance3}\\
 & + \frac{1}{(NT)^2}\sum_{m=1}^{T-1} E\left[G_m +G_m' -H_m -H_m'\right].  \label{eq:Variance4}
\end{align}

The expression \eqref{eq:Variance1}-\eqref{eq:Variance4} decomposes the variance of the least squares estimator into four components: \eqref{eq:Variance1} is the variance of the firm sums; \eqref{eq:Variance2} is the variance of the time sums; \eqref{eq:Variance3} is a correction for double-counting of the common variance in \eqref{eq:Variance1} and \eqref{eq:Variance2}; and \eqref{eq:Variance4} is the autocovariances of the time sums, corrected for double-counting. 

\subsection{Variance Estimation}

Estimators of the variance matrix $V_{NT}$ take the general form
\begin{align}
\hat V_{NT} =\hat Q^{-1} \hat \Omega_{NT} \hat Q^{-1}\label{eq:V_hat}
\end{align}
where $\hat \Omega_{NT}$ is some estimator of $\Omega_{NT}$. Different estimators make distinct assumptions on the covariances in \eqref{eq:Variance1}-\eqref{eq:Variance4} which lead to distinct estimators for $\Omega_{NT}$ in \eqref{eq:V_hat}. The Liang-Zeger-Arellano one-way cluster estimator assumes that observations are independent across $i$, implying that \eqref{eq:Variance2}+\eqref{eq:Variance3}+\eqref{eq:Variance4} equals zero. The ``cluster within $t$'' estimator assumes that observations are independent across $t$, implying that \eqref{eq:Variance1}+\eqref{eq:Variance3}+\eqref{eq:Variance4} equals zero. The CGM two-way estimator assumes that observations $it$ and $js$ are independent if $i\ne j$ or $t\ne s$, implying that \eqref{eq:Variance4} equals zero. The respective estimators take the same form as the assumed non-zero expressions in \eqref{eq:Variance1}-\eqref{eq:Variance4}. For example, the CGM variance estimator of $\Omega_{NT}$ is
\begin{align}\label{eq:CGM_estimator}
\frac{1}{(NT)^2}\left(\sum_{i=1}^N \hat R_i \hat R_i'  + \sum_{t=1}^T \hat S_t \hat S_t'  - \sum_{i=1}^N\sum_{t=1}^T X_{it}X_{it}'\hat U_{it}^2 \right)  
\end{align}
where $\hat R_i=\sum_{t=1}^T X_{it}\hat U_{it}$ and $\hat S_t=\sum_{i=1}^N X_{it}\hat U_{it}$.

\citet*{Thompson2011} assumes that the across-firm autocovariances are non-zero for small lags $m$, but zero for lags beyond a known constant $M$. This implies that the sum over $m$ in \eqref{eq:Variance4} can be truncated above $m=M$. This motivates his estimator of $\Omega_{NT}$, which is \eqref{eq:CGM_estimator} plus
\begin{align*}
\frac{1}{(NT)^2}\sum_{m=1}^M \left( \hat G_m +\hat G_m' -\hat H_m -\hat H_m' \right) 
\end{align*}
where $\hat G_m=\sum_{t=1}^{T-m} \hat S_{t} \hat S_{t+m}'$ and $\hat H_m =\sum_{t=1}^{T-m}\sum_{i=1}^N  X_{it}\hat U_{it}X_{i,t+m}'\hat U_{i,t+m}$.
\citet*{Thompson2011} does not discuss selection of $M$, other than to indicate that it is known a priori. For his simulations and empirical applications he sets $M=2$, which we take to be his default choice.

We illustrate the dependence patterns assumed by the different estimators in Figure \ref{fig:illustration}. Each panel shows an array with each entry depicting firm/time pairs $(i,t)$, with the star $\star$ marking the reference point $(i,t)=(1,1)$, and dependence structures indicated by the grey shading. Panel (A) illustrates the case of independent observations (which corresponds to the unclustered Eicker-Huber-White estimator) where the observation $(i,t)=(1,1)$ is uncorrelated with all other observations. Panel (B) illustrates the case of independence across firms (which corresponds to the Liang-Zeger-Arellano one-way cluster estimator) where the observation $(1,1)$ is correlated with $(1,t)$ for $t>1$, but is uncorrelated with all other observations. Panel (C) similarly illustrates the case of independence across time (the ``cluster within $t$'' estimator). Panel (D) illustrates the case where the observation $(1,1)$ is correlated with $(1,t)$ for $t>1$ and with $(i,1)$ for $i>1$ (corresponding to the CGM two-way clustered estimator). Panel (E) illustrates the case where two-way clustering is augmented to allow dependence between $(1,1)$ and $(i,t)$ for all $t\le 3$. This corresponds to Thompson's estimator. Finally, panel (F) illustrates the case where observation $(1,1)$ is correlated with all other observations. The dark-to-light shading is meant to imply that the correlation between $(1,1)$ and $(i,t)$ is expected to diminish for $t>1$.

\begin{figure}[tb]
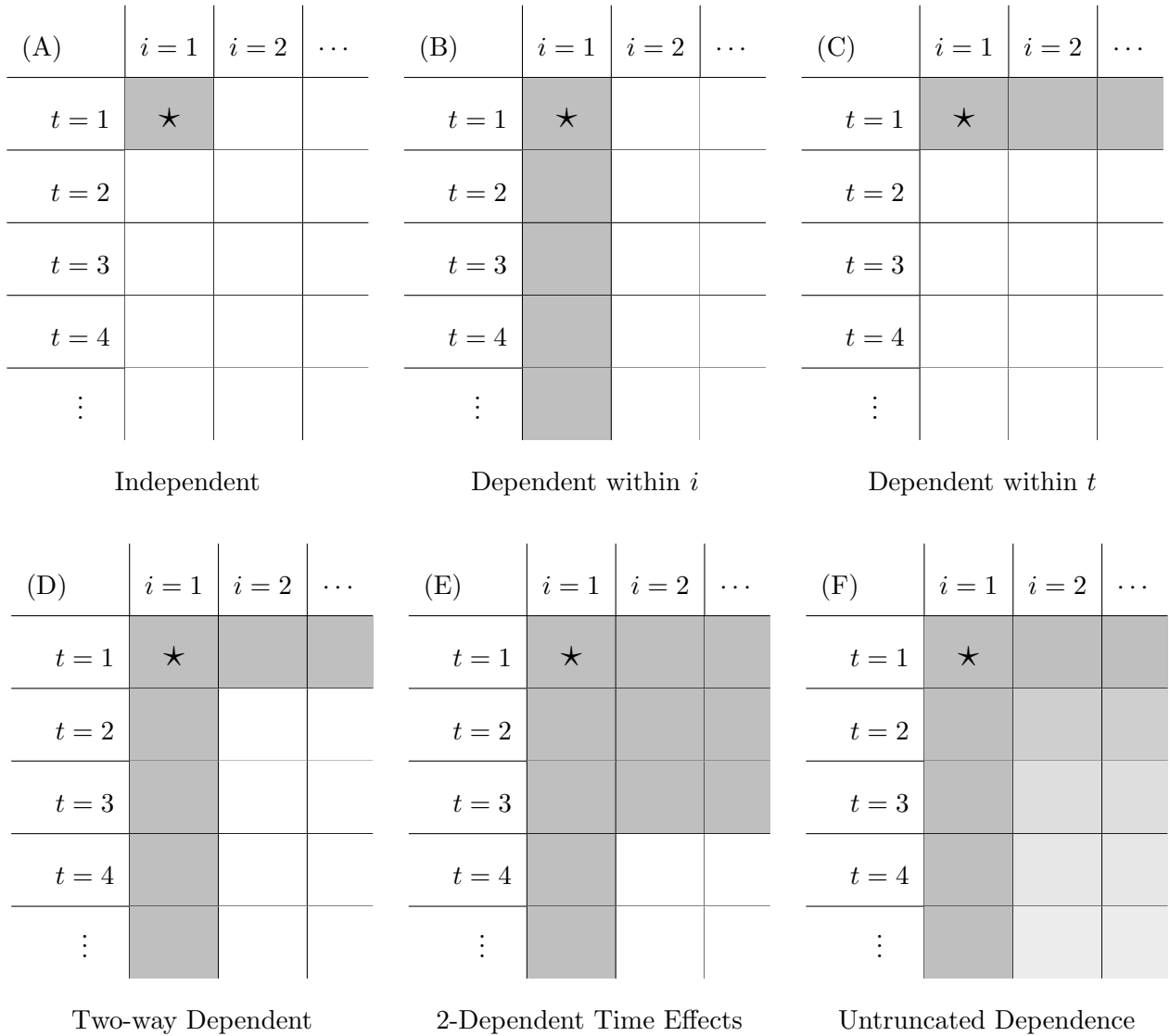

\begin{center}
\renewcommand{\arraystretch}{1.23}
\begin{tabular}{lc|c|c|c}
\multicolumn{2}{l|}{(A)}& $i=1$ & $i=2$ & $\cdots$ \\
\hline
&$t=1$    & \one \LARGE$\star$ & \ten & \ten \\
\hline
&$t=2$    & \ten  & \ten & \ten \\
\hline
&$t=3$    & \ten  & \ten & \ten \\
\hline
&$t=4$    & \ten  & \ten & \ten \\
\hline
&$\vdots$ & \ten  & \ten & \ten \\
\multicolumn{5}{c}{Independent}
\end{tabular}
\quad
\begin{tabular}{lc|c|c|c}
\multicolumn{2}{l|}{(B)}& $i=1$ & $i=2$ & $\cdots$ \\
\hline
&$t=1$    & \one \LARGE$\star$  & \ten & \ten \\
\hline
&$t=2$    & \one  & \ten & \ten \\
\hline
&$t=3$    & \one  & \ten & \ten \\
\hline
&$t=4$    & \one  & \ten & \ten \\
\hline
&$\vdots$ & \one  & \ten & \ten \\
\multicolumn{5}{c}{Dependent within $i$}
\end{tabular}
\quad
\begin{tabular}{lc|c|c|c}
\multicolumn{2}{l|}{(C)}& $i=1$ & $i=2$ & $\cdots$ \\
\hline
&$t=1$    & \one \LARGE$\star$  & \one & \one \\
\hline
&$t=2$    & \ten  & \ten & \ten \\
\hline
&$t=3$    & \ten  & \ten & \ten \\
\hline
&$t=4$    & \ten  & \ten & \ten \\
\hline
&$\vdots$ & \ten  & \ten & \ten \\
\multicolumn{5}{c}{Dependent within $t$}
\end{tabular}
\bigskip\\
\begin{tabular}{lc|c|c|c}
\multicolumn{2}{l|}{(D)}& $i=1$ & $i=2$ & $\cdots$ \\
\hline
&$t=1$    & \one \LARGE$\star$  & \one & \one \\
\hline
&$t=2$    & \one  & \ten & \ten \\
\hline
&$t=3$    & \one  & \ten & \ten \\
\hline
&$t=4$    & \one  & \ten & \ten \\
\hline
&$\vdots$ & \one  & \ten & \ten \\
\multicolumn{5}{c}{Two-way Dependent}
\end{tabular}
\quad
\begin{tabular}{lc|c|c|c}
\multicolumn{2}{l|}{(E)}& $i=1$ & $i=2$ & $\cdots$ \\
\hline
&$t=1$    & \one \LARGE$\star$  & \one & \one \\
\hline
&$t=2$    & \one  &  \one &  \one\\
\hline
&$t=3$    & \one  &  \one &  \one \\
\hline
&$t=4$    & \one  & \ten & \ten \\
\hline
&$\vdots$ & \one  & \ten & \ten \\
\multicolumn{5}{c}{2-Dependent Time Effects}
\end{tabular}
\quad
\begin{tabular}{lc|c|c|c}
\multicolumn{2}{l|}{(F)}& $i=1$ & $i=2$ & $\cdots$ \\
\hline
&$t=1$    & \one \LARGE$\star$  & \one & \one \\
\hline
&$t=2$    & \one  & \thr & \thr \\
\hline
&$t=3$    & \one  & \fiv & \fiv \\
\hline
&$t=4$    & \one  & \six & \six \\
\hline
&$\vdots$ & \one  & \sev & \sev \\
\multicolumn{5}{c}{Untruncated Dependence}
\end{tabular}
\end{center}
\caption{Clustered Dependence Structures.}
\label{fig:illustration}
\end{figure}

\subsection{Correlated Time Effects}\label{sec:time_effects}

To understand the source of cross-firm and cross-time dependence it is illuminating to consider the linear components model $Y_{it} = \alpha_i + \gamma_t +\varepsilon_{it}$ under the assumption of i.i.d. firm effects $\alpha_i$ and  idiosyncratic effects $\varepsilon_{it}$. If the time effect $\gamma_t$ is also i.i.d. then observations $(i,t)$ and $(j,s)$ are independent if $i\ne j$ and $t\ne s$. However, if $\gamma_t$ is serially dependent, then observations $(i,t)$ and $(j,s)$ can be dependent for arbitrary indices.

In most applications the time effect $\gamma_t$ is a proxy for omitted macroeconomic factors, and is therefore unlikely to be i.i.d. or have truncated serial dependence. Most macroeconomic variables have untruncated autocorrelation functions.

We empirically illustrate the importance of this feature. Consider two variables involved in a standard market value equation: log Tobin's average Q (market value divided by the stock of non-R\&D assets), and log of the R\&D stock (relative to the stock of non-R\&D assets). Panel regressions of the former on the latter have been the focus of \citet{griliches1981market} and many subsequent papers \citep*[e.g.,][]{hall2005market,bloom2013identifying,arora2021knowledge}. Using a panel of 727 firms for the years 1981--2001 from \citet*{bloom2013identifying} (see Appendix \ref{sec:estimation_gammax_gammay} for details) we estimated time effects for each series. Estimated time effects are plotted in the two panels of Figure \ref{fig:gamma}, with log R\&D stock on the left and log Tobin's Q on the right. The graphs reveal considerable serial correlation. Their estimated first-order autocorrelations are 0.425 (with a standard error of 0.003), and 0.467 (with a standard error of 0.007), respectively, which are quite large. Furthermore, the autocorrelations are strong at multiple lags as illustrated in their autocorrelograms, which are displayed in Figure \ref{fig:correlogram}. Together, this means that the time effects $\gamma_t$ for these series have substantial serial dependence, which is not well described by finite $M$-dependence. This implies that the dependence structures assumed by \citet*{CGM2011}, \cite*{menzel2021bootstrap}, and \citet*{Thompson2011} are incorrect, but rather need to be modified to allow for serial correlation of arbitrary order, as we propose in the next section.

\begin{figure}[tb]
	\centering
	\scalebox{0.22}{
	\begin{tabular}{cc}
	\includegraphics{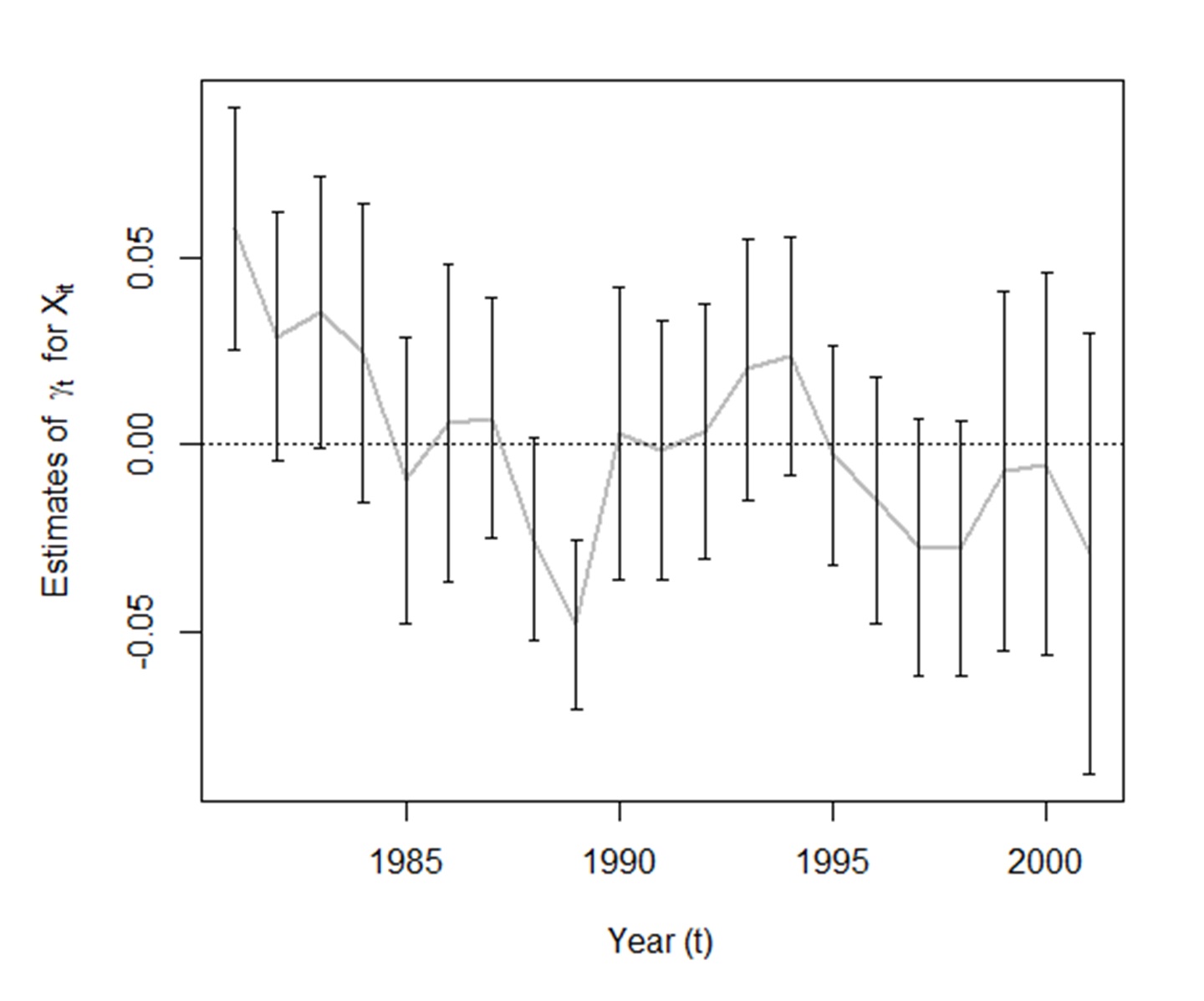}
	&
	\includegraphics{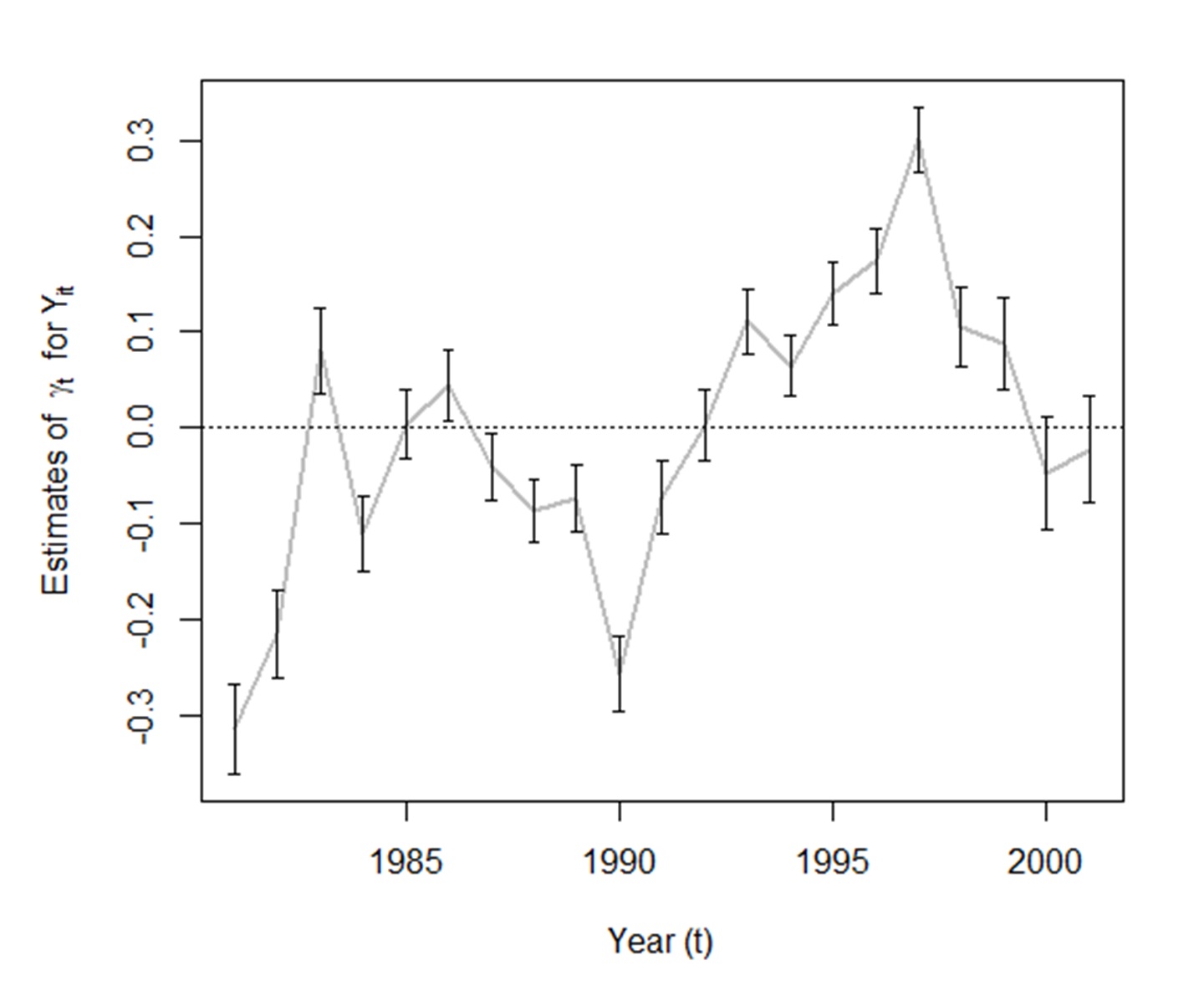}
	\end{tabular}
	}
	\caption{Estimates of the common time effects for log R\&D stock (left) and log Tobin's average Q (right). The vertical lines indicate pointwise 95\% confidence intervals.}${}$\\${}$
	\label{fig:gamma}
\end{figure}

\begin{figure}[tb]
	\centering
	\scalebox{0.22}{
	\begin{tabular}{cc}
	\includegraphics{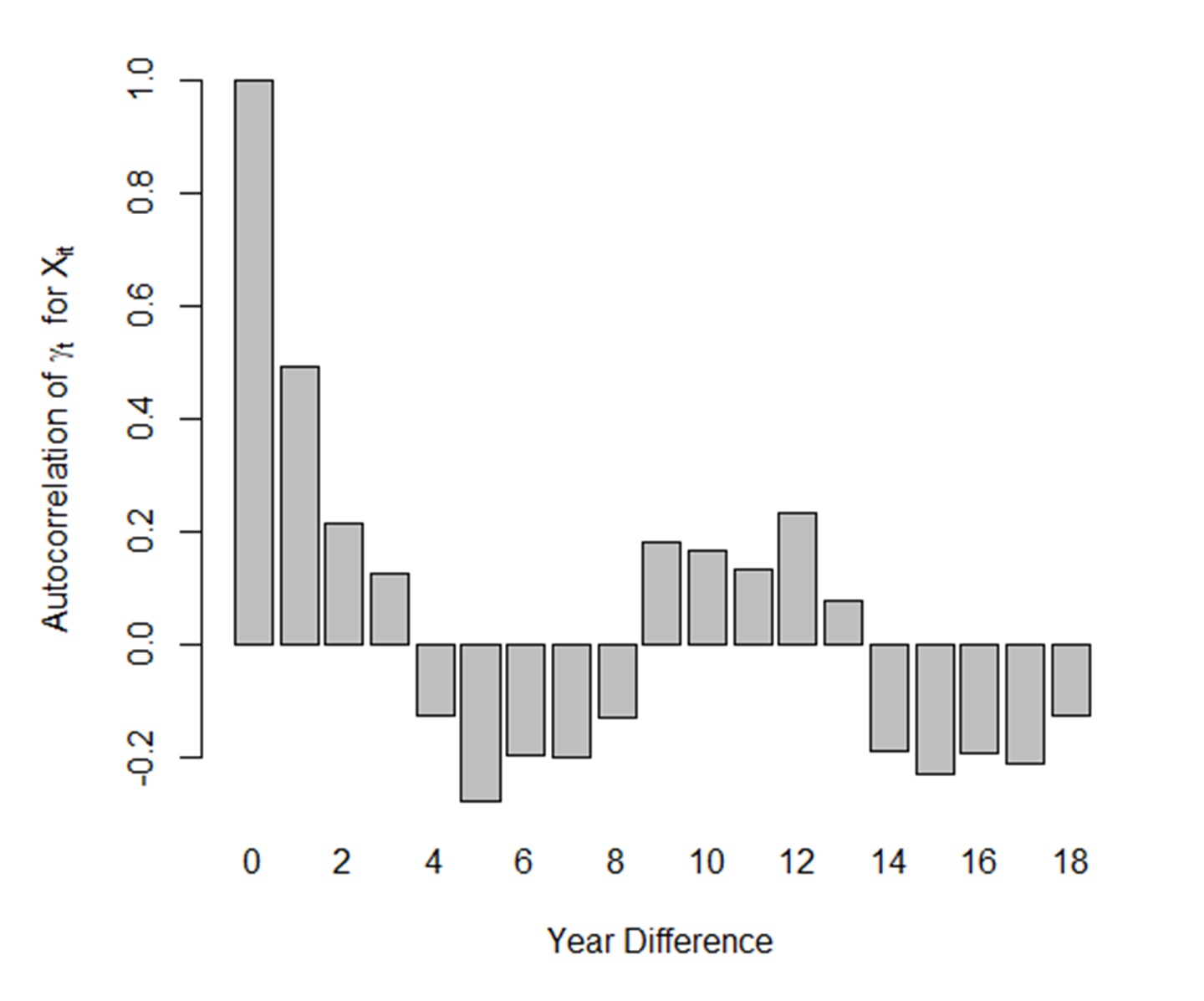}
	&
	\includegraphics{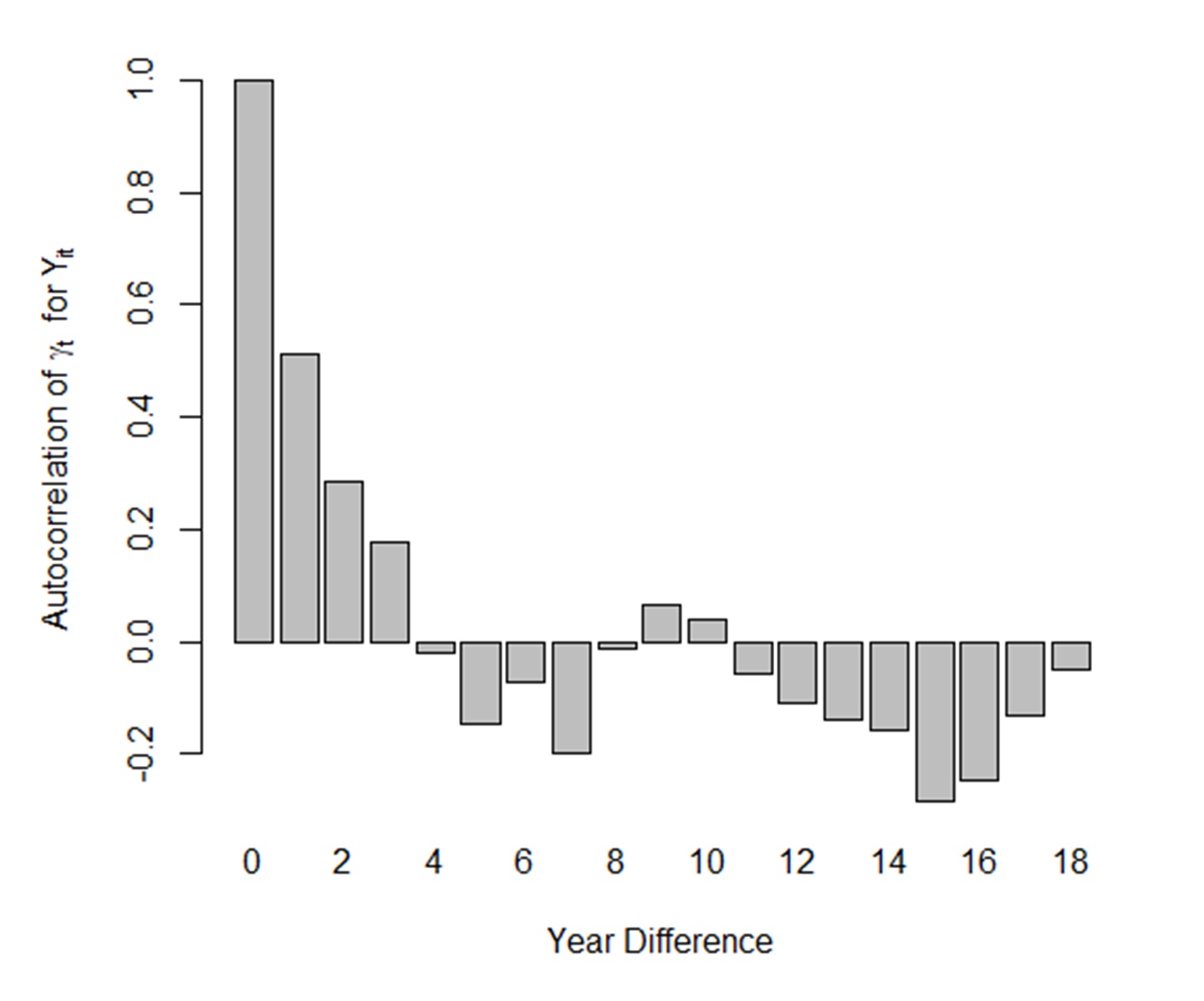}
	\end{tabular}
	}
	\caption{Autocorrelograms of the common time effects for log R\&D stock (left) and log Tobin's average Q (right).}${}$\\${}$
	\label{fig:correlogram}
\end{figure}

\subsection{Variance Estimation with Serially Correlated Time Effects}\label{sec:var_estimator}

As described in the previous section, serially correlated time effects $\gamma_t$ imply that the cross-firm autocorrelations $G_m$ in the variance decomposition \eqref{eq:Variance4} are non-zero at potentially any lag $m$. However, we cannot estimate these correlations well at all lags $m$ for fixed $T$, for the same reasons as arise in time-series variance estimation. Under the assumption that the time effects $\gamma_t$ are strictly stationary and weakly dependent (meaning that the autocorrelation function decays to zero) then it is sufficient to focus on the small lags $m$, using a weighted average of the terms in \eqref{eq:Variance4}, with the number of terms increasing with sample size. This motivates the following estimator of $\Omega_{NT}$
\begin{align}
\hat \Omega_{NT} = 
EVC&\left( 
\frac{1}{(NT)^2}\left(\sum_{i=1}^N \hat R_i \hat R_i'  + \sum_{t=1}^T \hat S_t \hat S_t'  - \sum_{i=1}^N\sum_{t=1}^T X_{it}X_{it}' \hat U_{it}^2\right) \right. \notag\\
 & + \left. \frac{1}{(NT)^2}\sum_{m=1}^M w(m,M)\left( \hat G_m +\hat G_m' -\hat H_m -\hat H_m' \right) \right)
\label{eq:M_estimator}
\end{align}
where $w(m,M)$ is a weight function and $EVC(\cdot)$ is an eigenvalue correction.\footnote{The $EVC(\cdot)$ replaces any negative eigenvalue by zero to ensure positive semidefiniteness.} Inserted into \eqref{eq:V_hat} we obtain our covariance estimator for $\hat \beta$. This variance estimator is a generalization of the \citet*{CGM2011}, \citet*{Thompson2011}, and \citet*{newey1987simple} estimators. The covariance matrix estimator $\hat V_{NT}$ can be multiplied by degree-of-freedom adjustments if desired, but we are unaware of any finite sample justification for a particular choice.
 The estimator simplifies to that of \citet*{CGM2011} when $M=0$, and to that of \citet*{Thompson2011} when $M=2$ and $w(m,M)=1$. Taking the square root of the diagonal elements of $\hat V_{NT}$ yields standard errors for the elements of $\hat\beta$.

The estimator \eqref{eq:M_estimator} depends on the choice of weights $w(m,M)$. Standard choices include the uniform (truncated) weights $w(m,M)=1$, and the triangular (Newey-West) weights $w(m,M)=1-m/(M+1)$, the latter popularized by \citet*{newey1987simple} for time-series data. We recommend the triangular weights. One advantage of this choice for time series applications as emphasized by \citet*{newey1987simple} is that this ensures a non-negative variance estimator. Unfortunately, the clustered estimator \eqref{eq:M_estimator} is not necessarily non-negative, even for $M=0$, as observed by \citet*{CGM2011}. However, the estimator \eqref{eq:M_estimator} is considerably less likely to be negative when the weights are triangular than uniform, which is an important practical advantage. 

The variance estimator \eqref{eq:M_estimator} critically depends on the number $M$, which is often called the lag truncation parameter. It is useful to note that $M$ does not need to be integer-valued. The choice of $M$ leads to a bias/precision trade-off, with larger values of $M$ leading to less bias in the estimator $\hat \Omega_{NT}$ of $\Omega_{NT}$, but less precision. In principle it is desirable to use a larger value of $M$ when the errors $U_{it}$ are highly serially correlated, and a smaller value of $M$ otherwise, but the extent of serial correlation is generally unknown, leading to the need for an empirical-based choice of $M$.

In the context of time-series variance estimation \citet{Andrews1991} proposed a data-driven choice of $M$ based on minimizing the asymptotic mean square error of the variance estimator, which is equivalent to the expression in \eqref{eq:M_estimator}. We can therefore apply his method for selection of $M$, treating the time-sums of the regression scores as time-series observations. Andrews' formula depends on the specific choice of weight function; we assume triangular weights. 

For $j=1,...,k$, let $X_{jit}$ be the $j$th element of $X_{it}$. Define the time-sums $S_{jt}=\sum_{i=1}^n X_{jit}\hat U_{it}$ of the regression scores. Fit by least squares the AR(1) equations $S_{jt} = \hat \rho _j S_{j,t-1}+\hat e_{jt}$. The Andrews rule\footnote{This is calculated from Andrews' equation (6.4), setting his weights $w_a$ to equal the inverse squared variances of the estimated AR(1) processes, which is appropriate for least squares estimation.} for the lag truncation $M$ is
\begin{align}\label{eq:m_hat}
\hat M  = 1.8171 \cdot \left(\frac{\sum_{j=1}^k \frac{\hat \rho _j^2}{\left(1-\hat \rho _j \right)^4}}{\sum_{j=1}^k \frac{\left(1-\hat \rho _j ^2\right)^2}{\left(1-\hat \rho _j \right)^4}} \right)^{1/3} T^{1/3}.
\end{align}
For the case of a scalar regressor this simplifies to
\begin{align*}
\hat M  = 1.8171 \cdot \left(\frac{\hat \rho^2}{\left(1-\hat \rho^2\right)^2}\right)^{1/3} T^{1/3}.
\end{align*}

For an even simpler choice, \citet{stock_watson} suggested the following rule-of-thumb. Setting $\rho = 0.25$ in the above formula (which occurs in a regression when both the regressor and regression error are AR(1) processes with AR(1) coefficients 0.5) the Andrews rule simplifies to $M =  0.75 \cdot T^{1/3}$. For example, for $T=50$, 100, and 200, respectively, the Stock-Watson rule is $M=2.7$, 3.5, and 4.4, respectively. The Stock-Watson can be used in place of the Andrews rule \eqref{eq:m_hat} if desired. 

\section{Econometric Theory}\label{sec:theory}

Consider a panel $\{D_{it} : 1 \le i \le N ; 1 \le t \le T \}$ of random vectors consisting of observed and/or unobserved variables that are relevant to the data generating process of the researcher's interest.
For instance, in the linear regression model presented in Section \ref{sec:background}, set $D_{it} = (Y_{it},X_{it}',U_{it})'$.
With a Borel-measurable function $f$ (generally unknown to the researcher), we consider the framework of panel dependence in $D_{it}$ generated through the stationary process
\begin{equation}\label{eq:generalized_ahk}
D_{it} = f(\alpha_i,\gamma_t,\varepsilon_{it}),
\end{equation}
where $\alpha_i$, $\gamma_t$, and $\varepsilon_{it}$ are random vectors of arbitrary dimension, with the sequences
$\{\alpha_i\}$, $\{\gamma_t\}$ and $\{\varepsilon_{it}\}$ mutually independent,
$\alpha_i$ is i.i.d. across $i$, and $\varepsilon_{it}$ is i.i.d. across $(i,t)$.\footnote{\label{foot:iid}These i.i.d. conditions may be relaxed in some ways, but we leave it for future research. The existing literature on two-way clustering explicitly or implicitly assumes these i.i.d. conditions, and we continue to impose them in this paper. Our focus, therefore, is to relax the i.i.d. assumption on the $\gamma_t$ component. One way to relax the i.i.d. conditions on the other components is to impose an MDS-type condition, in which case our main results remain to hold. Another is to allow for spatial mixing provided a researcher has spatial information associated with panel data. Also, see Section \ref{sec:summary}.} The sequence $\gamma_t$ is a strictly stationary serially correlated process.

The representation \eqref{eq:generalized_ahk} generalizes the Aldous-Hoover-Kallenberg (AHK) representation \citep{Kallenberg2006} which has been widely used for two-way clustering theory. See \citep[e.g.,][]{DDG2019,mackinnon2021wild,menzel2021bootstrap}. 
Indeed, it has been argued that the AHK representation is a natural modelling framework for two-way clustered data \citep*{mackinnon2021wild}.\footnote{\citet*{mackinnon2021wild} state ``[a] natural stochastic framework for the regression model with multiway clustered data is that of separately exchangeable random variables.'' Since separately exchangeable random variables may be represented by the AHK \citep[cf.][]{Kallenberg2006}, we make this assertion.}
A limitation of the AHK representation is that the time effects $\gamma_t$ are mutually independent. We relax this assumption, by directly assuming that \eqref{eq:generalized_ahk} holds, allowing $\gamma_t$ to be serially dependent. This is a strict generalization of the AHK representation.

In this section we provide an asymptotic distribution theory for the least squares estimator, our proposed covariance matrix estimator, and associated test statistics.
We start in Section \ref{sec:main} by examining a multivariate mean, followed in Section \ref{sec:ols} with linear regression. 

\subsection{Estimation of the Mean}\label{sec:main}
In this subsection, we focus on estimation of a multivariate mean. Let $X_{it}$ be an $m\times 1$ random vector satisfying equation \eqref{eq:generalized_ahk} for $D_{it}=X_{it}$. The standard estimator of the population mean $\theta =E[X_{it}]$ is the sample mean $\hat\theta = (NT)^{-1} \sum_{i=1}^N \sum_{t=1}^T X_{it}$. 

To obtain an asymptotic representation for $\hat\theta$, we use a Hoeffding-type decomposition. For simplicity and without loss of generality assume that $E[X_{it}]=0$. Define the random vectors $a_i=E[ X_{it}\mid \alpha_i]$, $b_t=E[ X_{it}\mid \gamma_t]$, and $e_{it}= X_{it}-a_i-b_t$. This gives rise to the following construct:
\begin{align}
\ X_{it} = a_i+b_t+e_{it}.
\label{eq:Xabe}
\end{align}
This expresses $X_{it}$ as a linear function of a firm effect $a_i$, time effect $b_t$, and error $e_{it}$. However, as \eqref{eq:Xabe} is a derived relationship, the error $e_{it}$ is not (in general) i.i.d. The decomposition has the following properties. The random vectors $a_i$ and $b_t$ are independent since they are each functions of the independent sequences $\{\alpha_i\}$ and $\{\gamma_t\}$. The sequence $\{a_i\}$ is i.i.d., and the sequence $\{b_t\}$ is strictly stationary. By iterated expectations we deduce the following: (1) $a_i$, $b_t$, and $e_{it}$ are each mean zero; (2) $E[e_{jt} \mid \alpha_i]=0$ and $E[e_{is} \mid \gamma_t]=0$ for any $i$, $j$, $t$, and $s$; (3) $E[ a_i e_{jt}']=0$ and $E[ b_t e_{is}']=0$ for any $i$, $j$, $t$, and $s$; (4) the sequences $\{a_i\}$, $\{b_t\}$, and $\{e_{it}\}$ are mutually uncorrelated; (5) conditional on $(\gamma_t,\gamma_s)$, $e_{it}$ and $e_{js}$ are independent for $j \ne i$.

Taking averages we find that
\begin{align}
\hat\theta = \frac{1}{N} \sum_{i=1}^N a_i+ \frac{1}{T} \sum_{t=1}^T b_t+ \frac{1}{NT} \sum_{i=1}^N \sum_{t=1}^T e_{it}.
\label{eq:thetahat}
\end{align}
The uncorrelatedness of the sequences implies that the three sums are uncorrelated. Hence the variance of $\hat\theta$ equals the sum of the variance matrices of the three components. The variance of the first component equals $N^{-1}$ times the variance matrix of $a_i$, the variance of the second component approximately equals $T^{-1}$ times the long-run variance matrix of $b_t$ (since the latter is serially correlated), and the variance of the third component approximately equals $(NT)^{-1}$ times the long-run variance matrix of $e_t$. The technical details are deferred to Appendix \ref{sec:thm:variance_mean}. 

We now present sufficient conditions for this decomposition to be valid. 

\begin{assumption}\label{a:NW}
For some $r>1$ and $\delta>0$, (i) $X_{it} = f(\alpha_i,\gamma_t,\varepsilon_{it})$ where $\{\alpha_i\}$, $\{\gamma_t\}$, and $\{\varepsilon_{it}\}$ are mutually independent sequences,
$\alpha_i$ is i.i.d across $i$, $\varepsilon_{it}$ is i.i.d across $(i,t)$, and $\gamma_t$ is strictly stationary.
(ii) $E[||X_{it}||^{4(r+\delta)}]<\infty$.
(iii) 
$\gamma_t$ is an $\alpha$-mixing sequence with size $2r/(r-1)$, that is, $\alpha(\ell)=O(\ell^{-\lambda})$ for a $\lambda>2r/(r-1)$.

\end{assumption}

Assumption \ref{a:NW} (i) assumes that the observed random vectors are generated following the nonlinear, nonseparable, factor structure \eqref{eq:generalized_ahk}. Assumption \ref{a:NW} (ii) imposes moment conditions. Assumption \ref{a:NW} (iii) imposes weak dependence on the time effects.
The moment and mixing conditions here are standard in time-series theory, including that of \cite{newey1987simple} and \citet*{hansen1992consistent}. 

Define the variance matrices:
\begin{align}
\Sigma_a &= E[a_ia_i'] \label{eq:sigma_a}\\
\Sigma_b &= \sum_{\ell=-\infty}^{\infty} E[b_tb_{t+\ell}'] \label{eq:sigma_b}\\
\Sigma_e &= \sum_{\ell=-\infty}^{\infty} E[e_{it}e_{i,t+\ell}'] \label{eq:sigma_e}
\end{align}
which are independent of $i$ and $t$. 
Given the decomposition \eqref{eq:Xabe}, we can write the variance of the sample mean as a weighted sum of the variance components \eqref{eq:sigma_a}-\eqref{eq:sigma_e}.

\begin{theorem}\label{thm:variance_mean}
	Suppose that Assumption \ref{a:NW} holds. Then $||\Sigma_a||<\infty$, $||\Sigma_b||<\infty$, and $||\Sigma_e||<\infty$, and as $(N,T)\to\infty$,
\begin{align*}
	var( \hat \theta )
	&=\frac{1}{N}\Sigma_a+\frac{1}{T}\Sigma_b\left(1+o(1)\right)+\frac{1}{NT}\Sigma_e\left(1+o(1)\right).
\end{align*}
Furthermore, $\hat \theta \stackrel{p}{\to} \theta$ as $N,T\to\infty$.
\end{theorem}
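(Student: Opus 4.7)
The plan is to build everything on the Hoeffding-type decomposition $\hat\theta=\bar a+\bar b+\bar e$ in \eqref{eq:thetahat} and the orthogonality properties (1)--(5) stated just before the theorem. I would first verify $\|\Sigma_a\|,\|\Sigma_b\|,\|\Sigma_e\|<\infty$. The bound $\|\Sigma_a\|\le E[\|X_{it}\|^2]<\infty$ is immediate from Jensen applied to $a_i=E[X_{it}\mid\alpha_i]$ together with Assumption \ref{a:NW}(ii). Because $b_t$ is a measurable function of $\gamma_t$ alone, $\{b_t\}$ inherits the $\alpha$-mixing coefficients of $\{\gamma_t\}$, and Davydov's covariance inequality with exponents $p=q=2r$ gives
\[
\|E[b_tb_{t+\ell}']\|\;\le\; C\,\alpha(\ell)^{1-1/r}\,\|b_t\|_{2r}^{2}\;\le\; C\,\ell^{-\lambda(r-1)/r},
\]
and the size condition $\lambda>2r/(r-1)$ makes the sum absolutely convergent. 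For $\Sigma_e$, I would apply the same device conditionally on $\alpha_i$: given $\alpha_i$, the sequence $\{e_{it}\}_t$ is a functional of $(\{\gamma_t\},\{\varepsilon_{it}\}_t)$ with the $\varepsilon_{it}$'s iid and independent of the mixing $\{\gamma_t\}$, so conditional Davydov yields $\|E[e_{it}e_{i,t+\ell}'\mid\alpha_i]\|\le C\,\alpha(\ell)^{1-1/r}(E[\|e_{it}\|^{2r}\mid\alpha_i])^{1/r}$; integrating out $\alpha_i$ and using Jensen gives an unconditional summable bound, hence $\|\Sigma_e\|<\infty$.

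Next I would compute $\text{var}(\hat\theta)$ term by term. By properties (3) and (4) the three averages in \eqref{eq:thetahat} are mutually uncorrelated, so $\text{var}(\hat\theta)=\text{var}(\bar a)+\text{var}(\bar b)+\text{var}(\bar e)$. The iid property of $\{a_i\}$ yields $\text{var}(\bar a)=\Sigma_a/N$ exactly. For $\bar b$, strict stationarity gives
\[
\text{var}(\bar b)=\frac{1}{T}\sum_{|\ell|<T}\left(1-\frac{|\ell|}{T}\right)E[b_0b_\ell']=\frac{\Sigma_b}{T}(1+o(1))
\]
by dominated convergence, using the summable envelope from the previous step. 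For $\bar e$, I would first show $E[e_{it}e_{js}']=0$ whenever $i\ne j$: property (5) makes $e_{it}$ and $e_{js}$ independent conditional on $(\gamma_t,\gamma_s)$, and since $(\alpha_i,\varepsilon_{it})$ is independent of the entire $\gamma$-sequence, $E[e_{it}\mid\gamma_t,\gamma_s]=E[e_{it}\mid\gamma_t]=0$ by property (2). Hence only diagonal pairs $i=j$ contribute, and the remaining double sum over $(t,s)$ is handled exactly as in the $\bar b$ case, delivering $\text{var}(\bar e)=\Sigma_e/(NT)\cdot(1+o(1))$.

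Finally, to obtain $\hat\theta\stackrel{p}{\to}\theta$ it suffices that each of $\bar a,\bar b,\bar e$ vanish in probability: $\bar a\to 0$ by Kolmogorov's iid law of large numbers, $\bar b\to 0$ by Chebyshev from $\text{var}(\bar b)=O(1/T)$, and $\bar e\to 0$ by Chebyshev from $\text{var}(\bar e)=O(1/(NT))$. The step I expect to be the main obstacle is the control of $\Sigma_e$: although the conditional-on-$\alpha_i$ mixing argument is intuitively transparent, the formal argument requires care because $\alpha_i$ is shared by $e_{it}$ and $e_{i,t+\ell}$, so the \emph{unconditional} $\sigma$-fields are not mixing; one must condition first, apply Davydov, and then verify that the conditional moment factors are integrable in $\alpha_i$ so that the bound transfers unconditionally. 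The rest is essentially moment bookkeeping.
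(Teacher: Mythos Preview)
Your proposal is correct and follows essentially the same route as the paper: the same Hoeffding decomposition, the same uncorrelatedness argument to split $\var(\hat\theta)$ into three pieces, the same conditional-on-$\alpha_i$ mixing bound for $\Sigma_e$, and the same dominated-convergence step to pass from the triangular Ces\`aro sum to the long-run variance. The only cosmetic differences are that the paper invokes a specific textbook covariance inequality (Theorem~14.13(ii) in \citealp{hansen2021econometrics}) with the $4(r+\delta)$ moment rather than your Davydov inequality with the $2r$ moment, and it obtains consistency in one stroke via Chebyshev on the full variance rather than treating $\bar a,\bar b,\bar e$ separately.
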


A proof is provided in Appendix \ref{sec:thm:variance_mean}. Theorem \ref{thm:variance_mean} shows that the asymptotic variance of the sample mean depends on three components, inversely proportional to the number of firms $N$, time dimension $T$, and their product $NT$. When either $\Sigma_a>0$ or $\Sigma_b>0$ (which occurs when there is a non-degenerate firm or time effect) then the third term in the asymptotic variance is of lower stochastic order. However, in the special case where $X_{it}$ is i.i.d., then $\Sigma_a=0$ and $\Sigma_b=0$ so the first two terms equal zero, the long-run variance of $e_{it}$ simplifies to $\Sigma_e=var(X_{it})$, and the variance expression simplifies to $var( \hat \theta )=var(X_{it})/NT$. Consequently, the rate of convergence of the sample mean depends on the cluster structure.

For our distribution theory we require the following additional condition.

\begin{assumption}\label{a:non-degenerate}
One of the the following two conditions holds.
\\
(i) Either $\Sigma_a>0$ or  $\Sigma_b>0$, and $N/T\to c\in (0,\infty)$ as $(N,T)\to\infty$. \\
{\large or}
\\
(ii)
$X_{it}$ are independent and identically distributed across $i$ and $t$, and $var(X_{it})>0$.
\end{assumption}

Assumption \ref{a:non-degenerate} (i) requires the presence of at least one-way clustering. 
This assumption is analogous to the positive definiteness condition in \citet[Propositions 4.1--4.2]{DDG2019} and equation (16) of \citet*{mackinnon2021wild}.
Our results will continue to hold even if $N$ and $T$ diverge at different rates, but we make the homogeneous rate assumption for ease of exposition.
On the other hand, our results will not hold under fixed $N$ or fixed $T$.
Assumption \ref{a:non-degenerate} (ii) is the contrary case of no clustering. As we show below, our results hold under either condition. While Assumption \ref{a:non-degenerate} is sufficient for our results, it is probably stronger than necessary, but is used for its simplicity and tractability. Assumption \ref{a:non-degenerate} does rule out possible scenarios, including cases which lead to non-Gaussian limit distributions \citep[e.g.,][Example 1.7]{menzel2021bootstrap} -- see our discussion in Section \ref{sec:summary}.
The non-Gaussian cases can be characterized by daga generating processes that consist of degenerate additive factors of $i$, degenerate additive factors of $t$, and a small number of interactive factors between $i$ and $t$ \citep{chiang2022standard}.
With this said, it is legitimate to rule out non-Gaussian cases as our focus is on standard errors (as in the title of this article), which would not make sense under non-Gaussian limit distributions.

\begin{theorem}\label{thm:main_asymptotics}
	Suppose that Assumptions \ref{a:NW} and \ref{a:non-degenerate} hold. Then,
\begin{align*}
	var( \hat \theta )^{-1/2}(\hat \theta - \theta)\stackrel{d}{\to} N(0,I_m).
\end{align*}
\end{theorem}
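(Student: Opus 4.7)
The plan is to combine the Hoeffding-type decomposition established in Theorem \ref{thm:variance_mean} with separate central limit theorems for each of its three components, and then to use the different stochastic orders (together with independence) to glue the pieces into a single multivariate CLT.

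\textbf{Step 1: Decomposition and orders.} Start from
\begin{align*}
\hat\theta - \theta = \frac{1}{N}\sum_{i=1}^N a_i + \frac{1}{T}\sum_{t=1}^T b_t + \frac{1}{NT}\sum_{i=1}^N\sum_{t=1}^T e_{it} =: A_N + B_T + E_{NT},
\end{align*}
where $a_i$, $b_t$, $e_{it}$ are as in the paragraph preceding \eqref{eq:Xabe}. Theorem \ref{thm:variance_mean} gives $\text{var}(A_N) = N^{-1}\Sigma_a$, $\text{var}(B_T) = T^{-1}\Sigma_b(1+o(1))$, and $\text{var}(E_{NT}) = (NT)^{-1}\Sigma_e(1+o(1))$, and the three terms are uncorrelated. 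In particular $E_{NT} = O_P((NT)^{-1/2})$, which under Assumption \ref{a:non-degenerate}(i) will be of smaller order than the normalization $\|V_{NT}^{-1/2}\| \asymp N^{1/2}$.

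\textbf{Step 2: Marginal CLTs for $A_N$ and $B_T$.} Since $\{a_i\}$ is i.i.d.\ with $E[\|a_i\|^{4(r+\delta)}]<\infty$ by Jensen and Assumption \ref{a:NW}(ii), the classical Lindeberg-L\'evy CLT gives $\sqrt{N}\,A_N \stackrel{d}{\to} N(0,\Sigma_a)$. For $B_T$, note that $b_t = E[X_{it}\mid \gamma_t]$ is a measurable function of $\gamma_t$, hence $\{b_t\}$ is strictly stationary and inherits $\alpha$-mixing of the same size as $\{\gamma_t\}$. With $E[\|b_t\|^{2+\eta}]<\infty$ for some $\eta>0$ and mixing size $2r/(r-1)>2+2/\eta$ appropriately matched via Assumption \ref{a:NW}(ii)--(iii), a standard mixing CLT (e.g.\ Herrndorf; see also \citealp{hansen1992consistent}) yields $\sqrt{T}\,B_T \stackrel{d}{\to} N(0,\Sigma_b)$, with $\Sigma_b$ finite as established in Theorem \ref{thm:variance_mean}.

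\textbf{Step 3: Joint convergence and assembly.} Because $\{a_i\}$ and $\{b_t\}$ are functions of the mutually independent sequences $\{\alpha_i\}$ and $\{\gamma_t\}$, the random vectors $A_N$ and $B_T$ are independent for every $(N,T)$. Hence the marginal convergences in Step 2 combine to the joint statement $(\sqrt{N}\,A_N,\sqrt{T}\,B_T) \stackrel{d}{\to} (Z_a,Z_b)$ with $Z_a\sim N(0,\Sigma_a)$ independent of $Z_b\sim N(0,\Sigma_b)$. Under Assumption \ref{a:non-degenerate}(i), $N/T\to c\in(0,\infty)$, so writing $\sqrt{N}(A_N+B_T) = \sqrt{N}\,A_N + \sqrt{N/T}\,\sqrt{T}\,B_T$ gives $\sqrt{N}(A_N+B_T)\stackrel{d}{\to} N(0,\Sigma_a + c\,\Sigma_b)$ by the continuous mapping theorem. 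Meanwhile $\sqrt{N}\,E_{NT}=O_P(T^{-1/2})\stackrel{p}{\to}0$. Finally, since $\Sigma_a>0$ or $\Sigma_b>0$, $N\,V_{NT}\to \Sigma_a + c\,\Sigma_b$, which is positive definite, so by Slutsky $V_{NT}^{-1/2}(\hat\theta-\theta)\stackrel{d}{\to} N(0,I_m)$. Under Assumption \ref{a:non-degenerate}(ii), the i.i.d.\ structure forces $\Sigma_a=\Sigma_b=0$ and hence $a_i\equiv 0$, $b_t\equiv 0$ a.s., so $\hat\theta-\theta = E_{NT}$ reduces to a simple i.i.d.\ sample mean and the classical multivariate CLT applies directly with $\sqrt{NT}\,E_{NT}\stackrel{d}{\to}N(0,\Sigma_e)$ where $\Sigma_e=\text{var}(X_{it})$.

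\textbf{Main obstacle.} The hardest ingredient is the CLT for $B_T$: one must verify that the moment and mixing-size conditions stated in Assumption \ref{a:NW} are precisely what a standard mixing CLT (Herrndorf-type) requires, and that $b_t$, being a conditional expectation, inherits the mixing rate of $\gamma_t$ without loss. All other steps are routine given Theorem \ref{thm:variance_mean} and independence. The second, more conceptual subtlety is to handle the degenerate boundary cases ($\Sigma_a=0$ or $\Sigma_b=0$) cleanly within a single argument; writing $V_{NT}^{-1/2}$ as a function of $N\,V_{NT}$ (resp.\ $NT\,V_{NT}$ under Assumption \ref{a:non-degenerate}(ii)) and invoking Slutsky handles this uniformly.
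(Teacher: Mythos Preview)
Your proposal is correct and follows essentially the same route as the paper: the same Hoeffding-type decomposition $A_N+B_T+E_{NT}$, an i.i.d.\ CLT for $A_N$, an $\alpha$-mixing CLT for $B_T$, independence of $A_N$ and $B_T$ to obtain the joint limit, negligibility of $E_{NT}$ under Assumption~\ref{a:non-degenerate}(i), and a separate elementary treatment of the i.i.d.\ case~(ii). The only cosmetic differences are that the paper cites Theorem~14.15 of \cite{hansen2021econometrics} for the mixing CLT (rather than Herrndorf) and handles case~(ii) directly via Lyapunov without first arguing that $a_i\equiv b_t\equiv 0$.
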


A proof is provided in Appendix \ref{sec:thm:main_asymptotics}.
Theorem \ref{thm:main_asymptotics} shows that the self-normalized sample mean is asymptotically normal. Self-normalization is used to allow for differing rates of convergence due to clustering structure. 

In this section, we focus on the setting where there is precisely one unit of observation per cluster intersection.
In applications, there may be heterogeneous per-cluster numbers of observations, e.g., unbalanced panels.
The above theory will straightforwardly extend to such cases -- see Appendix \ref{sec:multiple_observations}.

\subsection{Linear Regression}\label{sec:ols}

We now revisit the linear panel regression model \eqref{eq:linear_regression}--\eqref{eq:linear_error} and the OLS estimator \eqref{eq:beta_hat}.
In this subsection we set $D_{it} = (Y_{it},X_{it}',U_{it})'$ , so that the framework \eqref{eq:generalized_ahk} is
\begin{equation}\label{eq:DGP_OLS}
(Y_{it},X_{it}',U_{it})' = f(\alpha_i,\gamma_t,\varepsilon_{it}).
\end{equation}
Set $a_i=E[X_{it}U_{it}\mid \alpha_i]$ and $b_t=E[X_{it}U_{it}\mid \gamma_t]$. Let $\Sigma_a$ and $\Sigma_b$ be the variance/long-run variance matrices of $a_i$ and $b_t$, respectively.

\begin{assumption}\label{a:NW_OLS}
For some $\delta>0$ and $r>1$, (i) $\{(Y_{it},X_{it}',U_{it}): 1\le i\le N, 1\le t\le T\}$ are generated following \eqref{eq:DGP_OLS}, where $\{\alpha_i\}$, $\{\gamma_t\}$, and $\{\varepsilon_{it}\}$ are mutually independent sequences,
$\alpha_i$ is i.i.d across $i$, $\varepsilon_{it}$ is i.i.d across $(i,t)$, and $\gamma_t$ is strictly stationary.
(ii) $Q=E[X_{it}X_{it}']>0$, $E[\|X_{it}\|^{8(r+\delta)}]<\infty$, and $E[\|U_{it}\|^{8(r+\delta)}]<\infty$. 
(iii)  $\gamma_t$ is a $\beta$-mixing sequence with size $2r/(r-1)$, that is, $\beta(\ell)=O(\ell^{-\lambda})$ for a $\lambda>2r/(r-1)$.
(iv) One of the following two conditions hold: (1) Either $\Sigma_a>0$ or $\Sigma_b>0$, and  $N/T\to c\in (0,\infty)$ as $(N,T)\to\infty$; or (2) $(X_{it},U_{it})$ are independent and identically distributed across $i$ and $t$, and $var(X_{it}U_{it})>0$.
(v) For each $M\ge 1$ and $1\le m\le M$,  $w(m,M)=1-[m/(M+1)]$.
(vi) $M/\min\{N,T\}^{1/2}=o(1)$.
\end{assumption}

Assumptions \ref{a:NW_OLS} (i)--(v) above are the counterparts of Assumptions \ref{a:NW} and \ref{a:non-degenerate}, extended to the regression model. The moment and mixing conditions are standard in time series regression.
Assumptions \ref{a:NW_OLS} (v)--(vi) are needed for consistent variance estimation.
The $\alpha$-mixing condition of Assumption \ref{a:NW} has been strengthened to $\beta$-mixing in Assumption \ref{a:NW_OLS}. This is because our proof of consistent variance estimation relies on a deep fourth-order summability result due to \cite{yoshihara1976limiting} which relies on $\beta$-mixing. 

Under these assumptions, the asymptotic variance of $\hat \beta$ is
\begin{align*}
V_{NT}=Q^{-1}\Omega_{NT}Q^{-1}
\end{align*}
where $\Omega_{NT}$ is defined in \eqref{eq:Variance1}-\eqref{eq:Variance4}. Our proposed estimator of $V_{NT}$ is \eqref{eq:V_hat} with \eqref{eq:M_estimator}.

For our theory we focus on a vector-valued parameter $\theta=R'\beta$ for some $k \times m$ matrix $R$. This includes individual coefficients when $m=1$. The estimator of $\theta$ is $\hat \theta=R'\hat \beta$, its asymptotic variance is $\Sigma_{NT}=R'V_{NT}R$, with estimator $\hat\Sigma_{NT}=R'\hat V_{NT}R$. For the case $m=1$, a standard error for $\hat \theta$ is $\hat\sigma_{NT}=\sqrt{R'\hat V_{NT}R}$. We now establish consistency of our variance estimator

\begin{theorem}\label{thm:var_consistency}
	If Assumption \ref{a:NW_OLS} holds for model \eqref{eq:linear_regression}--\eqref{eq:linear_error}, then 
\begin{align*}
\Sigma_{NT}^{-1}\hat \Sigma_{NT} &\stackrel{p}{\to} I_k.
\end{align*}
\end{theorem}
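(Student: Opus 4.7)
The plan is to reduce the claim $\Sigma_{NT}^{-1}\hat\Sigma_{NT} \stackrel{p}{\to} I$ to two parts: (a) $\hat Q \stackrel{p}{\to} Q$ and (b) $\|\hat\Omega_{NT} - \Omega_{NT}\| = o_p(\|\Omega_{NT}\|)$, where $\|\Omega_{NT}\| \asymp \min(N,T)^{-1}$ under Assumption \ref{a:NW_OLS}(iv); the conclusion then follows from Slutsky applied to $\hat\Sigma_{NT} = R'\hat Q^{-1}\hat\Omega_{NT}\hat Q^{-1} R$. Part (a) follows from a two-way-clustered $L^2$ law of large numbers applied entrywise to $X_{it}X_{it}'$, using the moment bound in Assumption \ref{a:NW_OLS}(ii) together with the Hoeffding projection argument already used in the proof of Theorem \ref{thm:variance_mean}. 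The substantive work is part (b).

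For (b) I would first remove the effect of plugging in $\hat U_{it}$ in place of $U_{it}$. Writing $\hat U_{it} = U_{it} - X_{it}'(\hat\beta - \beta)$ and invoking Theorem \ref{thm:main_asymptotics} applied to the score $X_{it}U_{it}$ to get $\|\hat\beta - \beta\| = O_p(\min(N,T)^{-1/2})$, the cross-terms produced when this substitution is expanded inside $\hat R_i$, $\hat S_t$, $\hat G_m$, $\hat H_m$ and the diagonal correction are of strictly smaller order than $\min(N,T)^{-1}$ once $M \cdot \min(N,T)^{-1/2} = o(1)$, which is Assumption \ref{a:NW_OLS}(vi). After this reduction I apply the Hoeffding projection $X_{it}U_{it} = a_i + b_t + e_{it}$, with $a_i = E[X_{it}U_{it}\mid\alpha_i]$ and $b_t = E[X_{it}U_{it}\mid\gamma_t]$, so that $R_i = T a_i + \sum_t b_t + \sum_t e_{it}$ and $S_t = \sum_i a_i + N b_t + \sum_i e_{it}$. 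Substituting into the four summands of \eqref{eq:Variance1}--\eqref{eq:Variance4} expands each into polynomial averages of $\{a_i\}$, $\{b_t\}$, $\{e_{it}\}$; the orthogonality and conditional-independence properties recorded in Section \ref{sec:main} kill the expected value of most cross-products, so that $(NT)^{-2}\sum_i \hat R_i\hat R_i'$ concentrates at $N^{-1}\Sigma_a$, the diagonal terms concentrate as standard Eicker--Huber--White objects, and the burden of recovering $T^{-1}\Sigma_b$ falls on the HAC component.

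The main obstacle is showing that the HAC sum $(NT)^{-2}\sum_t \hat S_t\hat S_t' + 2(NT)^{-2}\sum_{m=1}^M w(m,M)\hat G_m$, after diagonal corrections through the $\hat H_m$ terms, converges to $T^{-1}\Sigma_b$ at rate $o(T^{-1})$. After projection and expansion the stochastically dominant content is $T^{-2}\sum_t b_t b_t' + 2T^{-2}\sum_{m=1}^M w(m,M)\sum_t b_t b_{t+m}'$. For the bias, Davydov's inequality combined with the $\beta$-mixing and moment conditions of Assumption \ref{a:NW_OLS}(ii)--(iii) gives absolute summability of $E[b_0 b_m']$, and with the triangular weights $w(m,M) = 1 - m/(M+1)$ a standard Newey--West Cesaro argument yields $E[b_0 b_0'] + 2\sum_{m=1}^M w(m,M)E[b_0 b_m'] \to \Sigma_b$ as $M \to \infty$. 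For the stochastic fluctuation, the fourth-order moment inequality of \cite{yoshihara1976limiting} for $\beta$-mixing sequences bounds $\mathrm{Var}(T^{-1}\sum_t b_t b_{t+m}')$ uniformly in $m$, and summing over $m \le M$ with triangular weights makes the total variance of the HAC piece $o(T^{-2})$ under Assumption \ref{a:NW_OLS}(vi). The residual pieces involving $e_{it}$ in $\hat G_m$ are negligible because, conditionally on $(\gamma_t,\gamma_{t+m})$, $\{e_{it}\}_i$ are independent mean-zero, so sums over $i$ concentrate at rate $(NT)^{-1}$. Finally, the eigenvalue correction $EVC(\cdot)$ preserves consistency because the target is positive semidefinite in the limit and the map is continuous at such points, completing part (b).
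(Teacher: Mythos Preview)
Your proposal is correct and follows essentially the same route as the paper: reduce to $\hat Q\to Q$ plus control of $\hat\Omega_{NT}-\Omega_{NT}$; peel off the plug-in error via $\|\hat\beta-\beta\|=O_p(\min(N,T)^{-1/2})$ and Assumption~\ref{a:NW_OLS}(vi); use the Hoeffding projection $X_{it}U_{it}=a_i+b_t+e_{it}$ to isolate the $\Sigma_a$, $\Sigma_b$, and residual contributions; and invoke \cite{yoshihara1976limiting} for the fourth-order $\beta$-mixing sums.

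Two organizational differences are worth noting. First, the paper explicitly splits into the non-degenerate case Assumption~\ref{a:NW_OLS}(iv)(1) and the i.i.d.\ case (iv)(2), with different normalizations ($N$ versus $NT$); your blanket statement that $\|\Omega_{NT}\|\asymp\min(N,T)^{-1}$ is only correct under (iv)(1), and the i.i.d.\ case needs its own (much easier) treatment. Second, the paper does not expand $R_i=Ta_i+\sum_t b_t+\sum_t e_{it}$ termwise as you suggest; instead it conditions on $(\gamma_t)_{t=1}^T$ (respectively $(\alpha_i)_{i=1}^N$) and works with the decomposition into a conditionally-centered piece and a remainder, which is where Yoshihara's lemma enters---specifically for the $\hat R_i\hat R_i'$ block, not the HAC block. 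Your direct-expansion route would also work, but the cross-terms (e.g.\ $a_i\cdot\sum_t e_{it}$ inside $R_iR_i'$, or $(\sum_i a_i)\cdot Nb_{t+m}'$ inside $S_tS_{t+m}'$) require more bookkeeping than your sketch acknowledges; the paper's conditioning device sidesteps some of that accounting.
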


A proof is provided in Appendix \ref{sec:var_consistency}. It relies on some technical lemmas in Appendix \ref{sec:lemmas} that are of potential independent interest.
This result shows that our proposed variance estimator is consistent. Notice that we state consistency as a self-normalized matrix ratio. This allow for the differing rates of convergence covered by Assumption \ref{a:NW_OLS}.

Theorem \ref{thm:var_consistency} is new. It is the first demonstration of consistent variance estimation under two-way clustering with serially dependent time effects. 

The proof of Theorem \ref{thm:var_consistency} includes some technical innovations. Of particular note is the use of the fourth-order summability condition of \cite{yoshihara1976limiting}, combined with projections on the individual-specific and time-specific factors. The summability condition is needed to calculate the variance  of the variance estimator, which is a fourth-order sum.

\begin{theorem}\label{thm:OLS}
	If Assumption \ref{a:NW_OLS} holds for the model \eqref{eq:linear_regression}--\eqref{eq:linear_error}, then 
\begin{align}
\Sigma_{NT}^{-1/2}(\hat \theta - \theta) &\stackrel{d}{\to} N(0,I_m)
\label{eq:thetadist1}
\end{align}
and
\begin{align}
\hat \Sigma_{NT}^{-1/2}(\hat \theta - \theta) &\stackrel{d}{\to} N(0,I_m).
\label{eq:thetadist2}
\end{align}
\end{theorem}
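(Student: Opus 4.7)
The plan is to reduce Theorem \ref{thm:OLS} to the mean-estimation CLT of Theorem \ref{thm:main_asymptotics} and the variance-consistency result of Theorem \ref{thm:var_consistency}. First I would expand
\[
\hat\beta - \beta \;=\; \hat Q^{-1}\bar Z, \qquad \bar Z \;=\; \frac{1}{NT}\sum_{i=1}^N\sum_{t=1}^T X_{it}U_{it},
\]
so that $R'(\hat\beta-\beta) = R'\hat Q^{-1}\bar Z$, and reduce the claim to (a) a self-normalized CLT for the score mean $\bar Z$ and (b) $\hat Q \stackrel{p}{\to} Q$.

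For (a), I would apply Theorem \ref{thm:main_asymptotics} to the score sequence $Z_{it} = X_{it}U_{it}$. The representation \eqref{eq:DGP_OLS} gives $Z_{it} = g(\alpha_i,\gamma_t,\varepsilon_{it})$ for a measurable $g$; Cauchy--Schwarz combined with Assumption \ref{a:NW_OLS}(ii) yields $E[\|Z_{it}\|^{4(r+\delta)}]<\infty$; the $\beta$-mixing of $\gamma_t$ implies the $\alpha$-mixing required by Assumption \ref{a:NW}(iii); and Assumption \ref{a:NW_OLS}(iv) is exactly the non-degeneracy condition of Assumption \ref{a:non-degenerate} for the score, with $a_i = E[Z_{it}\mid\alpha_i]$ and $b_t = E[Z_{it}\mid\gamma_t]$. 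The theorem then yields $\Omega_{NT}^{-1/2}\bar Z \stackrel{d}{\to} N(0,I_k)$, where $\Omega_{NT} = \mathrm{var}(\bar Z)$. For (b), the same verification applied to $X_{it}X_{it}'$ — with $E[\|X_{it}X_{it}'\|^{4(r+\delta)}]<\infty$ coming directly from Assumption \ref{a:NW_OLS}(ii) — plus the probability-convergence conclusion of Theorem \ref{thm:variance_mean} gives $\hat Q \stackrel{p}{\to} Q$, and hence $\hat Q^{-1} \stackrel{p}{\to} Q^{-1}$ by continuous mapping.

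To obtain \eqref{eq:thetadist1}, I would split
\[
\Sigma_{NT}^{-1/2}R'(\hat\beta-\beta) \;=\; \Sigma_{NT}^{-1/2}R'Q^{-1}\bar Z \;+\; \Sigma_{NT}^{-1/2}R'(\hat Q^{-1}-Q^{-1})\bar Z.
\]
Because $\Sigma_{NT} = R'Q^{-1}\Omega_{NT}Q^{-1}R$, the $m\times k$ matrix $A_{NT} = \Sigma_{NT}^{-1/2}R'Q^{-1}\Omega_{NT}^{1/2}$ satisfies $A_{NT}A_{NT}' = I_m$ for every $(N,T)$, so the first term equals $A_{NT}\bigl(\Omega_{NT}^{-1/2}\bar Z\bigr) \stackrel{d}{\to} N(0,I_m)$. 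The remainder is $o_p(1)$ by rate bookkeeping: Theorem \ref{thm:variance_mean} applied to the score gives that $\|\Omega_{NT}\|$ and therefore $\|\Sigma_{NT}\|$ behave like $\min\{N,T\}^{-1}$ under Assumption \ref{a:NW_OLS}(iv)(1) and like $(NT)^{-1}$ under (iv)(2), so $\|\Sigma_{NT}^{-1/2}\|\cdot\|\bar Z\| = O_p(1)$, and the $o_p(1)$ factor $\hat Q^{-1}-Q^{-1}$ in the middle closes the argument. For \eqref{eq:thetadist2}, Theorem \ref{thm:var_consistency} gives $\Sigma_{NT}^{-1}\hat\Sigma_{NT}\stackrel{p}{\to} I_m$, hence $\hat\Sigma_{NT}^{-1/2}\Sigma_{NT}^{1/2} \stackrel{p}{\to} I_m$ by continuous mapping, and Slutsky combined with \eqref{eq:thetadist1} completes the proof.

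The main obstacle I anticipate is the rate bookkeeping in the remainder term, because the self-normalizing matrix $\Sigma_{NT}^{-1/2}$ has divergent operator norm; but the fact that $\Omega_{NT}^{-1/2}\bar Z = O_p(1)$ from Theorem \ref{thm:main_asymptotics} already calibrates the rate of $\bar Z$ against that of $\Sigma_{NT}^{-1/2}$, so the $o_p(1)$ consistency margin on $\hat Q^{-1}$ just suffices. Beyond this, the substantive content — asymptotic normality under arbitrary serial dependence in $\gamma_t$ — is really delivered by Theorems \ref{thm:main_asymptotics} and \ref{thm:var_consistency}, and the present statement is essentially their corollary.
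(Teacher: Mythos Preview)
Your proposal is correct and follows essentially the same route as the paper: expand $\hat\beta-\beta=\hat Q^{-1}\bar Z$, invoke Theorem~\ref{thm:main_asymptotics} on the scores $X_{it}U_{it}$, use consistency of $\hat Q$ to kill the remainder, and then combine with Theorem~\ref{thm:var_consistency} for \eqref{eq:thetadist2}. The only cosmetic difference is that the paper fixes the rate by scaling with $\sqrt{N}$ under Assumption~\ref{a:NW_OLS}(iv)(1) (treating (iv)(2) as classical) and shows $N\Sigma_{NT}\to R'Q^{-1}\Sigma Q^{-1}R$ separately, whereas you carry the self-normalization throughout via the orthogonal-row matrix $A_{NT}=\Sigma_{NT}^{-1/2}R'Q^{-1}\Omega_{NT}^{1/2}$; both arguments are equivalent.
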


A proof is provided in Appendix \ref{sec:cor_OLS}.

Theorem \ref{thm:OLS} shows that the least squares estimator $\hat \theta$ is asymptotically normal. Normality holds when the estimator is standardized by its asymptotic covariance matrix $\Sigma_{NT}$ or by its estimator $\hat\Sigma_{NT}$. This latter result shows (for the case $m=1$) that t-ratios constructed with our standard errors are asymptotically $N(0,1)$. It also shows (for the case $m>1$) that Wald-type tests constructed with our covariance matrix estimator are asymptotically $\chi_m^2$. Hence conventional inference methods can be used with the least squares estimator $\hat \beta$, our variance estimator $\hat V_{NT}$, and our standard errors.

Theorem \ref{thm:OLS} is new. It is the first result which rigorously demonstrates asymptotic normality of least squares estimators and t-ratios under two-way clustering with serially correlated time effects.  The asymptotic normality presented here adapts to the unknown convergence rate. It is, however, worthy to note that this result is pointwise in DGP. In the absence of correlated time effects, \cite{menzel2021bootstrap} discusses the issues of uniform inference. In a linear panel data context, \cite{lu2022uniform} provides uniformly valid inference procedure. Although it remains unclear to us whether their approaches can be adapted to our framework, it is an interesting future research avenue to investigate the potential uniformity properties of the asymptotics under various sets of sequences of DGPs. 

In contrast, test statistics constructed with the popular CGM variance estimator will not have conventional asymptotic distributions when the time effects $\gamma_t$ are serially correlated. The CGM variance estimator is inconsistent in this situation, so test statistics will have distorted asymptotic distributions.

\section{Fixed-Effect Models}\label{sec:fixed_effect}

For panel data, researchers often include one-way or two-way fixed effects.
This section has two contents regarding fixed-effect models.
First, Section \ref{sec:fixed_effect_necessary} argues that two-way clustering is still necessary in general even if a researcher includes two-way fixed effects.
Second, Section \ref{sec:fixed_effect_theory} extends our theory to two-way fixed-effect regressions.

\subsection{Two-Way Clustering Is Still Necessary}\label{sec:fixed_effect_necessary}

Some empirical economists believe that it is unnecessary to cluster standard errors if fixed effects are included in estimation.
In this section, we argue that fixed effects will not generally solve the problem of two-way cluster dependence. 

Consider the two-way fixed-effect model:
\begin{align}
Y_{it} &= \beta_0 + \beta_1 X_{it} + U_{it}, \text{ where } \notag\\
X_{it} &= \alpha_{i1} \gamma_{t2} + \alpha_{i2} \gamma_{t1} + \varepsilon_{it0} \label{eq:FE_example}\\
U_{it} &= \alpha_{i0} + \gamma_{t0} + \alpha_{i1} \gamma_{t3} + \alpha_{i3} \gamma_{t1} + \varepsilon_{it1} \notag
\end{align}
Note that $X_{it}$ and $U_{it}$ are generated by the latent variables $\alpha_i=(\alpha_{i0},\alpha_{i1},\alpha_{i2},\alpha_{i3})$, $\gamma_t = (\gamma_{t0},\gamma_{t1},\gamma_{t2},\gamma_{t3})$ and $\varepsilon_{it} = (\varepsilon_{it0},\varepsilon_{it1})$.
Here, $\alpha_{i0}$ and $\gamma_{t0}$ are additive fixed effects.
Suppose that $\alpha_{i0}$, $\alpha_{i1}$, $\alpha_{i2}$, $\alpha_{i3}$, $\gamma_{t0}$, $\gamma_{t1}$, $\gamma_{t2}$, $\gamma_{t3}$, $\varepsilon_{it0}$, and $\varepsilon_{it1}$ are mutually independent with mean 0 and variance 1.

To abstract away from finite-sample issues, consider the \textit{population} double differences\footnote{For a formal account of the discrepancy between the \textit{population} and \textit{sample} double differences, see Section \ref{sec:fixed_effect_theory}.}
\begin{align*}
\tilde X_{it} &= X_{it} - \mu^X_{i} - \mu^X_{t} + \mu^X
&&\left(\text{where } \bar \mu^X_{i}=E[ X_{it} | \alpha_i], \mu^X_{t}=E[ X_{it} | \gamma_t], \& \ \mu^X = E[ X_{it} ]\right)
\\
\tilde U_{it} &= U_{it} - \mu^U_{i} - \mu^U_{t} + \mu^U
&&\left(\text{where } \bar \mu^U_{i}=E[ U_{it} | \alpha_i], \ \mu^U_{t}=E[ U_{it} | \gamma_t], \ \& \ \mu^U = E[ U_{it} ]\right)
\end{align*}
Note that they reduce to
\begin{align*}
\tilde X_{it} &= \alpha_{i1} \gamma_{t2} + \alpha_{i2} \gamma_{t1} + \varepsilon_{it0}\\
\tilde U_{it} &= \alpha_{i1} \gamma_{t3} + \alpha_{i3} \gamma_{t1} + \varepsilon_{it1}
\end{align*}
under the example \eqref{eq:FE_example}.
Certainly, the double differencing removes the FEs, $\alpha_{i0}$ and $\gamma_{t0}$, but still leaves the strong two-way dependence through $(\alpha_{i1},\alpha_{i2},\alpha_{i3})$ and $(\gamma_{t1},\gamma_{t2},\gamma_{t3})$.
Observe that there is no endogeneity, as
$
E[\tilde U_{it} | \tilde X_{it}]=0.
$
Furthermore, the score
\begin{align*}
\ddot X_{it} \ddot U_{it} 
=& 
\alpha_{i1}^2 \gamma_{t2} \gamma_{t3} + \alpha_{i1}\alpha_{i3}\gamma_{t1}\gamma_{t2} +\alpha_{i1}\alpha_{i2}\gamma_{t1}\gamma_{t3} + \alpha_{i2}\alpha_{i3}\gamma_{t1}^2 +
\\
& (\alpha_{i1}\gamma_{t2} + \alpha_{i2}\gamma_{t1})\varepsilon_{it1} +
(\alpha_{i1}\gamma_{t3} + \alpha_{i3}\gamma_{t1})\varepsilon_{it0}
\end{align*}
entails the non-degenerate projections
\begin{align*}
a_i = \alpha_{i2}\alpha_{i3} \qquad\text{ and }\qquad b_t = \gamma_{t2} \gamma_{t3}
\end{align*}
with respect to $\alpha_i$ and $\gamma_t$, respectively.

Hence, in this example, the components of $\alpha_i$ and $\gamma_t$ are not eliminated by the two-way fixed effects regression of $Y$ on $X$, so the score $\ddot X_{it} \ddot U_{it}$ is still two-way dependent, and two-way clustering is necessary for calculation of the covariance matrix. 
Furthermore, the score is not degenerate so our theory to be presented in Section \ref{sec:fixed_effect_theory} applies.

\subsection{Theory under Two-Way Fixed-Effect Models}\label{sec:fixed_effect_theory}

This section shows that our standard errors extend to two-way fixed-effect models.
In different settings, the existing literature has considered inference for two-way fixed-effect models. 
\cite{verdier2020estimation} studies linear regression with two-way fixed effects for fixed $T$ for sparsely matched data. 
\cite{juodis2021shock} considers bootstrap-based inference for linear models with two-way fixed effects under large $N$ and $T$ with a different set of assumptions. 

Consider the two-way fixed-effect model
\begin{align}
&Y_{it}=X_{it}'\beta + \xi_i+\eta_t +U_{it}, \label{eq:two_way_fe1}\\
&(X_{it}', \xi_i,\eta_t, U_{it})'=f(\alpha_i,\gamma_t,\varepsilon_{it}), \label{eq:two_way_fe2}
\end{align}
where $\xi_i$ and $\eta_t$ are fixed effects and $E[U_{it}]=0$.\footnote{Note that \eqref{eq:two_way_fe2} implicitly requires that $\xi_i = f_2(\alpha_i)$ and $\eta_t = f_3(\gamma_t)$.}
Define the within-transformed outcome and within-transformed regressors by
\begin{align*}
\ddot Y_{it} =& Y_{it} - \frac{1}{N}\sum_{i'=1}^NY_{it} -\frac{1}{T}\sum_{t'=1}^T Y_{it'}+\frac{1}{NT}\sum_{i'=1}^N\sum_{t'=1}^TY_{i't'},\qquad\text{and}\\
\ddot X_{it} =& X_{it} - \frac{1}{N}\sum_{i'=1}^NX_{it} -\frac{1}{T}\sum_{t'=1}^T X_{it'}+\frac{1}{NT}\sum_{i'=1}^N\sum_{t'=1}^TX_{i't'},
\end{align*}
respectively.
The within transformations induce complex dependence structure between transformed variables. 
Using idempotency of the within transformation matrices (see Ch. 17.8 of \citealt{hansen2021econometrics}), the two-way within estimator $\hat \beta$ for $\beta$ is defined as the OLS estimator of $\ddot Y_{it}$ on $\ddot X_{it}$ and thus satisfies
\begin{align*}
\hat \beta - \beta = \left(\frac{1}{NT}\sum_{i=1}^N\sum_{t=1}^T\ddot X_{it}\ddot X_{it}'\right)^{-1}\frac{1}{NT}\sum_{i=1}^N\sum_{t=1}^T \ddot{X}_{it}U_{it}.
\end{align*}
Also, define the variance estimator $\hat \Sigma_{NT}$ as in \eqref{eq:V_hat} with $(\ddot Y_{it},\ddot X_{it}')$ in place of $(Y_{it},X_{it}')$ and with the two-way within estimator $\hat \beta$ defined in this section.
Denote the population counterpart of the within-transformed regressor by
\begin{align*}
\tilde X_{it}=X_{it}-E[X_{it}|\gamma_t]- E[X_{it}|\alpha_i]+E[X_{it}].
\end{align*}
Note that $(\tilde X_{it}',U_{it})$ only depends on $\alpha_i,\gamma_t$, and $\varepsilon_{it}$ and satisfies $E[\tilde X_{it}]=0$.

\begin{theorem}[Regression models with two-way fixed effects]\label{thm:TWFE}
Suppose Assumption \ref{a:NW_OLS} (ii), (iii), (vi)(1), (v), (vi) holds for $(X_{it}',U_{it})$, and the outcome variable $Y_{it}$ is generated following \eqref{eq:two_way_fe1}--\eqref{eq:two_way_fe2} where $\alpha_i$, $\gamma_t$, and $\varepsilon_{it}$ are defined in the same way as in Assumption \ref{a:NW_OLS}. 
In addition, assume that
$E[\tilde X_{it} U_{it}]=0$
and
$\|X_{it}\|_\infty\le K$
for a constant $K$ that is independent of $N$ and $T$.
Then, the conclusions in Theorems \ref{thm:var_consistency} and \ref{thm:OLS} continue to hold, and thus
\begin{align*}
\hat \Sigma_{NT}^{-1/2}(\hat \theta - \theta) &\stackrel{d}{\to} N(0,I_m).
\end{align*}
\end{theorem}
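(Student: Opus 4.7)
The plan is to reduce the two-way fixed-effects regression to the no-fixed-effects case already handled by Theorems \ref{thm:var_consistency} and \ref{thm:OLS}, by replacing the sample within-transformed regressor $\ddot X_{it}$ with its population analogue $\tilde X_{it}$. By idempotency of the within projector (so that $\sum_{i,t}\ddot X_{it}\ddot U_{it}=\sum_{i,t}\ddot X_{it}U_{it}$) one has
$$\hat\beta-\beta=\Bigl(\tfrac{1}{NT}\sum_{i,t}\ddot X_{it}\ddot X_{it}'\Bigr)^{-1}\tfrac{1}{NT}\sum_{i,t}\ddot X_{it}U_{it},$$
and writing $\ddot X_{it}=\tilde X_{it}+r_{it}$, where
$$r_{it}=\bigl(E[X_{it}|\alpha_i]-\tfrac{1}{T}\sum_{t'}X_{it'}\bigr)+\bigl(E[X_{it}|\gamma_t]-\tfrac{1}{N}\sum_{i'}X_{i't}\bigr)+\bigl(\tfrac{1}{NT}\sum_{i',t'}X_{i't'}-E[X_{it}]\bigr),$$
reduces the task to (a) showing that replacing $\ddot X_{it}$ by $\tilde X_{it}$ in both the Hessian and the gradient introduces an error of smaller order than the leading stochastic fluctuation, and (b) showing that the ``oracle'' regression with $\tilde X_{it}$ falls within the scope of Theorems \ref{thm:var_consistency} and \ref{thm:OLS}.

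For (b), I would observe that $(\tilde X_{it}',U_{it})$ is a measurable function of $(\alpha_i,\gamma_t,\varepsilon_{it})$, so it inherits the AHK-type factor structure \eqref{eq:generalized_ahk}, and the moment, $\beta$-mixing, non-degeneracy, and orthogonality $E[\tilde X_{it}U_{it}]=0$ requirements in Assumption \ref{a:NW_OLS} are either inherited from $X_{it}$ (using $\|X_{it}\|_\infty\le K$ to pass from $X_{it}$ to $\tilde X_{it}$) or assumed directly. Hence Theorems \ref{thm:var_consistency} and \ref{thm:OLS} apply verbatim to the infeasible quantities $\tilde X_{it}U_{it}$, delivering a self-normalized CLT for the oracle estimator $\hat\beta^{\mathrm{or}}$ and consistency of the oracle variance estimator $\hat\Sigma_{NT}^{\mathrm{or}}$.

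For (a), the uniform boundedness assumption is the workhorse. Conditional on $\gamma_t$, the sample $\{X_{it}\}_i$ is i.i.d.\ and bounded, so a Hoeffding-type maximal inequality yields $\max_t\|N^{-1}\sum_{i'}X_{i't}-E[X_{it}|\gamma_t]\|=O_P(N^{-1/2}\sqrt{\log T})$; conditional on $\alpha_i$, the sample $\{X_{it}\}_t$ is strictly stationary $\beta$-mixing and bounded, so a mixing-based maximal inequality of Yoshihara type gives $\max_i\|T^{-1}\sum_{t'}X_{it'}-E[X_{it}|\alpha_i]\|=O_P(T^{-1/2}\sqrt{\log N})$; and $\|\bar X_{\cdot\cdot}-E[X_{it}]\|=O_P((N\wedge T)^{-1/2})$ by Theorem \ref{thm:main_asymptotics}. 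Propagating these through bilinear sums and using $E[U_{it}|\alpha_i]=E[U_{it}|\gamma_t]=0$ to kill the leading projections, the gradient correction $(NT)^{-1}\sum_{i,t}r_{it}U_{it}$ and the linear-in-$r$ corrections to the Hessian are $o_P((N\wedge T)^{-1/2})$.

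The main obstacle is the consistency of the feasible variance estimator $\hat\Sigma_{NT}$, because its autocovariance contributions
$$\hat G_m=\sum_{t=1}^{T-m}\Bigl(\sum_{i}\ddot X_{it}\hat U_{it}\Bigr)\Bigl(\sum_{i}\ddot X_{i,t+m}\hat U_{i,t+m}\Bigr)'$$
are quartic in $\ddot X$, so the replacement errors compound across the four factor slots. The critical bookkeeping is to show that, after summing over $m\le M$ and weighting by $w(m,M)$, all cross-terms containing at least one $r$-factor contribute $o_P(NT)$; this is precisely where the bandwidth restriction $M=o((N\wedge T)^{1/2})$ in Assumption \ref{a:NW_OLS}(vi) is needed, in conjunction with the fourth-order $\beta$-mixing summability result of \cite{yoshihara1976limiting} already employed in the proof of Theorem \ref{thm:var_consistency}. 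Once this is established, $\hat\Sigma_{NT}=\hat\Sigma_{NT}^{\mathrm{or}}+o_P(\Sigma_{NT})$, and the conclusion $\hat\Sigma_{NT}^{-1/2}(\hat\theta-\theta)\stackrel{d}{\to} N(0,I_m)$ follows from Theorem \ref{thm:OLS} applied to the oracle and Slutsky's theorem.
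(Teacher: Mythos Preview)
Your overall architecture matches the paper's: write $\ddot X_{it}=\tilde X_{it}+r_{it}$, show that the oracle regression in $(\tilde X_{it},U_{it})$ satisfies the hypotheses of Theorems~\ref{thm:var_consistency} and~\ref{thm:OLS}, and control the $r_{it}$ contributions via uniform-in-index bounds on each of its three pieces. Your treatment of part~(b) and of the cross-sectional maximum $\max_t\|N^{-1}\sum_{i'}X_{i't}-E[X_{it}\mid\gamma_t]\|$ via a conditional (Hoeffding-type) sub-Gaussian maximal inequality also agrees with the paper, and your discussion of the feasible variance estimator is in fact more explicit than the paper's, which simply asserts that once $\ddot X_{it}$ has been replaced by $\tilde X_{it}$ Theorems~\ref{thm:main_asymptotics} and~\ref{thm:var_consistency} apply.

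The gap is the time-series maximum $\max_i\|T^{-1}\sum_{t'}X_{it'}-E[X_{it}\mid\alpha_i]\|$. Yoshihara's result is a fourth-order covariance summability bound for $\beta$-mixing $U$-statistics, not a maximal inequality over $N$ partial sums, and under the theorem's \emph{polynomial} $\beta$-mixing assumption no standard exponential inequality delivers your claimed i.i.d.-like rate $O_P(T^{-1/2}\sqrt{\log N})$. The paper treats this step as the technical crux and uses a Bernstein big-block/small-block construction: with big blocks of length $q\sim T^{3/4}$ and small blocks of length $s\sim T^{1/3}$, it applies the $\beta$-mixing coupling of \cite{yu1994rates} to replace the big-block sums by independent copies at total-variation cost $m\beta(s)=o(1)$, bounds the small-block and tail pieces trivially, and only then invokes a sub-Gaussian maximal inequality on the decoupled, independent, bounded blocks. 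The rate obtained is merely $O_P(\sqrt{q\log N/T})=o_P(1)$, not $T^{-1/2}\sqrt{\log N}$. This weaker bound suffices because the paper writes the gradient remainder as $o_P(1)\cdot(NT)^{-1}\sum_{i,t}U_{it}=o_P((N\wedge T)^{-1/2})$, where the second factor is controlled by Theorem~\ref{thm:main_asymptotics}, rather than via the projection identities $E[U_{it}\mid\alpha_i]=E[U_{it}\mid\gamma_t]=0$ you invoke---those are not among the stated hypotheses, though they can be arranged without loss of generality by absorbing the projections into $\xi_i$ and $\eta_t$, and you should say so.
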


A proof is provided in Appendix \ref{sec:TWFE}.
It is worthy noting that the proof is not a mere extension of the previous theorems, but requires nontrivial technicalities involving maximal inequalities in the time-series framework.

\section{Simulations: Comparisons of Alternative Standard Errors}\label{sec:simulations}

In this section, we use simulated data to examine the performance of our proposed robust variance estimator in comparison with six existing alternatives.

We generate data based on the linear model
\begin{align*}
Y_{it} = \beta_0 + \beta_1 X_{it} + U_{it},
\end{align*}
where the right-hand side variables $(X_{it},U_{it})'$ are generated through the panel dependence structure
\begin{align*}
X_{it} &= w_\alpha \alpha^x_i + w_\gamma \gamma^x_t + w_\varepsilon \varepsilon^x_{it}
\qquad\text{and}\\
U_{it} &= w_\alpha \alpha^u_i + w_\gamma \gamma^u_t + w_\varepsilon \varepsilon^u_{it}.
\end{align*}
We set $(\beta_0,\beta_1)= (1,1)$ throughout.
For the weight parameters, we use $(w_\alpha,w_\gamma,w_\varepsilon) = (0,0,1)$ to generate i.i.d. data and also use $(w_\alpha,w_\gamma,w_\varepsilon) = (0.25,0.50,0.25)$ to generate dependent data.
The latent components $(\alpha^x_i,\alpha^u_i,\varepsilon^x_{it},\varepsilon^u_{it})$ are all mutually independent $N(0,1)$.

The latent common time effects $(\gamma^x_t,\gamma^u_t)$ are dynamically generated according to the AR(1) design:
\begin{align*}
&
\gamma^x_t = \rho\gamma^x_{t-1} + \tilde\gamma^x_t
\text{ where $\tilde\gamma^x_t$ are independent draws from $N(0,1-\rho^2)$; and}
\\
&
\gamma^u_t = \rho\gamma^u_{t-1} + \tilde\gamma^u_t
\text{ where $\tilde\gamma^u_t$ are independent draws from $N(0,1-\rho^2)$.}
\end{align*}
The initial values are drawn from $N(0,1)$.
We vary the AR coefficient $\rho \in \{0.25,0.50,0.75\}$ across sets of simulations.

For each realization of observed data $\{(Y_{it},X_{it}) : 1 \le i \le N, 1 \le t \le T\}$ constructed according to the data generating process described above, we estimate $(\beta_0,\beta_1)$ by OLS.
Our objective is to evaluate the performance of our proposed robust variance estimator $\hat V_{NT} = \hat Q^{-1} \hat\Omega_{NT} \hat Q^{-1} $, where $\hat Q$ and $ \hat\Omega_{NT} $ are given in \eqref{eq:Q_hat} and \eqref{eq:M_estimator}, and the tuning parameter $\hat M$ is chosen according to the rule \eqref{eq:m_hat}.
Through simulation studies, we examine the performance of this robust variance estimator (hereafter referred to as CHS; Chiang-Hansen-Sasaki) in comparison with six existing alternative variance estimators which are in popular use for panel data analysis.
They include the heteroskedasticity robust estimator (EHW; Eicker-Huber-White; also known as HC0), the cluster robust estimator within $i$ (CR$i$; which corresponds to the Liang-Zeger-Arellano estimator), the cluster robust estimator within $t$ (CR$t$), the two-way cluster robust estimator \citep*[CGM;][]{CGM2011}, the wild bootstrap estimator \citep*[MNW;][]{mackinnon2021wild} for CGM, the bootstrap estimator \citep*[M;][]{menzel2021bootstrap},\footnote{Implementation of M requires some tuning parameters. We set the number of bootstrap iterations and the model selection tuning parameters, $\kappa_a$ and $\kappa_g$ following the simulation code for regressions by \citet{menzel2021bootstrap}. In addition to the default method of M, we also ran M without its model selection feature to find its results the same as those of the default method. Hence, we only report results by the default method of M.} and the two-way cluster robust estimator with 2-dependence \citep*[T;][]{Thompson2011}.

\begin{table}
\renewcommand{\arraystretch}{0.975}
\centering
\begin{tabular}{lcccccccccccc}
\multicolumn{12}{c}{I.I.D. Design: Nominal Probability = 95\%}\\
\hline\hline
& $N$ & $T$ & $\rho$ & EHW & CR$i$ & CR$t$ & CGM & MNW & M & T & CHS\\
\hline
(I)    & 50 & 100 & --- & \one 0.947 & \two 0.939 & \one 0.942 & \two 0.933 & \one 0.947 & \fiv 0.999 & \fou 0.912 & \one 0.949\\
(II)   & 75 & 75 & --- & \one 0.951 & \one 0.945 & \one 0.947 & \one 0.940 & \one 0.949 & \fiv 0.999 & \fou 0.913 & \one 0.953\\
(III)  & 100 & 50 & --- & \one 0.953 & \one 0.950 & \one 0.945 & \one 0.940 & \one 0.953 & \fiv 0.999 & \six 0.896 & \one 0.952\\
\hline\hline
&&&& \hspace{1.12cm} & \hspace{1.12cm} & \hspace{1.12cm} & \hspace{1.12cm} & \hspace{1.12cm} & \hspace{1.12cm} & \hspace{1.12cm} & \hspace{1.12cm} \ \\
\multicolumn{12}{c}{Dependence Design: Nominal Probability = 95\%}\\
\hline\hline
& $N$ & $T$ & $\rho$ & EHW & CR$i$ & CR$t$ & CGM & MNW & M & T & CHS\\
\hline
(IV)    & 50 & 100 & 0.25 & 0.293 & 0.484 & \fou 0.917 & \two 0.931 & \one 0.948 & \two 0.939 & \thr 0.921 & \one 0.953\\
(V)   & 75 & 75 & 0.25 & 0.251 & 0.386 & \thr 0.920 & \thr 0.928 & \one 0.945 & \two 0.934 & \fou 0.912 & \one 0.949\\
(VI)  & 100 & 50 & 0.25 & 0.218 & 0.291 & \fou 0.910 & \fou 0.915 & \two 0.936 & \fou 0.919 & \sev 0.882 & \two 0.936\\
\hline
(VII)    & 50 & 100 & 0.50 & 0.281 & 0.485 & \sev 0.884 & \fiv 0.904 & \thr 0.928 & \fou 0.916 & \fou 0.916 & \two 0.933\\
(VIII)   & 75 & 75 & 0.50 & 0.235 & 0.388 & \six 0.891 & \fiv 0.903 & \thr 0.924 & \fou 0.911 & \fiv 0.907 & \two 0.937\\
(IX)  & 100 & 50 & 0.50 & 0.206 & 0.290 & \sev 0.886 & \six 0.893 & \fou 0.915 & \fiv 0.900 & \eig 0.874 & \thr 0.924\\
\hline
(X)    & 50 & 100 & 0.75 & 0.257 & 0.511 & 0.813 & \nin 0.861 & \sev 0.887 & \eig 0.875 & \fou 0.913 & \fiv 0.909\\
(XI)   & 75 & 75 & 0.75 & 0.233 & 0.423 & 0.829 & 0.855 & \eig 0.879 & \nin 0.867 & \fiv 0.901 & \fiv 0.908\\
(XII)  & 100 & 50 & 0.75 & 0.196 & 0.325 & 0.825 & 0.840 & \eig 0.870 & 0.852 & \eig 0.870 & \six 0.890\\
\hline\hline
\end{tabular}
\caption{Coverage probabilities for the slope parameter $\beta_1$ for the OLS with the nominal probability of 95\% based on 10,000 Monte Carlo iterations. The top and bottom panels show results under the i.i.d. and dependence designs, respectively. The sample size is indicated by $(N,T)$. The parameter $\rho$ indicates the AR coefficient in the dependence design. EHW stands for Eicker–Huber–White, CR$i$ stands for cluster robust within $i$, CR$t$ stands for cluster robust within $t$, CGM stands for Cameron-Gelbach-Miller, MNW stands for MacKinnon-Nielsen-Webb, M stands for Menzel, T stands for Thompson, and CHS stands for Chiang-Hansen-Sasaki.\\${}$}
\label{tab:sim}
\end{table}

Table \ref{tab:sim} reports simulation results.
Reported values are the coverage frequencies for the slope parameter $\beta_1$ for the nominal probability of 95\% based on 10,000 Monte Carlo iterations. 
The top and bottom panels show coverage probability results under the i.i.d. design and the dependence design, respectively.
In each group of three consecutive rows, the panel sample sizes $(N,T)$ vary by rows.
Cells are shaded based on the proximity of the simulated coverage probability to the nominal probability of 0.95; the darker shades indicate more correct coverage.

We observe the following four points in these results.
First, under the i.i.d. design (rows (I)--(III)), EHW, CR$i$, CR$t$, CGM, MNW and CHS produce accurate coverage probabilities.
On the other hand, M yields over-coverage consistently across different sample sizes, and T yields under-coverage especially under small $T$.
Second, EHW and CR$i$ tend to behave poorly in general when there are two ways of cluster dependence as in rows (IV)--(XII).
Third, CR$t$ also tends to behave poorly under larger extents of serial dependence as in rows (X)--(XII). 
Likewise, CGM, MNW, and M perform less preferably as the serial dependence becomes even stronger, as in rows (X)--(XII).
These results are consistent with the fact that these methods do not account for serially correlated common time effects.
Fourth, in contrast, T and CHS behave more robustly under strong serial dependence especially when $T$ is large, as in rows (X) and (XI).
Whenever $T$ is small, as in rows (III), (VI), (IX) and (XII), however, T incurs severe under-coverage and hence CHS outperforms T in general.
The last observation that T performs poorly for small sample sizes is consistent with the similar observations made by \citet[][page 7]{Thompson2011} in his Monte Carlo simulation studies.
In summary, we demonstrate that confidence intervals constructed with our proposed standard errors lead to robustly superior coverage performance relative to the existing methods.

We ran additional simulations beyond those presented in this section.
Their results are found in Appendix \ref{sec:additional_simulations}.
Specifically, Appendix \ref{sec:simulations_power} illustrates power analyses, and Appendix \ref{sec:simulations_fixed_effect} presents simulations for the two-way fixed-effect estimator.

\section{An Empirical Application}\label{sec:application}

In this section, we highlight differences across the alternative standard error estimates for estimates of a simple asset pricing model.
Consider the Fama-French three-factor model
\begin{align*}
R_{it}-R_{ft} = \beta_1 (R_{Mt}-R_{ft}) + \beta_2 SMB_t + \beta_3 HML_t + e_{it},
\end{align*}
where 
$R_{it}$ is the total return of portfolio/stock $i$ in month $t$,
$R_{ft}$ is the risk-free rate of return in month $t$,
$R_{Mt}$ is the total market portfolio return in month $t$,
$SMB_t$ is the size premium (small$-$big),
$HML_t$ is the value premium (high$-$low), and
$\beta = (\beta_1,\beta_2,\beta_3)'$ are the factor coefficients.

We use two data sets of portfolio/stock returns.
They are 
(A) 44 industry portfolios excluding four financial sectors (banking, insurance, real estate, and trading) and (B) individual stocks.
For each of these data sets, (A) and (B), we use the monthly panel of length $120$ from January 2000 to December 2009.
For the individual stock data set (B), we use the balanced portion of the panel data, consisting of $N=779$ stocks.
The risk-free rate is based on the monthly 30-day T-bill beginning-of-month yield.
See Appendix \ref{sec:data_asset_pricing} for the source of data.

Let $\ddot Y_{it}$ and $\ddot X_{it}$ denote the within-transformations\footnote{
Technically, our theory does not allow for these demeaned variables as they are $(N,T)$-dependent.  But we expect that there should be no change to the theory if we allow for array data. Formalizing this would be a useful future research direction.
} of
$R_{it}-R_{ft}$ and $(R_{Mt}-R_{ft},SMB_t,HML_t)'$, respectively.
($\ddot X_{it}$ is homogeneous in the cross section.)
We estimate $\beta$ by the within-estimator
\begin{align*}
\hat \beta=\left(\frac{1}{NT}\sum_{i=1}^N\sum_{t=1}^T \ddot X_{it} \ddot X_{it}'\right)^{-1} \frac{1}{NT}\sum_{i=1}^N\sum_{t=1}^T \ddot X_{it} \ddot Y_{it}
\end{align*}
to remove any additive portfolio/stock fixed effects.
Thus, our proposed standard errors are computed based on $\hat V_{NT}=\hat Q^{-1}\hat \Omega_{NT}\hat Q^{-1}$ with $\hat Q$ and $\hat \Omega_{NT}$ from
\eqref{eq:Q_hat} and \eqref{eq:M_estimator} with $Y_{it}$, $X_{it}$ and $\hat U_{it}$ replaced by $\ddot Y_{it}$, $\ddot X_{it}$ and $\ddot Y_{it} - \ddot X_{it} \hat\beta$, respectively.
Table \ref{tab:estimation_results} summarizes estimates $\hat\beta$ of $\beta$ along with alternative standard error estimates of them for each of the two data sets described above.

\begin{table}[t]
	\centering
	\renewcommand{\arraystretch}{1}
		\begin{tabular}{ccccccccccc}
\multicolumn{9}{c}{(A) 44 Industry Portfolios. $(N,T)=(44,119)$.}\\
\hline\hline
&&\multicolumn{7}{c}{Standard Errors}\\
\cline{3-9}
&$\hat\beta$&EHW&CR$i$&CR$t$&CGM&M&T&CHS\\
\hline
MKT&0.959&0.022&0.055&0.030&0.059&0.021&0.059&0.059\\
SMB&0.076&0.029&0.035&0.041&0.045&0.029&0.056&0.052\\
HML&0.358&0.030&0.066&0.049&0.076&0.028&0.082&0.079\\
\hline\hline
${}$\hspace{1.2cm}${}$&${}$\hspace{1.2cm}${}$&${}$\hspace{1.2cm}${}$&${}$\hspace{1.2cm}${}$&${}$\hspace{1.2cm}${}$&${}$\hspace{1.2cm}${}$&${}$\hspace{1.2cm}${}$&${}$\hspace{1.2cm}${}$&${}$\hspace{1.2cm}${}$\\
\multicolumn{9}{c}{(B) Individual Stocks. $(N,T)=(779,119)$.}\\
\hline\hline
&&\multicolumn{7}{c}{Standard Errors}\\
\cline{3-9}
&$\hat\beta$&EHW&CR$i$&CR$t$&CGM&M&T&CHS\\
\hline
MKT&1.157&0.012&0.021&0.033&0.037&0.035&0.036&0.034\\
SMB&0.474&0.020&0.025&0.051&0.053&0.055&0.068&0.070\\
HML&0.173&0.016&0.029&0.053&0.058&0.054&0.055&0.053\\
\hline\hline
		\end{tabular}
	\caption{Estimates of the factor coefficients with six alternative standard error estimates of them. EHW stands for Eicker–Huber–White, CR$i$ stands for cluster robust within $i$, CR$t$ stands for cluster robust within $t$, CGM stands for Cameron-Gelbach-Miller, M stands for Menzel, T stands for Thompson, and CHS stands for Chiang-Hansen-Sasaki.}${}$
	\label{tab:estimation_results}
\end{table}

For each row of Table \ref{tab:estimation_results}, the standard error due to the Eicker-Huber-White (EHW) estimator is smaller than any other standard error.
On the other hand, the two-way cluster robust estimators of Cameron-Gelbach-Miller (CGM) and Thompson (T) and our proposed estimator (CHS) tend to yield the largest standard errors in each row.
The remaining three standard errors stay in the middle between these two groups.
The estimator by Menzel (M) behaves similarly to EHW in panel (A) while it behaves similarly to CGM in panel (B).
This puzzling outcome arises from the model selection feature of M.
In fact, M would also behave similarly to CGM in panel (A) as well if the tuning parameter of M were chosen to take a much smaller value.\footnote{As in the simulation section, we set the number of bootstrap iterations and the model selection tuning parameters for M following the simulation code for regressions by \citet{menzel2021bootstrap}.}
Because of these idiosyncratic behaviors of M that depend on discrete outcomes of model selection which in turn depend on tuning parameters, we will hereafter focus on the other estimators in comparing the results.

Observe in panel (A) that the statistical significance of the coefficient $\beta_2$ of SMB meaningfully diminishes as the standard error estimator becomes more robust (again, except for M).
Specifically, it is significant at the 5\% level with EHW, CR$i$ and M, but it becomes insignificant at this level with CR$t$, CGM, T, or CHS.
Furthermore, while it is significant at the 10\% level with EHW, CR$i$, CR$t$, and CGM, it becomes insignificant at this level with T and CHS.
This part of the estimation results shows a case where accounting for serial correlation in common time effects may even overturn conclusions from statistical inference based on the other standard errors. 
Accounting for arbitrary untruncated time correlation, however, CHS yields a slightly smaller standard error estimate than T for this case.

\section{Summary and Discussions}\label{sec:summary}

In this paper, we propose new robust standard error estimators for panel data.
The new estimators account for the cluster dependence within $i$, the cluster dependence within $t$, and serial dependence in the common time effects.
In particular, all the existing robust standard error estimators fail to accommodate untruncated serial dependence in the common time effects, while this feature is relevant to empirical data used in economics and finance.
Simulation studies show that the new standard errors produce robustly superior coverage performance than existing alternatives, including the heteroskedasticity robust estimator (Eicker-Huber-White; also known as HC0), the cluster robust estimator within $i$, the cluster robust estimator within $t$, the two-way cluster robust estimator \citep*{CGM2011}, the bootstrap estimator \citep*{menzel2021bootstrap}, and the two-way cluster robust estimator with 2-dependence \citep*{Thompson2011}.

In the rest of this section, we discuss limitations of our method and potentials of future research in relation to the existing literature.
Since the seminal work by \citet*{CGM2011} and \citet*{Thompson2011}, a few important papers have proposed methods of robust inference in two way cluster dependence.

\cite*{DDG2019} derive Donsker results under multiway cluster dependence.
In the current paper, we only derive limit distributions for finite-dimensional parameters that are relevant to many empirical applications in economics and finance.
In other words, \cite*{DDG2019} provide a generalization of existing results by allowing for empirical processes, we on the other hand provide a generalization in a different direction by allowing for serial dependence in common time effects.
Combining these two directions of generalization is left for future research.

\cite*{menzel2021bootstrap} proposes a method of (conservative) inference with uniform validity over a large class of distributions including the case of non-Gaussian degeneracy under two-way cluster dependence.
In the current paper, for the purpose of providing a simple method of inference via analytic standard error formulas, we focus on the case of Gaussian degeneracy as well as non-degenerate cases.
In other words, \cite*{menzel2021bootstrap} provides a generalization of existing results by allowing for uniformity over a large class, we on the other hand provide a generalization in a different direction by allowing for serial dependence in common time effects.
Combining these two directions of generalization is also left for future research.

\cite*{chiang2020inference} derive a high-dimensional central limit theorem under multiway cluster dependence.
In the current paper, we only consider finite-dimensional parameters that are relevant to many empirical applications in economics and finance.
In other words, \cite*{chiang2020inference} provide a generalization of existing results by allowing for high dimensionality, we on the other hand provide a generalization in a different direction by allowing for serial dependence in common time effects.
Again, combining these two directions of generalization is left for future research.

The independence conditions in Assumption \ref{a:NW} (i) may be relaxed in a couple of directions.
One way is relax the i.i.d. assumption on the $\varepsilon_{it}$ factor in \eqref{eq:generalized_ahk}.
With this said, the existing literature explicitly or implicitly makes this assumption, and we continue to focus on i.i.d. $\varepsilon_{it}$.
Another way is to relax the i.i.d. assumption on the $\alpha_{i}$ factor in \eqref{eq:generalized_ahk}.
For instance, if a researcher obtains spatial information associated with panel data, then it may be a possibility to allow for spatial $\alpha$-mixing \citep[e.g.,][]{jenish2009central}.
We leave these extensions for future research.

Our standard errors allowing for serial correlation in time effects effectively use the Newey-West-type long-run variance estimation, but it is well known that in some cases a relatively long time series may be required for such estimators to perform well \citep[e.g.,][]{lazarus2018har}.
Inference based on moving-block bootstrap \citep[e.g.,][]{gonccalves2011moving} may improve the finite-sample performance, and we suggest it as another direction for future research.
Another promising recent proposal by \citet{chen2023fixed} is to use fixed-b asymptotic theory to derive bias corrections and improve the distributional approximation.


\appendix

\section*{Appendix}

\section{Proof of Theorem \ref{thm:variance_mean}}\label{sec:thm:variance_mean}
\begin{proof}
Following the argument in the text, without loss of generality set $\theta = 0$, and make the decompositions \eqref{eq:Xabe} and \eqref{eq:thetahat}. Since the sums in \eqref{eq:thetahat} are uncorrelated, we find
\begin{align}
var(\hat\theta)=var\left( \frac{1}{N} \sum_{i=1}^N a_i \right)+var\left( \frac{1}{T} \sum_{t=1}^T b_t \right)+var\left( \frac{1}{NT} \sum_{i=1}^N \sum_{t=1}^T e_{it} \right). 
\label{eq:varsum}
\end{align}
We take each term separately. 

First, recall that $a_i$ is i.i.d., mean zero, and has variance matrix $\Sigma_{a}$. Since $a_i=E[X_{it}\mid \alpha_i]$, an application of the conditional Jensen inequality shows that 
\begin{align*}
||\Sigma_a||\le E[||a_i||^2] = E[||E[X_{it}\mid \alpha_i]||^2] \le E[E[||X_{it}||^2\mid \alpha_i]] = E[||X_{it}||^2],
\end{align*}
under Assumption \ref{a:NW}. Hence $||\Sigma_a||<\infty$ as claimed. We calculate that
\begin{align}
var\left( \frac{1}{N} \sum_{i=1}^N a_i \right)=\frac{1}{N}\Sigma_a.
\label{eq:vara}
\end{align}

Second, by a standard variance decomposition for stationary time series,
\begin{align}
var\left( \frac{1}{T} \sum_{t=1}^T b_t \right)&=\frac{1}{T}\sum_{\ell=-(T-1)}^{T-1} \left( 1-\frac{|\ell |}{T} \right)E[b_tb_{t+\ell}'] \nonumber\\
&=\frac{1}{T}\Sigma_b\left(1+o(1)\right).
\label{eq:varb}
\end{align}
The second equality holds if the sum \eqref{eq:sigma_b} converges, which we now demonstrate. Since $b_t=E[X_{it}\mid \gamma_t]$, an application of the conditional Jensen inequality shows that $E[||b_t||^s] = E[||E[X_{it}\mid \gamma_t]||^s] \le E[E[||X_{it}||^s\mid \gamma_t]] = E[||X_{it}||^s]$ for $s \ge 1$, and in particular $E[||b_t||^{4(r+\delta)}]<\infty$. Also, as $b_t$ is a function only of $\gamma_t$, it has the same mixing coefficients. By an application of Theorem 14.13 (ii) in \cite{hansen2021econometrics}, we find 
\begin{align*}
||\Sigma_{b}||=\left|\left|\sum_{\ell=-\infty}^{\infty}E[b_tb_{t+\ell}']\right|\right|\le 8 \left(E||b_t||^{4(r+\delta)}\right)^{1/2(r+\delta)} \sum_{\ell=-\infty}^{\infty}\alpha(\ell)^{1-1/2(r+\delta)}<\infty
\end{align*}
under Assumption \ref{a:NW}. Hence $||\Sigma_b||<\infty$ as claimed, and the bound \eqref{eq:varb} follows.

Third, the law of iterated expectations implies that for $j \ne i$
\begin{align*}
E[e_{it}e_{js}']=E[E[e_{it}e_{js}' \mid \gamma_t,\gamma_s]]=E[E[e_{it} \mid \gamma_t]E[e_{js}' \mid \gamma_s]]=0 
\end{align*}
the second equality since conditional on $(\gamma_t,\gamma_s)$, $e_{it}$ is independent of $e_{js}$ for $j \ne i$. Combined with the stationarity of $e_{it}$, this implies that
\begin{align}
var\left( \frac{1}{NT} \sum_{i=1}^N \sum_{t=1}^T e_{it} \right)
&=\frac{1}{(NT)^2}\sum_{i=1}^N\sum_{t=1}^T\sum_{s=1}^T E[e_{it}e_{is}'] \nonumber\\
&=\frac{1}{NT}\sum_{\ell=-(T-1)}^{T-1} \left( 1-\frac{|\ell |}{T} \right)E[e_{it}e_{i,t+\ell}'] \nonumber\\
&=\frac{1}{NT}\Sigma_e\left(1+o(1)\right).
\label{eq:vare}
\end{align}
The final equality holds if the sum \eqref{eq:sigma_e} converges, which we now demonstrate.
Conditional on $\alpha_i$, $e_{it}$ is an $\alpha$-mixing process with the same mixing coefficients as $\gamma_t$. Furthermore, the moments of $e_{it}$ are bounded by those of $X_{it}$. 
Using iterated expectations, Jensen's inequality, the fact that $E[e_{it} \mid \alpha_i]=0$, Theorem 14.13 (ii) in \cite{hansen2021econometrics}, and again Jensen's inequality,
\begin{align*}
||E[e_{it}e_{i,t+\ell}']|| \le E||E[e_{it}e_{i,t+\ell}'\mid \alpha_i]|| \le 8 \left(E||e_{it}||^{4(r+\delta)}\right)^{1/2(r+\delta)} \alpha(\ell)^{1-1/2(r+\delta)}.
\end{align*}
This (under Assumption \ref{a:NW}) shows that 
\begin{align*}
||\Sigma_e||=\left|\left|\sum_{\ell=-\infty}^{\infty}E[e_{it}e_{i,t+\ell}']\right|\right|\le 8 \left(E||e_{it}||^{4(r+\delta)}\right)^{1/2(r+\delta)} \sum_{\ell=-\infty}^{\infty}\alpha(\ell)^{1-1/2(r+\delta)}<\infty.
\end{align*}
Thus $||\Sigma_e||<\infty$ and the bound \eqref{eq:vare} follows.

Together, \eqref{eq:varsum}-\eqref{eq:vare} establish that
\begin{align*}
var(\hat\theta)=\frac{1}{N}\Sigma_a+\frac{1}{T}\Sigma_b\left(1+o(1)\right)+\frac{1}{NT}\Sigma_e\left(1+o(1)\right)
\end{align*}
as claimed. 

Since $||\Sigma_a||<\infty$, $||\Sigma_b||<\infty$, and $||\Sigma_e||<\infty$, it follows that $var(\hat\theta)\to 0$ as $N,T\to\infty$. By Chebyshev's inequality, we deduce that $\hat\theta\stackrel{p}{\to}\theta$, completing the proof.
\end{proof}

\section{Proof of Theorem \ref{thm:main_asymptotics}}\label{sec:thm:main_asymptotics}

\begin{proof}
Without loss of generality, set $\theta = 0$. The proof is different under Assumption \ref{a:non-degenerate} parts (i) and (ii). First, take case (ii). If $X_{it}$ is i.i.d. then $\sqrt{NT}\hat \theta\stackrel{d}{\to} N(0,var(X_{it}))$ by Lyapunov's central limit theorem. Also, $var(\hat \theta)=var(X_{it})/NT$. Combining, we find the stated result. Hence, for the remainder of the proof we focus on case (i).

Using \eqref{eq:Xabe},
\begin{align}
	\sqrt{N}\hat \theta=\frac{1}{\sqrt{N}} \sum_{i=1}^N a_i+\sqrt{\frac{N}{T}}\frac{1}{\sqrt{T}} \sum_{t=1}^T b_t+\frac{1}{\sqrt{N}T} \sum_{t=1}^T e_{it}.
 \label{eq:theta_ab}
\end{align}

The first term in \eqref{eq:theta_ab} consists of a sum of the i.i.d. zero-mean random vectors $a_i$ with finite variance $\Sigma_a$. By Lyapunov's central limit theorem, we deduce
\begin{align}
\frac{1}{\sqrt{N}}\sum_{i=1}^N a_{i}\stackrel{d}{\to} N(0,\Sigma_a). \label{eq:theta_a}
\end{align}

The second term in \eqref{eq:theta_ab} consists of the $\alpha$-mixing sequence $b_t$ (see Theorem 14.12 in \citet{hansen2021econometrics}). 
Applying the central limit theorem for $\alpha$-mixing sequences \citep[cf.][Theorem 14.15]{hansen2021econometrics} under Assumption \ref{a:NW}(iii), which implies $\sum_{\ell=1}^\infty \alpha(\ell)^{1-1/2(r+\delta)}<\infty$, we deduce 
\begin{align}
\frac{1}{\sqrt{T}}\sum_{t=1}^T b_{t}\stackrel{d}{\to}N(0,\Sigma_b) \label{eq:theta_b}.
\end{align}
The asymptotic variance $\Sigma_b$ was shown finite in Theorem \ref{thm:variance_mean}. The asymptotic distributions in \eqref{eq:theta_a} and \eqref{eq:theta_b} are independent since the sequences $a_i$ and $b_t$ are independent.

Equation \eqref{eq:vare} shows that the variance of the third term in \eqref{eq:theta_ab} is $O(T^{-1})$, and hence this term is $o_p(1)$. Together, \eqref{eq:theta_ab}-\eqref{eq:theta_b} plus $N/T\to c$ show that
\begin{align}
\sqrt{N}\hat \theta \stackrel{d}{\to}& N(0,\Sigma_a)+\sqrt{c}N(0,\Sigma_b)=N(0,\Sigma),
\label{eq:normal_limit}
\end{align}
where $\Sigma \equiv \Sigma_a+c\Sigma_b$. Assumption \ref{a:non-degenerate}(i) implies that $\Sigma>0$.

Theorem \ref{thm:variance_mean} and $N/T\to c$ establish that
\begin{align*}
N\cdot var(\hat\theta)=\Sigma_a+\frac{N}{T}\Sigma_b(1+o(1))+\frac{1}{T}\Sigma_e(1+o(1))\to\Sigma.
\end{align*}

Together, 
\begin{align*}
	\left(var( \hat \theta )\right)^{-1/2}\hat \theta=\left(N\cdot var( \hat \theta )\right)^{-1/2}\sqrt{N}\hat \theta
\stackrel{d}{\to} \Sigma^{-1/2}N(0,\Sigma)=N(0,I_m).
\end{align*}
This is the stated result.
\end{proof}


\section{Proof of Theorem \ref{thm:var_consistency}}\label{sec:var_consistency}
\begin{proof}

The proof branches into the two cases, (1) and (2), of Assumption \ref{a:NW_OLS} (iv). For readibility, we defer some of the lengthy technical calculations to Lemmas \ref{lem:GNW}-\ref{lem:EW_CRVE_degeneracy} in Appendix \ref{sec:lemmas}.

First, consider the case where Assumption \ref{a:NW_OLS} (iv) (1) holds.
From Theorem \ref{thm:variance_mean}, we have
\begin{align*}
	N\Omega_{NT}=\Sigma_a+\frac{N}{T}\Sigma_b+o(1)\to\Sigma>0,
\end{align*}
where $\Sigma=\Sigma_a+c\Sigma_b$. Thus 
\begin{align}
N\Sigma_{NT}=N R'Q^{-1}\Omega_{NT} Q^{-1} R\to R'Q^{-1}\Sigma Q^{-1}R>0. 
\label{eq:NSigma}
\end{align}

Next, by combining Lemmas \ref{lem:GNW} and \ref{lem:EW_CRVE} from Appendix \ref{sec:lemmas}, we have
	\begin{align*}
	N\hat \Omega_{NT} & =\underbrace{\frac{1}{NT^2}\sum_{i=1}^N \hat R_i \hat R_i'}_{=E[a_ia_i']+o_p(1)}  + \underbrace{\frac{1}{NT^2}\sum_{t=1}^T \hat S_t \hat S_t'}_{=cE[b_tb_t']+o_p(1)}  - \underbrace{\frac{1}{NT^2}\sum_{i=1}^N\sum_{t=1}^T X_{it}X_{it}' \hat U_{it}^2 }_{=O_p(T^{-1})}\notag\\
	& + \underbrace{\frac{1}{NT^2} \sum_{m=1}^M w(m,M)\left( \hat G_m +\hat G_m' -\hat H_m -\hat H_m' \right)}_{=c\sum_{m=-\infty}^\infty E[b_t b_{t+m}']+cE[b_tb_t'] +o_p(1)  } \\
&\stackrel{p}{\to} E[a_ia_i']+cE[b_tb_t']+c\sum_{m=-\infty}^\infty E[b_t b_{t+m}']+cE[b_tb_t']\\
&=\Sigma.
	\end{align*}
Now, consider $\hat Q$. 
Setting $\theta = \text{vec}(X_{it}X_{it}')$ and $\hat\theta = \text{vec}(\hat Q)$, which satisfy the conditions of Theorem \ref{thm:main_asymptotics} under Assumption \ref{a:NW_OLS}.
By the proof of Theorem \ref{thm:main_asymptotics} and the continuous mapping theorem, we obtain $||\hat Q^{-1}-Q^{-1}||=o_p(1)$ and $\hat Q^{-1}=O_p(1)$ by Assumption \ref{a:NW_OLS} (ii). 	
Together we have established that 
\begin{align}
N\hat \Sigma_{NT}=N R'\hat Q^{-1}\hat \Omega_{NT} \hat Q^{-1} R \stackrel{p}{\to} R'Q^{-1}\Sigma Q^{-1}R. 
\label{eq:NSigmaHat}
\end{align}

Equations \eqref{eq:NSigma} and \eqref{eq:NSigmaHat} together imply that
\begin{align*}
\Sigma_{NT}^{-1}\hat \Sigma_{NT} &\stackrel{p}{\to} \left(R'Q^{-1}\Sigma Q^{-1}R\right)^{-1}\left(R'Q^{-1}\Sigma Q^{-1}R\right)= I_k.
\end{align*}
This is the stated result.

Second, consider the case where Assumption \ref{a:NW_OLS} (iv) (2) holds. 
Observe that $\Sigma=var(X_{it}U_{it})>0$ under Assumption \ref{a:NW_OLS} (iv) (2).
Then $NT\Omega_{NT} = \Sigma>0$ and
\begin{align}
NT\Sigma_{NT} =NTR'Q^{-1}\Omega_{NT} Q^{-1} R= R'Q^{-1}\Sigma Q^{-1}R>0.
\label{eq:NTSigma}
\end{align}
Lemma \ref{lem:EW_CRVE_degeneracy} in Appendix \ref{sec:lemmas} implies 
$
NT\hat \Omega_{NT}\stackrel{p}{\to} NT\Omega_{NT}=\Sigma.
$
The law of large number for i.i.d. random variables, Assumption \ref{a:NW_OLS}(ii), and the continuous mapping theorem imply that $||\hat Q^{-1}-Q^{-1}||=o_p(1)$. 
Together, this implies that
\begin{align}
NT\hat \Sigma_{NT}=NT R'\hat Q^{-1}\hat \Omega_{NT} \hat Q^{-1} R \stackrel{p}{\to} R'Q^{-1}\Sigma Q^{-1}R. 
\label{eq:NTSigmaHat}
\end{align}
Equations \eqref{eq:NTSigma} and \eqref{eq:NTSigmaHat} together imply that
\begin{align*}
\Sigma_{NT}^{-1}\hat \Sigma_{NT} &\stackrel{p}{\to} \left(R'Q^{-1}\Sigma Q^{-1}R\right)^{-1}\left(R'Q^{-1}\Sigma Q^{-1}R\right)= I_k.
\end{align*}
This is the stated result and completes the proof.
\end{proof}


\section{Proof of Theorem \ref{thm:OLS}}\label{sec:cor_OLS}
\begin{proof}
Assumption \ref{a:NW_OLS} (iv) imposes either non-singular clustered dependence (1) or i.i.d. dependence (2). Under the latter the result is classical; hence we focus on condition (1). Some algebra reveals that
\begin{align}
\sqrt{N}(\hat \theta - \theta) &=\sqrt{N}R'\left(\frac{1}{NT}\sum_{i=1}^N\sum_{t=1}^T X_{it} X_{it}'\right)^{-1} \left(\frac{1}{NT}\sum_{i=1}^N\sum_{t=1}^T X_{it} U_{it}\right)\nonumber\\
&=R'Q^{-1}\frac{1}{\sqrt{N}T}\sum_{i=1}^N\sum_{t=1}^T X_{it}U_{it}-R'\hat Q^{-1}\left(\hat Q-Q\right)Q^{-1}\left(\frac{1}{\sqrt{N}T}\sum_{i=1}^N\sum_{t=1}^T X_{it} U_{it}\right).
\label{eq:thetaV}
\end{align}
The first term in \eqref{eq:thetaV} is a self-normalized sample mean in the random vectors $X_{it}U_{it}$, which satisfy the conditions of Theorem \ref{thm:main_asymptotics}. Consequently, for $\Sigma=\Sigma_a+c\Sigma_b$, the first term in \eqref{eq:thetaV} satisfies
\begin{align*}
R'Q^{-1}\frac{1}{\sqrt{N}T}\sum_{i=1}^N\sum_{t=1}^T X_{it} U_{it}\stackrel{d}{\to}R'Q^{-1}N(0,\Sigma)=N(0,R'Q^{-1}\Sigma Q^{-1}R),
\end{align*}
as shown in \eqref{eq:normal_limit}. 

Recall that $\hat Q$ is the sample average of the variables $X_{it}X_{it}'$, which satisfy the conditions of Theorem 2. It follows that $||\hat Q-Q||=o_p(1)$ and $\hat Q^{-1}=O_p(1)$. Consequently, the second term in \eqref{eq:thetaV} is $o_p(1)$. Together, we deduce that
\begin{align}
\sqrt{N}(\hat \theta - \theta) \stackrel{d}{\to}N(0,R'Q^{-1}\Sigma Q^{-1}R).
\label{eq:thetahatnormal}
\end{align}

From Theorem \ref{thm:variance_mean}, 
\begin{align*}
N\Omega_{NT}=\Sigma_a+\frac{N}{T}\Sigma_b+o(1)\to\Sigma.
\end{align*}
Thus we find that
\begin{align}
N\Sigma_{NT}\to R'Q^{-1}N\Omega_{NT}Q^{-1}R.
\label{eq:SigmaNT}
\end{align}

Together, \eqref{eq:thetahatnormal} and \eqref{eq:SigmaNT} imply that
\begin{align*}
\Sigma_{NT}^{-1/2}(\hat \theta - \theta) &=(N\Sigma_{NT})^{-1/2}\sqrt{N}(\hat \theta - \theta)\\
&\stackrel{d}{\to}(R'Q^{-1}N\Omega_{NT}Q^{-1}R)^{-1/2}N(0,R'Q^{-1}\Sigma Q^{-1}R)\\
&=N(0,I_m).
\end{align*}
This is \eqref{eq:thetadist1}.

Equation \eqref{eq:thetadist2} follows by combining \eqref{eq:thetadist1} with Theorem \ref{thm:var_consistency}.
\end{proof}

	\section{Proof of Theorem \ref{thm:TWFE}}
	\label{sec:TWFE}
\begin{proof}
	
	We first consider the case of non-degeneracy. 
	Since $X_{it}- E[X_{it}|\gamma_t]$ is independent across $i$ conditionally on $(\gamma_t)_{t=1}^T $ and $\|X_{it}\|_\infty \le K$, 
	Theorem 2.14.1 in \cite{vdVW1996} yields
	\begin{align*}
	&E	\left[\max_{t=1,...,T}\left|\frac{1}{N} \sum_{i'=1}^N X_{it} - E[X_{it}|\gamma_t]\right||(\gamma_t)_{t=1}^T\right]
	\lesssim \sqrt{\frac{K^2\log T}{N}}.
	\end{align*}
	Integrating out both sides using Fubini's theorem,
	we obtain
	\begin{align}\label{eq:tw_proof_gamma}
	&E	\left[\max_{t=1,...,T}\left|\frac{1}{N} \sum_{i'=1}^N X_{it} - E[X_{it}|\gamma_t]\right|\right]
	\lesssim \sqrt{\frac{K^2\log T}{N}}=o(1).
	\end{align}

	We are now going to show
	\begin{align}\label{eq:tw_proof_alpha}
	E\left[\max_{i=1,...,N}\left|\frac{1}{T} \sum_{t'=1}^T X_{it'} - E[X_{it}|\alpha_i]\right|\right]=o(1).
	\end{align}
	We use Bernstein's big-block-small-block argument (for example, see Step 1 in the Proof of Theorem E.1 in \citealt{chernozhukov2019inference}) to show \eqref{eq:tw_proof_alpha}.
	Specifically,
	let $q=q_T\sim T^{3/4}$, $s=s_T\sim T^{1/3}$, and $m=T/(q+s)$ be positive sequences of integers satisfying $q+s\le T/2$. 
	It immediately follows that $q,s\to\infty$, $q=o(T)$,  $s^2/T=o(1)$, $q^{-1}s^2\log N=o(1)$, and $m\sim T^{1/4}$.
	Define $I_1=\{1,...,q\}$, $J_1=\{q+1,...,q+s\},$ ...$I_m=\{(m-1)(q+s)+1,...,m(q+s)\}$, $J_m=\{(m-1)(q+s)+q+1,...,m(q+s)\}$, $J_{m+1}=\{m(q+s)+1,...,T\}$. 
	As $\lambda>2r/(r-1) >2$, we have
	\begin{align*}
	m \beta(s)= m O(s^{-\lambda}) =o(1),
	\end{align*}
	where $\beta(\cdot)$ is the $\beta$-mixing coefficient.
	The integers, $q$ and $s$, will serve as the lengths of big and small blocks, respectively, and $m$ is the number of big blocks. 
	Now, for each $i=1,...,N$, one has the decomposition
	\begin{align}
	\frac{1}{\sqrt{T}}\sum_{t=1}^T \{X_{it}-E[X_{it}|\alpha_i]\}=\frac{1}{\sqrt{T}}\sum_{l=1}^m  L_{il}+\frac{1}{\sqrt{T}}\sum_{l=1}^m  S_{il} + \frac{1}{\sqrt{T}}S_{i,(m+1)},\label{eq:tw_proof_decomposition}\\
	\text{where }	L_{il}=\sum_{t\in I_l} \{X_{it}-E[X_{it}|\alpha_i]\},\quad 	S_{il}=\sum_{t\in J_l} \{X_{it}-E[X_{it}|\alpha_i]\}.
	\notag
	\end{align}
	$L_{il}$ (respectively, $S_{il}$) equals the sum over a big block (respectively, small block).
	Define $(\hat L_{il})_{l=1}^m$ and $(\hat S_{il})_{l=1}^{m+1}$ to be the decoupled copies of $( L_{il})_{l=1}^m$ and $( S_{il})_{l=1}^{m+1}$, respectively.
	That is, they are two independent sequences of random vectors such that
	\begin{align*}
	\hat L_{il}\overset{d}{=} L_{il} \text{ for } l \in \{1,\ldots,m\} \quad\text{ and }\quad \hat S_{il}\overset{d}{=} S_{il} \text{ for } l \in \{1,\ldots,m+1\}
	\end{align*}
	conditionally on $(\alpha_i)_i$.
	We now claim that, for any $y\in \mathbb R$, it holds that
	\begin{align}
	&P\left( \max_{1\le i\le N}\frac{1}{\sqrt{T}}\sum_{l=1}^m \hat L_{il}\le y- o(1) \mid (\alpha_i)_i\right)- o(1) \nonumber\\
	\le& P\left(\max_{1\le i\le N}\frac{1}{\sqrt{T}}\sum_{t=1}^T \{X_{it}-E[X_{it}|\alpha_i]\}\le y \mid (\alpha_i)_i\right)\nonumber\\
	\le& P\left( \max_{1\le i\le N}\frac{1}{\sqrt{T}}\sum_{l=1}^m \hat L_{il}\le y+o(1) \mid (\alpha_i)_i\right)+ o(1) .\label{eq:decoupling}
	\end{align}
	We will prove only the second inequality as the first one follows from a mirrored argument.
	By \eqref{eq:tw_proof_decomposition}, we have
	\begin{align}\label{eq:tw_proof_decomposition2}
	&\max_{i} \frac{1}{\sqrt{T}} \sum_{t=1}^T \{X_{it}-E[X_{it}|\alpha_i]\}
	\\
	\le&
	\left|	\max_{1\le i \le N} \frac{1}{\sqrt{T}} \sum_{t=1}^T \{X_{it}-E[X_{it}|\alpha_i]\} - \max_{1\le i \le N} \frac{1}{\sqrt{T}}\sum_{l=1}^m L_{il}\right|+\max_{1\le i \le N} \frac{1}{\sqrt{T}}\sum_{l=1}^m L_{il}\nonumber\\
	\le&
	\left|	\max_{1\le i \le N} \frac{1}{\sqrt{T}} \sum_{l} S_{il}\right|+ \left|\max_{1\le i \le N} \frac{1}{\sqrt{T}} S_{i,(m+1)}\right|+\max_{1\le i \le N} \frac{1}{\sqrt{T}}\sum_{l=1}^m L_{il}.
	\end{align}
	Applying Corollary 2.7 in \cite{yu1994rates}, we have
	\begin{align}
	&\sup_{y\in \mathbb R}\left|P\left(\max_{1\le i\le N} \sum_{l=1}^m L_{il}\le y\mid (\alpha_i)_i\right)-P\left(\max_{1\le i\le N} \sum_{l=1}^m \hat L_{il}\le y\mid (\alpha_i)_i\right)\right|\le (m-1)\beta(s),
	\label{eq:tw_proof_yu1}
	\\
	&\sup_{y>0}\left|P\left(\max_{1\le i\le N} \left|\sum_{l=1}^m S_{il}\right|> y\mid (\alpha_i)_i\right)-P\left(\max_{1\le i\le N} \left|\sum_{l=1}^m \hat S_{il}\right|> y\mid (\alpha_i)_i\right)\right|\le (m-1)\beta(q).
	\label{eq:tw_proof_yu2}
	\end{align}
	Therefore, for every $\delta_1,\delta_2>0$, \eqref{eq:tw_proof_decomposition} yields
	\begin{align}
	&P\left(\max_{1\le i\le N}\frac{1}{\sqrt{T}}\sum_{t=1}^T \{X_{it}-E[X_{it}|\alpha_i]\}\le y\mid (\alpha_i)_i\right) \notag\\
	\le&
	P\left(\max_{1\le i\le N}\frac{1}{\sqrt{T}}\sum_{l=1}^m \hat L_{il}\le y +\delta_1+\delta_2\mid (\alpha_i)_i\right)
	+P\left(\max_{1\le i\le N}\left|\frac{1}{\sqrt{T}}\sum_{l=1}^m \hat S_{il}\right|> \delta_1\mid (\alpha_i)_i\right) \notag\\
	&+P\left(\max_{1\le i\le N}\left|\frac{1}{\sqrt{T}} S_{i,(m+1)}\right|> \delta_2\mid (\alpha_i)_i\right)+ 2 (m-1)\beta(s) \notag\\
	=:&(i)+(ii)+(iii)+(iv).
	\label{eq:tw_proof_decomposition3}
	\end{align}
	
	Recall that $m\beta(s)=o(1)$, and thus $(iv)=o(1)$.
	Second, we bound the term $(iii)$ in \eqref{eq:tw_proof_decomposition3} by noting that $S_{i,(m+1)}$ consists of a sum over $s$ terms, each of which is bounded in modulus by $2K$. 
	Thus
	\begin{align*}
	\max_{1\le i \le N}	\left|\frac{1}{\sqrt{T}} S_{i,(m+1)}\right|\lesssim \frac{2sK}{\sqrt{T}}=o(1),
	\end{align*}
	and hence $(iii)=o(1)$ for any $\delta_2>0$. 
	
	Next, we bound the term $(ii)$ in \eqref{eq:tw_proof_decomposition3}. Since $\|X_{it}\|_\infty\le K$ and $\hat S_{il}$ is the sum of $s$ terms, it follows that $\|\hat S_{il}\|_\infty \le 2sK$.
	Then, 
	applying Theorem 2.14.1. in \cite{vdVW1996} then yields
	\begin{align*}
	 E\left[\max_{1\le i\le N}\left|\frac{1}{\sqrt{T}}\sum_{l=1}^m \hat S_{il}\right|\mid (\alpha_i)_i\right]\lesssim&
	\sqrt{\frac{s^2\log N}{q}}=o(1).
	\end{align*}
	Markov's inequality then implies that the term $(ii)$ is $o(1)$.
	
	As $\delta_1, \delta_2$ are arbitrary, this verifies Equation (\ref{eq:decoupling}).
	
	Now, by the independence of $(\hat L_{il})_{l=1}^m$ and $\|\hat L_{il}\|_\infty\le 2q K$, Theorem 2.14.1 in \cite{vdVW1996} can be applied to yield that
	\begin{align*}
	E\left[\max_{1\le i \le N}\left|\frac{1}{\sqrt{T}}\sum_{l=1}^m \hat L_{il}\right|\right] \lesssim \sqrt{\frac{m}{T}}\cdot\sqrt{q^2\log N} \lesssim\sqrt{q\log N}.
	\end{align*}
	In the light of
	\begin{align*}
	\max_{1\le i \le N} \frac{1}{\sqrt{T}}\sum_{t=1}^T \{X_{it}-E[X_{it}|\alpha_i]\}\stackrel{d}{=}	\max_{1\le i \le N} \frac{1}{\sqrt{T}}\sum_{l=1}^m \hat L_{il},
	\end{align*}
	implied by  Equation (\ref{eq:decoupling}),
	we now conclude
	\begin{align*}
	E\left[\max_{i=1,...,N}\left|\frac{1}{T} \sum_{t'=1}^T X_{it'} - E[X_{it}|\alpha_i]\right|\right]\lesssim \sqrt{\frac{q\log N}{T}}=o(1).
	\end{align*}

	Note that  
	\begin{align}\label{eq:tw_proof_convrate}
	\frac{1}{NT}\sum_{i=1}^N\sum_{t=1}^T X_{it}=E[X_{it}]+O\left(\frac{1}{\sqrt{N\wedge T}}\right)
	\end{align}
	by our Theorem \ref{thm:main_asymptotics}.
	
	By combining the above uniform rates, under non-degeneracy, we have
	\begin{align*}
	\frac{1}{NT}\sum_{i=1}^N\sum_{t=1}^T \ddot{X}_{it} U_{it}
	=&\frac{1}{NT}\sum_{i=1}^N\sum_{t=1}^T \left(X_{it}-\frac{1}{N}\sum_{i'=1}^N X_{i't} - \frac{1}{T}\sum_{t'=1}^TX_{it'}+\frac{1}{NT}\sum_{i'=1}^N\sum_{t'=1}^TX_{i't'}\right) 
	U_{it}\\
	=&\frac{1}{NT}\sum_{i=1}^N\sum_{t=1}^T \left(X_{it}-E[X_{it}|\gamma_t]- E[X_{it}|\alpha_i]+E[X_{it}]\right)U_{it}  +o_p(1)\cdot\frac{1}{NT}\sum_{i=1}^N\sum_{t=1}^T U_{it}\\
	=&\frac{1}{NT}\sum_{i=1}^N\sum_{t=1}^T\tilde X_{it} U_{it}+ o_p(1)\cdot\frac{1}{NT}\sum_{i=1}^N\sum_{t=1}^T U_{it}\\
	=&\frac{1}{NT}\sum_{i=1}^N\sum_{t=1}^T\tilde X_{it} U_{it}+ o_p\left(\frac{1}{\sqrt{N\wedge T}}\right),
	\end{align*}
	where the first equality follows by the definition of $\ddot X_{it}$,
	the second equality follows by \eqref{eq:tw_proof_gamma}, \eqref{eq:tw_proof_alpha} and \eqref{eq:tw_proof_convrate},
	the third equality follows by the definition of $\tilde X_{it}$, and
	the fourth equality uses $(NT)^{-1}\sum_{i=1}^N\sum_{t=1}^T U_{it}=O_p((N\wedge T)^{-1/2})$ which follows from our Theorem \ref{thm:main_asymptotics}.
	
	Following a similar decomposition and a crude calculation, we have 
	\begin{align*}
	\frac{1}{NT}\sum_{i=1}^N\sum_{t=1}^T \ddot X_{it}\ddot X_{it}'=\frac{1}{NT}\sum_{i=1}^N\sum_{t=1}^T \tilde X_{it}\tilde X_{it}'+o_p(1).
	\end{align*}

	We have shown that
	\begin{align*}
	\sqrt{N\wedge T}(\hat \beta - \beta) =\left(\frac{1}{NT}\sum_{i=1}^N \sum_{t=1}^T\tilde X_{it}\tilde X_{it}'\right)\frac{1}{\sqrt{N\wedge T}}\sum_{i=1}^N \sum_{t=1}^T\tilde X_{it} U_{it} + o_p(1).
	\end{align*}
	Note that the sums have summands of the forms, $\tilde X_{it} \tilde X_{it}'$ and $\tilde X_{it} U_{it}$, which only depend on $(\alpha_i,\gamma_t,\varepsilon_{it})$.
	Thus, our Theorems \ref{thm:main_asymptotics} and \ref{thm:var_consistency} can be applied. 
	By replicating the Proof of Theorem \ref{thm:OLS}, we have the desired result for the non-degenerate case.

	Similarly, under i.i.d. sampling, by applying Theorem 2.14.1 in \cite{vdVW1996}, we have
	\begin{align*}
	E	\left[\max_{t=1,...,T}\left|\frac{1}{N} \sum_{i'=1}^N X_{it} - E[X_{it}|\gamma_t]\right|\right]
	\bigvee
	E\left[\max_{i=1,...,N}\left|\frac{1}{T} \sum_{t'=1}^T X_{it'} - E[X_{it}|\alpha_i]\right|\right]
	\\
	=O\left(\sqrt{\frac{\log N T}{N\wedge T}}\right)=o(1),
	\end{align*}
	and thus
	\begin{align*}
	\frac{1}{NT}\sum_{i=1}^N\sum_{t=1}^T \ddot{X}_{it} U_{it}=&\frac{1}{NT}\sum_{i=1}^N\sum_{t=1}^T \left(X_{it} -\frac{1}{N}\sum_{i'=1}^NX_{i't} -\frac{1}{T}\sum_{t'=1}^TX_{it'}+\frac{1}{NT}\sum_{i'=1}^N\sum_{t'=1}^TX_{i't'}\right) U_{it}\\
	=&
	\frac{1}{NT}\sum_{i=1}^N\sum_{t=1}^T \tilde X_{it}U_{it}  +o_p(1)\cdot \frac{1}{NT}\sum_{i=1}^N\sum_{t=1}^T U_{it}
	\\
	=&\frac{1}{NT}\sum_{i=1}^N\sum_{t=1}^T\tilde X_{it}U_{it} + o_p\left(\frac{1}{\sqrt{NT}}\right),
	\end{align*}
	where the first sum has mean zero and has its summands depending only on $\varepsilon_{it}$ and thus is i.i.d. over $i$ and $t$. 
	The rest follows from the same arguments in the non-degenerate case.
\end{proof}

\setcounter{section}{5}
\section{Technical Lemmas}\label{sec:lemmas}
This section contains key technical lemmas for consistency of variance estimation under the current asymptotic setting. 
Throughout this section, for any $a,b\in\mathbb R^+\cup\{0\}$, we use the short-hand notation $a\lesssim b$ to indicate $a\le C b$ for some $C<\infty$ independent of $(N,T)$. 
\subsection{Generalized Newey-West Estimator under Non-Degeneracy}

\begin{lemma}[Generalized Newey-West Estimator under Non-Degeneracy]\label{lem:GNW}
	If Assumption \ref{a:NW_OLS} holds with (iv)(1), then
	\begin{align*}
	\frac{1}{N^2 T}\sum_{m=1}^{M}w(m,M)(\hat G_m' -\hat H_m')
	&=\frac{1}{N^2 T}\sum_{m=1}^{M}w(m,M)\sum_{t=m+1}^{T} \sum_{i=1}^N\sum_{i'\ne i}^N  X_{it}\hat U_{it}\hat U_{i',t-m}X_{i',t-m}'\\ &\stackrel{p}{\to} \sum_{m=1}^{\infty}  E[X_{it}U_{it}U_{i',t-m}X_{i',t-m}'] \\
	&=\sum_{m=1}^\infty E[b_t b_{t-m}']
	\end{align*}
	for $i\ne i'$.
\end{lemma}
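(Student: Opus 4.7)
My plan is to decompose the score $Z_{it} := X_{it}U_{it}$ by projection onto the time effects, writing $Z_{it} = b_t + \tilde Z_{it}$ where $b_t = E[Z_{it}\mid\gamma_t]$ and $\tilde Z_{it} := Z_{it} - b_t$. By construction $E[\tilde Z_{it}\mid(\gamma_s)_{s=1}^T]=0$, and because the sequences $\alpha_i$ and $\varepsilon_{it}$ are mutually independent across $i$, the variables $\{\tilde Z_{it}\}_{i=1}^N$ are, conditionally on $(\gamma_s)_s$, independent across $i$ for each fixed $t$ (while they may be correlated across $t$ for the same $i$ through $\alpha_i$). This conditional independence is the structural property that makes the cross-$i$ sums tractable.

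The proof would then proceed in three steps. First, I would replace $\hat U_{it}$ by $U_{it}$. Writing $\hat U_{it}=U_{it}-X_{it}'(\hat\beta-\beta)$ and using $\hat\beta-\beta=O_p((N\wedge T)^{-1/2})$, which follows from Theorem~\ref{thm:main_asymptotics} applied to the vector $X_{it}U_{it}$ (so this step is not circular with Theorem \ref{thm:var_consistency}), I would expand the product $X_{it}\hat U_{it}\hat U_{i',t-m}X_{i',t-m}'$ into the infeasible analogue plus several remainders, and bound each remainder by Cauchy--Schwarz together with the moment bounds in Assumption~\ref{a:NW_OLS}(ii) and the truncation rate $M=o(\sqrt{N\wedge T})$ of Assumption~\ref{a:NW_OLS}(vi), so that the total error is $o_p(1)$. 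Second, for the infeasible version with true $U_{it}$, I would substitute $Z_{it}=b_t+\tilde Z_{it}$ into $\sum_{i\ne i'}Z_{it}Z_{i',t-m}'$ to isolate the main term
\begin{align*}
\frac{N(N-1)}{N^2T}\sum_{m=1}^M w(m,M)\sum_{t=m+1}^T b_t b_{t-m}'
\end{align*}
plus three remainders. Since $b_t$ is a measurable function of $\gamma_t$ alone it inherits $\beta$-mixing of size $2r/(r-1)$ and satisfies $E\|b_t\|^{4(r+\delta)}<\infty$ by conditional Jensen; the main term converges in probability to $\sum_{m=1}^\infty E[b_tb_{t-m}']$ by the standard Newey--West-type argument, using that $w(m,M)\to 1$ pointwise, $T^{-1}\sum_t b_tb_{t-m}'\to E[b_tb_{t-m}']$ at each fixed $m$ by the ergodic theorem, and $\sum_m\|E[b_tb_{t-m}']\|<\infty$ from Theorem 14.13 of \cite{hansen2021econometrics}, exactly as used in the proof of Theorem~\ref{thm:variance_mean}.

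It remains to handle the three remainders, and here lies the main obstacle. The two linear remainders take the form $\text{cross}_1=\frac{N-1}{N^2T}\sum_m w(m,M)\sum_t b_tW_{t-m}'$ with $W_s=\sum_{i'}\tilde Z_{i's}$, and symmetrically. Both have mean zero because $E[\tilde Z_{i's}\mid\gamma]=0$ and $b_t$ is $\gamma$-measurable; conditioning on $\gamma$ and using independence of $\tilde Z_{i's}$ across $i'$ gives $E\|W_s\|^2\lesssim N$, whence $E\|\text{cross}_1\|\lesssim M/\sqrt N=o(1)$ by Assumption~\ref{a:NW_OLS}(vi). The genuinely difficult piece is the quadratic remainder $\text{quad}=\frac{1}{N^2T}\sum_m w(m,M)\sum_t\sum_{i\ne i'}\tilde Z_{it}\tilde Z_{i',t-m}'$, whose variance is a fourth-order sum over $(t,t',m,m',i,i',j,j')$ of $E[\tilde Z_{it}\tilde Z_{i',t-m}'\tilde Z_{j,t'}\tilde Z_{j',t'-m'}']$. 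My plan is to first take conditional expectation given $(\gamma_s)_s$ to collapse the $(i,i',j,j')$ combinatorics via cross-$i$ independence (so only pairings with $\{i,i'\}=\{j,j'\}$ contribute), leaving a fourth-order sum purely in the time indices of $\beta$-mixing sequences derived from $\gamma$, to which I would apply the Yoshihara \citeyearpar{yoshihara1976limiting} fourth-order moment inequality. This yields $\text{Var}(\text{quad})=O(M^2/(N^2T))=o(1)$ under Assumption~\ref{a:NW_OLS}(iii) and (vi), and is precisely the step for which $\alpha$-mixing had to be strengthened to $\beta$-mixing. Combining all the pieces gives the claimed convergence.
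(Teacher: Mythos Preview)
Your plan is sound and would go through, but it takes a genuinely different route from the paper. The key difference is the direction of projection. You project the individual score $Z_{it}=X_{it}U_{it}$ onto $\gamma_t$, writing $Z_{it}=b_t+\tilde Z_{it}$, and then expand the cross-$i$ product into a main term plus linear and quadratic remainders. The paper instead keeps the product $X_{it}U_{it}U_{i',t-m}X_{i',t-m}$ intact and projects it onto $(\alpha_i)_{i=1}^N$: one piece is handled, conditionally on $(\alpha_i)_i$, by $L^2$-mixingale bounds for the mixing time series (Davidson's Theorem 14.2 and \cite{hansen1992consistent}, Lemma A), yielding a rate $M^2/T$; the other piece is a function of $(\alpha_i)_i$ alone and is controlled by the i.i.d.\ structure, yielding $M^2/N$. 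Notably, the paper does \emph{not} invoke Yoshihara in this lemma at all---the fourth-order Yoshihara machinery is reserved for Lemma~\ref{lem:EW_CRVE}, where the within-$i$ time sums $\sum_{t,t'}X_{it}U_{it}U_{it'}X_{it'}$ genuinely require a degenerate $U$-statistic bound.

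Two remarks on your quadratic remainder. First, your proposed Yoshihara step is more delicate than you indicate: after collapsing the $i$-indices you face $\sum_{t,t'}\sum_{m,m'}E[h(\gamma_t,\gamma_{t'})h(\gamma_{t-m},\gamma_{t'-m'})]$, whose index pattern $(t,t',t-m,t'-m')$ with $m,m'\le M$ is not the free four-fold sum to which Yoshihara's Lemma~2 applies directly; some additional bookkeeping is needed. Second, and more to the point, Yoshihara is unnecessary here. At each fixed $(t,m)$ the conditional independence across $i$ you already exploit gives $E\big|\sum_{i\ne i'}\tilde Z_{it}\tilde Z_{i',t-m}'\big|\le\big(2N(N-1)E[\|\tilde Z_{it}\|^2\|\tilde Z_{i',t-m}\|^2]\big)^{1/2}=O(N)$, whence $E\|\mathrm{quad}\|=O(M/N)=o(1)$ by a first-moment bound alone. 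So your decomposition works, but with this simplification it becomes comparable in effort to the paper's $\alpha$-projection; what the paper's approach buys is that the two rates $M^2/T$ and $M^2/N$ fall out of second-moment time-series and i.i.d.\ calculations without ever touching a four-index sum.
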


\begin{proof}
	A sequence of symmetric matrices $A_n$ converges to a symmetric matrix $A_0$ if and only if $b'A_nb\to b'A_0b$ for all comfortable $b$. Therefore, it suffices to assume without the loss of generality that $k=1$.
	
	First notice that by the law of total covariance as well as Assumption \ref{a:NW_OLS}(i), for $m=1,...,M$ and $T=m+1,...,T$, it holds that
	\begin{align*}
	&E[X_{it} U_{it} U_{i't-m} X_{i',t-m}]\\
	=&  cov (X_{it}U_{it},X_{i',t-m}U_{i',t-m})\\
	=&cov (E[X_{it}U_{it}\mid \gamma_t,\gamma_{t-m}],E[X_{i',t-m}U_{i',t-m}\mid \gamma_t,\gamma_{t-m}] )
	+E[cov( X_{i,t}U_{i,t},X_{i',t-m}U_{i',t-m}\mid \gamma_{t},\gamma_{t-m})]\\
	=& cov (E[X_{it}U_{it}\mid \gamma_t],E[X_{i,t-m}U_{i,t-m}\mid \gamma_{t-m}] )+0
	=E[b_tb_{t-m}],
	\end{align*}
	where the second to the last equality follows from the independence of $\alpha_i$.
	This verifies the last equality on the right hand side of the statement.
	To show the convergence in probability, consider the decomposition
	\begin{align}
	&\left|\sum_{m=1}^{M}\frac{w(m,M)}{N^2 T} \sum_{t=m+1}^{T} \sum_{i=1}^N\sum_{i'\ne i}^N  X_{it}\hat U_{it}\hat U_{i',t-m}X_{i',t-m} - \sum_{m=1}^{\infty}  E[X_{it}U_{it}U_{i',t-m}X_{i',t-m}]\right|\nonumber\\
	\le 
	&\left|\sum_{m=1}^{M}\frac{w(m,M)}{N^2 T} \sum_{t=m+1}^{T} \sum_{i=1}^N\sum_{i'\ne i}^N \left\{X_{it}\hat U_{it}\hat U_{i',t-m}X_{i',t-m} -   X_{it} U_{it} U_{i',t-m}X_{i',t-m} \right\}\right|\nonumber\\
	&+
	\left|\sum_{m=1}^{M}\frac{w(m,M)}{N^2 T}\sum_{t=m+1}^{T}  \sum_{i=1}^N\sum_{i'\ne i}^N \left\{  X_{it} U_{it} U_{i',t-m}X_{i',t-m}-E[ X_{it} U_{it} U_{i',t-m}X_{i',t-m}]\right\}\right|\nonumber\\
	&+\left|\sum_{m=1}^{M}\frac{1}{N^2 T}|w(m,M)-1| \sum_{t=m+1}^{T} \sum_{i=1}^N\sum_{i'\ne i}^N E[ X_{it} U_{it} U_{i',t-m}X_{i',t-m}]\right|\nonumber\\
	&+\left|\sum_{m= M+1}^{\infty}\frac{1}{N^2 T}\sum_{t=m+1}^{T}  \sum_{i=1}^N\sum_{i'\ne i}^N E[ X_{it} U_{it} U_{i',t-m}X_{i',t-m}]\right|+o_p(1)\nonumber\\
	=:&(1)+(2)+(3)+(4)+o_p(1).\label{eq:GNW_decomp}
	\end{align}
	Note that we have used the fact that $$\sum_{m=1}^{\infty}  E[X_{it}U_{it}U_{i',t-m}X_{i',t-m}]=\frac{1}{N^2 T}\sum_{m=1}^{\infty} \sum_{t=m+1}^{T} \sum_{i=1}^N\sum_{i'\ne i}^N E[X_{it} U_{it} U_{i',t-m}X_{i',t-m}]+o(1)$$
	under Assumption \ref{a:NW_OLS} (i).
	It suffices to show that each of the four terms, (1)--(4), is asymptotically negligible.
	
	First, consider term $(2)$.
	Define
	\begin{align*}
	Z_{tm}&=\frac{1}{N^2}\sum_{i=1}^N\sum_{i'\ne i}^N \left\{   X_{it} U_{it} U_{i',t-m}X_{i',t-m}-E[ X_{it} U_{it} U_{i',t-m}X_{i',t-m}\mid (\alpha_i)_{i=1}^N]\right\} \qquad\text{and}\\
	\tilde Z_{tm}&=\frac{1}{N^2}\sum_{i=1}^N\sum_{i'\ne i}^N \left\{  E[X_{it} U_{it} U_{i',t-m}X_{i',t-m}\mid (\alpha_i)_{i=1}^N]-E[ X_{it} U_{it} U_{i',t-m}X_{i',t-m}]\right\}
	\end{align*}
	for each $N$, $t$ and $m$.
	With this notation, we can bound term $(2)$ as
	\begin{align}\label{eq:further_decomposition2}
	(2)	\le \left|\sum_{m=1}^{M}\frac{ w(m,M)}{T}\sum_{t=m+1}^T Z_{tm}\right|+\left|\sum_{m=1}^{M}\frac{ w(m,M)}{T}\sum_{t=m+1}^T \tilde Z_{tm}\right|.
	\end{align}
	
	Consider the first term on the right-hand side of \eqref{eq:further_decomposition2}.
	Observe that $E[Z_{tm}\mid (\alpha_i)_{i=1}^N]=0$. By Theorem 14.2 in \cite{davidson1994stochastic} with $r=2(r+\delta)$ (where $r$ on the left-hand side is in terms of the notation by \cite{davidson1994stochastic}, and $r$ and $\delta$ on the right-hand side satisfy our Assumption \ref{a:NW_OLS}) and $p=2$,
	\begin{align*}
	\left\{	E\left[ \left|E\left[ Z_{tm} \mid (\alpha_i)_{i=1}^N, \calF_{-\infty}^{t-\ell}  \right] \right|^{2}\mid (\alpha_i)_{i=1}^N\right]\right\}^{1/2} \le  6 \alpha(\ell)^{1/2-1/2(r+\delta)} \left\{ E[|Z_{tm}|^{2(r+\delta)}\mid (\alpha_i)_{i=1}^N] \right\}^{1/2(r+\delta)} 
	\end{align*}
	almost surely.
	Then, by Lemma A in \cite{hansen1992consistent} with $\beta=2$,
	\begin{align*}
	\left\{ E\left[ \left|\frac{1}{T}\sum_{t=m+1}^{T}Z_{tm}\right|^2\mid (\alpha_i)_{i=1}^N\right] \right\}^{1/2}
	&\lesssim \frac{1}{T} \sum_{\ell=1}^\infty \alpha(\ell)^{1/2-1/2(r+\delta)} \left\{\sum_{t=m+1}^T \left( E[|Z_{tm}|^{2(r+\delta)}\mid (\alpha_i)_{i=1}^N] \right)^{1/(r+\delta)} \right\}^{1/2}\\
	&\lesssim T^{-1/2}  \left\{ E[|Z_{tm}|^{2(r+\delta)}\mid (\alpha_i)_{i=1}^N] \right\}^{1/2(r+\delta)}
	\end{align*}
	for each $m\ge 0$. 
	Here, we have used the boundedness $\sum_{\ell=1}^\infty \alpha(\ell)^{1/2-1/2(r+\delta)} <\infty$ implied by Assumption \ref{a:NW_OLS} (iii). 
	By Minkowski's inequality and the inequality obtained above, we have
	\begin{align*}
	&\frac{T^{1/2}}{M} \left\{E\left[\left| \sum_{m=1}^{M} \frac{w(m,M)}{T}\sum_{t=m+1}^T  Z_{tm}\right|^{2}\mid (\alpha_i)_{i=1}^N\right]\right\}^{1/2}\\
	&\le
	\frac{T^{1/2}}{M} \sum_{m=1}^{M} |w(m,M)|   \left\{ E\left[ \left|\frac{1}{T}\sum_{t=m+1}^{T}Z_{tm}\right|^2\mid (\alpha_i)_{i=1}^N\right] \right\}^{1/2}\\
	& \lesssim 
	\left\{ E[|Z_{tm}|^{2(r+\delta)}\mid (\alpha_i)_{i=1}^N] \right\}^{1/2(r+\delta)}
	\end{align*}
	uniformly in $T$.
	By Markov's inequality, for any $\varepsilon>0$,
	\begin{align*}
	P\left(\left|\sum_{m=1}^{M}\frac{ w(m,M)}{T}\sum_{t=m+1}^T Z_{tm}\right|>\varepsilon \mid (\alpha_i)_{i=1}^N\right)
	=O\left( E\left[\left| \sum_{m=1}^{M} \frac{w(m,M)}{T}\sum_{t=m+1}^T  Z_{tm}\right|^{2}\mid (\alpha_i)_{i=1}^N\right]\right)
	\end{align*}
	almost surely.
	Thus, by Fubini theorem, Jensen's inequality, and the bounded moment $\left\{ E[|Z_{tm}|^{2(r+\delta)}] \right\}^{1/2(r+\delta)}<\infty$
	following Assumption \ref{a:NW_OLS} (ii), we have
	\begin{align*}
	P\left(\left|\sum_{m=1}^{M} \frac{w(m,M)}{T}\sum_{t=m+1}^T Z_{tm}\right|>\varepsilon\right)
	=O\left( \frac{M^2}{T}\right)=o(1)
	\end{align*}
	as $M^2/T=o(1)$ under Assumption \ref{a:NW_OLS} (vi).
	
	Next, consider the second term on the right-hand side of \eqref{eq:further_decomposition2}.
	By Minkowski's inequality, 
	\begin{align*}
	\frac{N^{1/2}}{M}\left\{E\left[\left| \sum_{m=1}^{M} \frac{w(m,M)}{T}\sum_{t=m+1}^T  \tilde Z_{tm}\right|^{2}\right]\right\}^{1/2}
	\le
	\frac{N^{1/2}}{M}\sum_{m=1}^{M} |w(m,M)| \left\{ E\left[ \left|\frac{1}{T}\sum_{t=m+1}^{T}\tilde Z_{tm}\right|^2\right] \right\}^{1/2} <\infty
	\end{align*}
	uniformly in $T$.
	To see the last inequality, set $\theta=0$ without loss of generality.
	By the identical distribution of $\gamma_t$,
	\begin{align*}
	&E\left[ \left|\frac{1}{T}\sum_{t=m+1}^{T}\tilde Z_{tm}\right|^2\right]= O\left(E[|\tilde Z_{tm}|^2]\right).
	\end{align*}
	Now, fix any $m$ and
	denote $W_{ii'}=E[X_{i,t+m}U_{i,t+m}U_{i' t}X_{i' t}]$.
	Then
	\begin{align*} 
	E[|\tilde Z_{tm}|^2]&=\frac{1}{N^4}\sum_{i=1}^N \sum_{i'\ne i}^N \sum_{\iota=1}^N  \sum_{\iota'\ne \iota}^N  E\Bigg[ \Bigg( E[W_{ii'} \mid (\alpha_i)_{i=1}^N] -E[W_{ii'} ]\Bigg)\Bigg(
	E[W_{\iota\iota'}\mid (\alpha_i)_{i=1}^N]-E[W_{\iota\iota'} ]
	\Bigg)\Bigg]\\
	&= 
	\frac{1}{N^4}\sum_{i=1}^N \sum_{i'\ne i}^N \sum_{\iota=1}^N  \sum_{\iota'\ne \iota}^N  E\Bigg[ \Bigg( E[W_{ii'} \mid \alpha_{i},\alpha_{i'}] -E[W_{ii'} ]\Bigg)\Bigg(
	E[W_{\iota\iota'}\mid \alpha_{\iota},\alpha_{\iota'}]-E[W_{\iota\iota'} ]
	\Bigg)\Bigg]\\
	&\le 
	\frac{2}{N^4}\sum_{i=1}^N \sum_{i'\ne i}^N  \sum_{\iota'\ne i}^N  E\Bigg[ \Bigg( E[W_{ii'} \mid \alpha_{i},\alpha_{i'}] -E[W_{ii'} ]\Bigg)\Bigg(
	E[W_{i\iota'}\mid \alpha_{i},\alpha_{\iota'}]-E[W_{i\iota'} ]
	\Bigg)\Bigg]\\
	&+\frac{2}{N^4}\sum_{i=1}^N \sum_{i'\ne i}^N  \sum_{\iota=1}^N  E\Bigg[ \Bigg( E[W_{ii'} \mid \alpha_{i},\alpha_{i'}] -E[W_{ii'} ]\Bigg)\Bigg(
	E[W_{\iota i}\mid \alpha_{\iota},\alpha_{i}]-E[W_{\iota i} ]
	\Bigg)\Bigg]\\
	&+\frac{1}{N^4}\sum_{i=1}^N \sum_{i'\ne i}^N    E\Bigg[ \Bigg( E[W_{ii'} \mid \alpha_{i},\alpha_{i'}] -E[W_{ii'} ]\Bigg)^2\Bigg]\\
	&\le C\left(\frac{1}{N} E[|X_{i,t+m}U_{i,t+m}U_{\iota t}X_{\iota t}|^2]\right)\\
	&\le  C\left(\frac{1}{N} E[|X_{it}U_{it}|^4]\right)
	=O\left(\frac{1}{N}\right)
	\end{align*}
	for some constant $C>0$,
	following Assumption \ref{a:NW_OLS} (ii) and Jensen's inequality. Note that the second inequality holds since $ E[W_{ii'} \mid \alpha_{i},\alpha_{i'}]$ and $ E[W_{\iota\iota'} \mid \alpha_{\iota},\alpha_{\iota'}]$ are independent when $(i,i')\ne(\iota,\iota ')$. This bound holds uniformly over all $ m=1,...,M$.
	An application of Markov's inequality combined with the above calculations yields
	\begin{align*}
	P\left(\left|\sum_{m=1}^{M} \frac{w(m,M)}{T}\sum_{t=m+1}^T \tilde Z_{tm}\right|>\varepsilon\right)
	=O\left( \frac{M^2}{N}\right)=o(1).
	\end{align*}
	
	Now, take term $(1)$, which we bound as follows.
	\begin{align}\label{eq:further_decomposition1}
	(1)\lesssim
	\Bigg|\sum_{m=1}^{M}\frac{w(m,M)}{N^2 T} \sum_{t=m+1}^{T} \sum_{i=1}^N\sum_{i'\ne i}^N X_{it}\Bigg\{(\beta -\hat \beta)X_{it} U_{i',t-m}+U_{it} X_{i',t-m}(\beta -\hat \beta)\nonumber\\
	\qquad\qquad\qquad\qquad\qquad\qquad\qquad+X_{it} X_{i',t-m}(\beta -\hat \beta)^2 \Bigg\}X_{i',t-m}\Bigg|.
	\end{align}
	The first term on the right-hand side of \eqref{eq:further_decomposition1} is bounded by
	\begin{align*}
	|\beta -\hat \beta|\Bigg|\sum_{m=1}^{M}\frac{w(m,M)}{N^2 T} \sum_{t=m+1}^{T} \sum_{i=1}^N\sum_{i'\ne i}^N X_{it}^2 U_{i',t-m}X_{i',t-m}'\Bigg|=O_p\left(\frac{M}{\sqrt{\min\{N,T\}}}\right)=o_p(1)
	\end{align*}
	under Assumption \ref{a:NW_OLS} (ii) and (vi).
	Here, we have used $|\hat \beta-\beta|=O_p((\min\{N,T\})^{-1/2})$ implied by the central limit theorem under Assumption \ref{a:NW_OLS}. 
	A similar argument applies to the second term on the right-hand side of \eqref{eq:further_decomposition1}.
	
	Next, consider term $(3)$, which equals
	\begin{align*}
	(3)=\left|\sum_{m=1}^{M}|w(m,M)-1|\frac{1}{ T} \sum_{t=m+1}^{T}  E[ X_{it} U_{it} U_{i',t-m}X_{i',t-m}]\right|.
	\end{align*}
	Applying Theorem 14.13.2 in \cite{hansen2021econometrics} conditional on $(\alpha_i)_{i=1}^N$, we have
	\begin{align*}
	& \left|E\left[   X_{it} U_{it} U_{i',t-m}X_{i',t-m}\mid (\alpha_i)_{i=1}^N\right]\right|\\
	&			\lesssim
	\alpha(m)^{(1+2\delta)/(4+4\delta)}\left\{ E[| X_{it} U_{it}  |^{4(r+\delta)}\mid  \alpha_i]\right\}^{1/4(r+\delta)}\left\{ E[|  U_{i',t-m}X_{i',t-m}|^{2}\mid  \alpha_{i'}]\right\}^{1/2}
	\end{align*}
	with $\sum_{m=1}^\infty  \alpha(m)^{(1+2\delta)/(4+4\delta)}<\infty$ following Assumption \ref{a:NW_OLS} (iii). 
	Recall that $\alpha_i$ are i.i.d. 
	By integrating out $\alpha_i$ and $\alpha_{i'}$ and applying Jensen's inequality, we have
	\begin{align*}
	\left|E[ X_{it} U_{it} U_{i',t-m}X_{i',t-m}]\right|	
	&\le E\Big[\Big|E\left[   X_{it} U_{it} U_{i',t-m}X_{i',t-m}\mid (\alpha_i)_{i=1}^N\right]\Big|\Big]\\
	&\lesssim
	\alpha(m)^{(1+2\delta)/(4+4\delta)}\left\{ E[|X_{it}U_{it}|^{4(r+\delta)}]\right\}^{1/4(r+\delta)}\left\{ E[|X_{it}U_{it}|^{2}]\right\}^{1/2}
	\\
	&\lesssim
	\alpha(m)^{(1+2\delta)/(4+4\delta)}
	\end{align*}
	for all $m$.
	Since $w(m,M)\to 1$ as $T\to \infty$ for each $m$, the dominated convergence theorem, the above bound, and Assumption \ref{a:NW_OLS} (iii) together imply $(3)=o(1)$.
	
	Finally, consider term $(4)$.
	By the law of total covariance, one can rewrite $(4)$ as
	\begin{align*}
	(4)=O\left(\left|\sum_{m= M+1}^{\infty}  cov ( X_{it} U_{it}, X_{i',t-m}U_{i',t-m})\right|\right)=O\left(\left|\sum_{m= M+1}^{\infty}  cov (E[X_{it} U_{it}\mid \gamma_{t}],E[X_{i',t-m}U_{i',t-m}\mid \gamma_{t-m}] )\right|\right).
	\end{align*}
	Thus, $(4)=o(1)$ follows from an application of Lemma 6.17 in \cite{white1984asymptotic} as $m\to\infty$. 
\end{proof}

\subsection{Eicker-White CRVE under Non-Degeneracy}

\begin{lemma}[Eicker-White CRVE under non-degeneracy]\label{lem:EW_CRVE}
	If Assumption \ref{a:NW_OLS} holds with (iv)(1), then 
	\begin{align}
	\frac{1}{N^2T}\sum_{t=1}^T \hat S_t \hat S_t' =&\frac{1}{N^2T}\sum_{t=1}^T\sum_{i=1}^N \sum_{i'=1}^N X_{it}\hat U_{it}\hat U_{i't}X_{i't}'\nonumber\\
	& \stackrel{p}{\to}
	E\left[ (E[X_{it}U_{it}\mid \gamma_t]) (E[X_{it}U_{it}\mid \gamma_t])'\right] \nonumber\\
	& = E[b_t b_t'], \label{eq:lem:EW_CRVE1}
	\end{align}
	and
	\begin{align}
	\frac{1}{NT^2}\sum_{i=1}^N \hat R_i \hat R_i'=&\frac{1}{NT^2} \sum_{i=1}^N\sum_{t=1}^T \sum_{t'=1}^T X_{it}\hat U_{it}\hat U_{it'}X_{it'}'\nonumber\\
	&\stackrel{p}{\to}E\left[ (E[X_{it}U_{it}\mid \alpha_i]) (E[X_{it}U_{it}\mid \alpha_i])'\right] \nonumber\\
	&= E[a_ia_i'].\label{eq:lem:EW_CRVE2}
	\end{align}
\end{lemma}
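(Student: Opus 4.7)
My plan is to prove \eqref{eq:lem:EW_CRVE1}; \eqref{eq:lem:EW_CRVE2} follows from a mirror-image argument in which $i$ and $t$ play exchanged roles. First, I reduce the statistic to one built from the true errors $U_{it}$. Writing $\hat U_{it} = U_{it} - X_{it}'(\hat\beta - \beta)$, and setting $\bar S_t = N^{-1}\sum_i X_{it}U_{it}$ and $Q_t = N^{-1}\sum_i X_{it}X_{it}'$, a direct expansion gives $\frac{1}{N^2T}\sum_{t} \hat S_t \hat S_t' = T^{-1}\sum_t \bar S_t \bar S_t'$ plus three bilinear or quadratic remainders, each carrying the factor $\hat\beta - \beta = O_p((\min\{N,T\})^{-1/2})$ supplied by Theorem \ref{thm:OLS}. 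Cauchy--Schwarz in $t$ combined with the moment bound in Assumption \ref{a:NW_OLS}(ii), which makes $T^{-1}\sum_t\|\bar S_t\|^2$ and $T^{-1}\sum_t\|Q_t\|^2$ stochastically bounded, shows each remainder is $o_p(1)$.

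Next, I use the Hoeffding-type decomposition of Section \ref{sec:main}. Without loss of generality $E[X_{it}U_{it}] = 0$, so by \eqref{eq:Xabe} applied to $X_{it}U_{it}$ I may write $X_{it}U_{it} = a_i + b_t + e_{it}$ with $a_i = E[X_{it}U_{it}\mid\alpha_i]$, $b_t = E[X_{it}U_{it}\mid\gamma_t]$, and $e_{it}$ the residual, inheriting the five orthogonality properties listed in Section \ref{sec:main}. Averaging over $i$ yields $\bar S_t = \bar a + b_t + \bar e_t$ with $\bar a = N^{-1}\sum_i a_i$ and $\bar e_t = N^{-1}\sum_i e_{it}$. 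The outer product $\bar S_t\bar S_t'$ expands into nine terms, and after averaging over $t$ the only non-negligible one is $T^{-1}\sum_t b_t b_t'$. This sum converges in probability to $E[b_tb_t']$ by the law of large numbers for strictly stationary $\alpha$-mixing sequences applied coordinatewise, using that $\beta$-mixing implies $\alpha$-mixing and that $E\|b_t\|^2 < \infty$ follows from Assumption \ref{a:NW_OLS}(ii) via Jensen and Cauchy--Schwarz.

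The remaining eight terms are driven to zero by: (a) $\bar a = O_p(N^{-1/2})$ from the i.i.d.\ CLT; (b) $T^{-1}\sum_t b_t = O_p(T^{-1/2})$ from the $\alpha$-mixing CLT invoked in Theorem \ref{thm:main_asymptotics}; (c) the within-period orthogonality $E[e_{it}e_{i't}'\mid\gamma_t] = 0$ for $i \ne i'$, which yields $E[\bar e_t\bar e_t'] = O(N^{-1})$ and hence $T^{-1}\sum_t \bar e_t\bar e_t' = O_p(N^{-1})$ by Markov; and (d) the iterated-expectation identity $E[\bar e_t\mid(\gamma_s)_s] = 0$, which makes cross sums such as $T^{-1}\sum_t b_t\bar e_t'$ have mean zero, with a conditional second-moment bound establishing they are $o_p(1)$. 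For \eqref{eq:lem:EW_CRVE2} the argument is symmetric: writing $T^{-1}\hat R_i = a_i + \bar b + \bar e_i$ with $\bar b = T^{-1}\sum_t b_t$ and $\bar e_i = T^{-1}\sum_t e_{it}$, the leading term $N^{-1}\sum_i a_ia_i' \to E[a_ia_i']$ by the classical i.i.d.\ LLN, and the remainders are controlled analogously, now using the conditional $\alpha$-mixing of $\{e_{it}\}_t$ given $\alpha_i$ to get $\bar e_i = O_p(T^{-1/2})$.

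The delicate step is (d), controlling $T^{-1}\sum_t b_t\bar e_t'$ and its symmetric counterpart $N^{-1}\sum_i a_i\bar e_i'$. Although each summand has mean zero by iterated expectations, the $\bar e_t$'s are dependent across $t$ through the shared firm effects $\alpha_i$, so the variance cannot be bounded by a naive term-by-term calculation. The fix is to first condition on $(\gamma_t)_t$, which kills all $i \ne i'$ contributions by the orthogonality in (c) and collapses the double sum to a single sum over $i$ of within-$i$ covariances in $e_{it}$, then integrate out using the mixing of $\gamma_t$ and the moment condition in Assumption \ref{a:NW_OLS}(ii); Chebyshev completes the step. This is the only place where the assumptions are genuinely consumed.
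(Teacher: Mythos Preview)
Your proposal is correct and takes a genuinely different route from the paper's proof.

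The paper works directly with the raw quadratic statistic: for \eqref{eq:lem:EW_CRVE2}, it writes $(NT^2)^{-1}\sum_i\sum_{t,t'}X_{it}U_{it}U_{it'}X_{it'}'$, centers it, and then decomposes by projecting the summand onto $(\gamma_t)_t$ (terms called $Z_i$ and $\tilde Z_i$). The delicate piece is $\tilde Z_i$, a centered double time-sum of a kernel in $(\gamma_t,\gamma_{t'})$; bounding the resulting \emph{fourth}-order sum is where the paper invokes Yoshihara's (1976) summability lemma, and this is the sole reason $\beta$-mixing (rather than $\alpha$-mixing) appears in Assumption~\ref{a:NW_OLS}(iii). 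Your approach instead linearizes at the outset via the Hoeffding decomposition $X_{it}U_{it}=a_i+b_t+e_{it}$, so that the quadratic statistic breaks into nine cross-products of averages, each handled by a second-moment calculation: the i.i.d.\ LLN for $a_i$, an $\alpha$-mixing ergodic theorem for $b_tb_t'$, and covariance bounds for the $e$ terms. This is more transparent and, notably, avoids Yoshihara entirely---so for this lemma your argument needs only $\alpha$-mixing.

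One caveat on your ``mirror-image'' claim: the two statements are \emph{not} fully symmetric, because $\{a_i\}$ is i.i.d.\ while $\{b_t\}$ is only mixing. In \eqref{eq:lem:EW_CRVE1}, conditioning on $(\gamma_t)_t$ kills $i\neq i'$ contributions outright, as you describe in step (d). For \eqref{eq:lem:EW_CRVE2}, the analogous maneuver---conditioning on $(\alpha_i)_i$---does \emph{not} annihilate $i\neq i'$ terms in $\operatorname{Cov}(a_i\bar e_i,\,a_{i'}\bar e_{i'})$, since $\bar e_i$ and $\bar e_{i'}$ remain correlated through the shared $\gamma_t$'s. You need one extra step: for $i\neq i'$, iterated expectations give this covariance as $T^{-2}\sum_{t,t'}E[h(\gamma_t)h(\gamma_{t'})]$ with $h(\gamma_t)=E\bigl[a_i\,E[e_{it}\mid\alpha_i,\gamma_t]\,\bigm|\,\gamma_t\bigr]$, which has mean zero by the orthogonality of $a_i$ and $e_{it}$; the $\alpha$-mixing covariance inequality then yields $O(T^{-1})$ per pair, hence $O(T^{-1})$ in total. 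This patches the asymmetry without leaving $\alpha$-mixing.
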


\begin{proof}
	
	Throughout the proof, assume $k=1$  without loss of generality. 
	\bigskip\\
	\textbf{Proof of \eqref{eq:lem:EW_CRVE1}:}
	Notice that we have
	\begin{align*}
	E[ X_{it}U_{it} U_{i't}X_{i't}]&=cov(X_{it}U_{it},U_{i't}X_{i't})\\
	&=E[cov(X_{it}U_{it},U_{i't}X_{i't}|\gamma_t)]+ cov(E[X_{it}U_{it}|\gamma_{t}],E[X_{i't}U_{i't}|\gamma_{t}])\\
	&=0+E\left[ (E[X_{it}U_{it}\mid \gamma_t]) (E[X_{it}U_{it}\mid \gamma_t])\right]
	\end{align*}
	by the law of total covariance.
	Thus, it suffices to bound the right-hand side of
	\begin{align}
	&\left|\frac{1}{N^2T}\sum_{t=1}^T\sum_{i=1}^N \sum_{i'=1}^N X_{it}\hat U_{it}\hat U_{i't}X_{i't}- E[ X_{it}U_{it} U_{i't}X_{i't}]\right|\notag\\
	\le& 
	\left|\frac{1}{N^2T}\sum_{t=1}^T \sum_{i=1}^N \sum_{i'=1}^N\{ X_{it}\hat U_{it}\hat U_{i't}X_{i't}-X_{it} U_{it} U_{i't}X_{i't}\}\right|+\notag\\
	&
	\left|\frac{1}{N^2T}\sum_{t=1}^T \sum_{i=1}^N \sum_{i'=1}^N X_{it} U_{it} U_{i't}X_{i't}- E[ X_{it}U_{it} U_{i't}X_{i't}]\right|
	=:(1)+(2).
	\label{eq:proof_first_statement1}
	\end{align}
	
	First, consider term $(2)$. 
	Set
	\begin{align*}
	Z_{t}&=\frac{1}{N^2}\sum_{i=1}^N\sum_{i'=1}^N \left\{   X_{it} U_{it} U_{i't}X_{i't}-E[ X_{it} U_{it} U_{i't}X_{i't}\mid (\alpha_i)_{i=1}^N]\right\} \qquad\text{and}\\
	\tilde Z_{t}&=\frac{1}{N^2}\sum_{i=1}^N\sum_{i'= 1}^N \left\{  E[X_{it} U_{it} U_{i't}X_{i't}\mid (\alpha_i)_{i=1}^N]-E[ X_{it} U_{it} U_{i't}X_{i't}]\right\}
	\end{align*}
	for each $t$, $N$. 
	We decompose term $(2)$ as
	\begin{align}
	(2)\le& \left|\frac{1}{T}\sum_{t=1}^T Z_t\right|+ \left|\frac{1}{T}\sum_{t=1}^T \tilde Z_t\right|=:(3)+(4).
	\label{eq:proof_first_statement2}
	\end{align}
	
	Second, consider term $(3)$.
	Observe that $E[Z_t\mid (\alpha_i)_{i=1}^N ]=0$.
	By Theorem 14.2 in \cite{davidson1994stochastic} with $r=2(r+\delta)$ (where $r$ on the left-hand side is in terms of the notation by \cite{davidson1994stochastic}, and $r$ and $\delta$ on the right-hand side satisfy our Assumption \ref{a:NW_OLS}) and $p=2$, we have
	\begin{align*}
	\left\{	E\left[ \left|E\left[ Z_{t} \mid (\alpha_i)_{i=1}^N, \calF_{-\infty}^{t-\ell}  \right] \right|^{2}\mid (\alpha_i)_{i=1}^N\right]\right\}^{1/2} \le  6 \alpha(\ell)^{1/2-1/2(r+\delta)} \left\{ E[|Z_{t}|^{2(r+\delta)}\mid (\alpha_i)_{i=1}^N] \right\}^{1/2(r+\delta)} 
	\end{align*}
	almost surely.
	Then, by Lemma A in \cite{hansen1992consistent} with $\beta=2$, it holds that
	\begin{align*}
	\left\{ E\left[ \left|\frac{1}{T}\sum_{t=1}^{T}Z_{t}\right|^2\mid (\alpha_i)_{i=1}^N\right] \right\}^{1/2}
	&\lesssim \frac{1}{T} \sum_{\ell=1}^\infty \alpha(\ell)^{1/2-1/2(r+\delta)} \left\{\sum_{t=1}^T \left( E[|Z_{t}|^{2(r+\delta)}\mid (\alpha_i)_{i=1}^N] \right)^{1/(r+\delta)} \right\}^{1/2}\\
	&\lesssim T^{-1/2}  \left\{ E[|Z_{t}|^{2(r+\delta)}\mid (\alpha_i)_{i=1}^N] \right\}^{1/2(r+\delta)}.
	\end{align*}
	Thus, we have
	\begin{align*}
	T^{1/2} \left\{E\left[\left|  \frac{1}{T}\sum_{t=1}^T  Z_{t}\right|^{2}\mid (\alpha_i)_{i=1}^N\right]\right\}^{1/2}
	&\lesssim 
	\left\{ E[|Z_{tm}|^{2(r+\delta)}\mid (\alpha_i)_{i=1}^N] \right\}^{1/2(r+\delta)}
	\end{align*}
	uniformly in $T$.
	By Markov's inequality, we obtain for any $\varepsilon>0$
	\begin{align*}
	P\left(\left|\frac{1}{T}\sum_{t=1}^T Z_{t}\right|>\varepsilon \mid (\alpha_i)_{i=1}^N\right)
	&=O\left( E\left[\left|  \frac{1}{T}\sum_{t=1}^T  Z_{t}\right|^{2}\mid (\alpha_i)_{i=1}^N\right]\right)
	\end{align*}
	almost surely.
	Therefore, by Fubini theorem, Jensen's inequality, and the bounded moment $E[|Z_{t}|^{2(r+\delta)}]<\infty$ that holds under Assumption \ref{a:NW_OLS} (ii),  we have
	\begin{align*}
	P\left(\left| \frac{1}{T}\sum_{t=1}^T Z_{t}\right|>\varepsilon\right)
	&=O\left( \frac{1}{T}\right)=o(1),
	\end{align*}
	showing $(3) = o_p(1)$ in \eqref{eq:proof_first_statement2}.
	
	Third, consider term $(4)$ in \eqref{eq:proof_first_statement2}. 
	By the identical distribution of the $\gamma_t$, we have $E[|T^{-1}\sum_{t=1}^T \tilde Z_t|^2]=O(E[|\tilde Z_t|^2])=O(N^{-1})$, the final equality by a direct calculation. Markov's inequality implies that
	\begin{align*}
	P\left(\left| \frac{1}{T}\sum_{t=1}^T \tilde Z_{t}\right|>\varepsilon\right)
	&=O\left( \frac{1}{N}\right)=o(1),
	\end{align*}
	showing $(4) = o_p(1)$ in \eqref{eq:proof_first_statement2}.
	Thus, we obtain $(2)=o_p(1)$ in \eqref{eq:proof_first_statement1}.
	
	Finally, consider term $(1)$ in \eqref{eq:proof_first_statement1}. Note that 
	\begin{align*}
	(1)\lesssim
	\Bigg|\frac{1}{N^2 T} \sum_{t=m+1}^{T} \sum_{i=1}^N\sum_{i'= 1}^N X_{it}\Bigg\{(\beta -\hat \beta)X_{it} U_{i't}+U_{it} X_{i't}(\beta -\hat \beta)+X_{it} X_{i't}(\beta -\hat \beta)^2 \Bigg\}X_{i't}\Bigg|.
	\end{align*}
	Similar to (\ref{eq:further_decomposition1}), the first two terms are $O_p(|\hat \beta-\beta|)=O_p((\min\{N,T\})^{-1/2})$ while the third term is $O_p(|\hat \beta-\beta|^2)=O_p((\min\{N,T\})^{-1})$. 
	It therefore follows that $(1)=O_p((\min\{N,T\})^{-1/2})$ in \eqref{eq:proof_first_statement1}.
	\quad
	\bigskip\\
	\noindent
	\textbf{Proof of \eqref{eq:lem:EW_CRVE2}:}
	Observe that
	\begin{align}
	&\left|\frac{1}{NT^2} \sum_{i=1}^N\sum_{t=1}^T \sum_{t'=1}^T X_{it}\hat U_{it}\hat U_{it'}X_{it'}-E\left[ (E[X_{it}U_{it}\mid \alpha_i]) (E[X_{it}U_{it}\mid \alpha_i])\right]\right|\notag\\
	&\le
	\left|\frac{1}{NT^2} \sum_{i=1}^N\sum_{t=1}^T \sum_{t'=1}^T \left\{X_{it}\hat U_{it} \hat U_{it'}X_{it'}-X_{it} U_{it}U_{it'}X_{it'}\right\}\right|\notag\\
	&+
	\left|\frac{1}{NT^2} \sum_{i=1}^N\sum_{t=1}^T \sum_{t'=1}^T \{X_{it} U_{it} U_{it'}X_{it'}-E[X_{it} U_{it} U_{it'}X_{it'}]\}\right|\notag\\
	&+
	\left|\frac{1}{NT^2} \sum_{i=1}^N\sum_{t=1}^T \sum_{t'=1}^T E[X_{it} U_{it} U_{it'}X_{it'}]-E\left[ (E[X_{it}U_{it}\mid \alpha_i]) (E[X_{it}U_{it}\mid \alpha_i])\right]\right|
	= (5) +(6)+(7).
	\label{eq:proof_second_statement1}
	\end{align}
	Term $(5)$ can be shown to be $O_p(|\hat \beta- \beta|)=o_p(1)$ similarly to term $(1)$ in \eqref{eq:proof_first_statement1}.
	Consider term $(7)$. By the law of total covariances,
	\begin{align*}
	E[X_{it}U_{it} U_{it'}X_{it'}]=&cov(E[X_{it}U_{it}|\alpha_i],E[X_{it'}U_{it'}|\alpha_i])+E[cov(X_{it}U_{it},X_{it'}U_{it'}|\alpha_i)].
	\end{align*}
	From this equality follows
	\begin{align}
	&\frac{1}{NT^2} \sum_{i=1}^N\sum_{t=1}^T \sum_{t'=1}^T E\left[ X_{it} U_{it} U_{it'}X_{it'}\right] \notag\\&=
	\frac{1}{T^2}\sum_{t=1}^T\sum_{t'=1}^T \left\{   cov(E[X_{it}U_{it}\mid \alpha_i],E[X_{it'}U_{it'}\mid \alpha_i])+E[cov(X_{it}U_{it},X_{it'}U_{it'}\mid \alpha_i) ]\right\} \notag\\
	&=E\left[ (E[X_{it}U_{it}\mid \alpha_i]) (E[X_{it}U_{it}\mid \alpha_i])\right]+ o(1).
	\label{eq:proof_second_statement3}
	\end{align}
	To see the second equality, note that, by Assumption \ref{a:NW_OLS} (i)--(iii) and an application of Theorem 14.13 (ii) in \cite{hansen2021econometrics} along with Jensen's inequality, for any $t,t'\in\{1,...,T\}$, we have
	\begin{align*}
	E[cov(X_{it}U_{it},X_{it'}U_{it'}\mid \alpha_i)]\le 8 \left(E[|X_{it}U_{it}|^{4(r+\delta)}]\right)^{1/2(r+\delta)} \alpha(|t-t'|)^{1-1/2(r+\delta)}.
	\end{align*}
	Moreover, Assumption \ref{a:NW} (iii) implies $\sum_{\ell=1}^\infty \alpha(\ell)^{1-1/2(r+\delta)}<\infty$. 
	Therefore, $$T^{-2}\sum_{t=1}^T\sum_{t'=1}^T  E[cov(X_{it}U_{it},X_{it'}U_{it'}\mid \alpha_1) ]=o(1)$$ follows, which in turn implies the second equality in \eqref{eq:proof_second_statement3}. This shows that $(7)=o(1)$ in \eqref{eq:proof_second_statement1}.
	
	Finally, take term $(6)$ in \eqref{eq:proof_second_statement1}. Define 
	\begin{align*}
	Z_i &=\frac{1}{T^2} \sum_{t=1}^T \sum_{t'=1}^T [X_{it} U_{it} U_{it'}X_{it'}-E\left[X_{it} U_{it} U_{it'}X_{it'} \mid (\gamma_t)_{t=1}^T\right]]
	\end{align*}
	and
	\begin{align*}
	\tilde Z_i &=\frac{1}{T^2} \sum_{t=1}^T \sum_{t'=1}^T E\left[X_{it} U_{it} U_{it'}X_{it'} \mid (\gamma_t)_{t=1}^T\right]-E\left[X_{it} U_{it} U_{it'}X_{it'} \right]\\
	&=\frac{1}{T^2} \sum_{t=1}^T \sum_{t'=1}^T \left\{h_i(\gamma_t,\gamma_{t'})-E[h_i(\gamma_t,\gamma_{t'})]\right\}
	\end{align*}
	where $h_i(\gamma_t,\gamma_{t'})= E\left[X_{it} U_{it} U_{it'}X_{it'} \mid\gamma_t,\gamma_{t'}\right]$.
	We have the bound:
	\begin{align}
	(6) &\le \left|\frac{1}{N} \sum_{i=1}^N Z_i\right|+
	\left|\frac{1}{N} \sum_{i=1}^N\tilde Z_i\right|=:(8)+(9).
	\label{eq:proof_second_statement2}
	\end{align}
	Take $(8)$. Note that conditional on $(\gamma_t)_{t=1}^T$, $Z_i$ are mutually independent and mean zero. Jensen's and Markov's inequalities then imply that $(8)=o_p(1)$. Now, consider term $(9)$. Note that $E[\tilde Z_i]=0$ and
	\begin{align*}
	E\left[\left|\frac{1}{N}\sum_{i=1}^N \tilde Z_i\right|^2\right]\le 	\frac{1}{N^2}\sum_{i=1}^N\sum_{i'=1}^N E\left| \tilde Z_i \tilde Z_{i'}\right|\le E[\tilde Z_i^2],
	\end{align*}
	the final inequality by Cauchy-Schwarz and the fact that $\tilde Z_i$ are identically distributed since $\alpha_i$ are.
	
	By a direct calculation, 
	\begin{align*}
	E[\tilde Z_i^2]
	&=\left[\frac{1}{T^4}
	\sum_{t=1}^T \sum_{t'=1}^T \sum_{t''=1}^T \sum_{t'''=1}^T h_i(\gamma_t,\gamma_{t'})h_i(\gamma_{t''}\gamma_{t'''}) \right]-
	\left(\frac{1}{T^2}
	\sum_{t=1}^T \sum_{t'=1}^T E[h_i(\gamma_t,\gamma_{t'})]E[h_i(\gamma_{t''}\gamma_{t'''})]\right)^2
	\\
	&\le\frac{4}{T^4}
	\sum_{t=1}^T \sum_{t'=1}^T \sum_{t''=1}^T \sum_{t'''=1}^Th_i(\gamma_t,\gamma_{t'})h_i(\gamma_{t''}\gamma_{t'''}).
	\end{align*}
	To control the right hand side, we apply Lemma 2 in  \cite{yoshihara1976limiting} with $\delta=2(r+\delta)-2$, $\delta'=2r-2$, $r=2(r+\delta)$ (where $r$ and $\delta$ on the left-hand side are in terms of the notation by  \cite{yoshihara1976limiting}, and $r$ and $\delta$ on the right-hand side satisfy our Assumption \ref{a:NW_OLS}).  With this setting, note that
	\begin{align*}
	\frac{2+\delta'}{\delta'}=\frac{2r}{2(r-1)}<\frac{2r}{r-1}<\lambda,
	\end{align*}
	where $r$ here is from Assumption \ref{a:NW_OLS},
	and thus Assumption \ref{a:NW_OLS} (iii) implies that
	\begin{align*}
	\beta(\ell)=O(\ell^{-\lambda})=O(\ell^{-(2+\delta')/\delta'}),
	\end{align*}
	which is sufficient for the mixing requirement in the Lemma 2 of \cite{yoshihara1976limiting}.
	We now verify Condition (2.4) in \cite{yoshihara1976limiting}, which requires
	\begin{align*}
	E\left[\left|E[X_{it}U_{it} U_{it'} X_{it'}|\gamma_t ,\gamma_{t'}]\right|^{2(r+\delta)} \right]<\infty.
	\end{align*}
	This follows from our Assumption \ref{a:NW_OLS}(ii), Jensen's inequality, and Cauchy-Schwarz's inequality, as
	\begin{align*}
	E\left[\left|E[X_{it}U_{it} U_{it'} X_{it'}|\gamma_t ,\gamma_{t'}]\right|^{2(r+\delta)} \right] &\le E\left[|X_{it}U_{it} U_{it'} X_{it'}|^{2(r+\delta)} \right]\\
	& \le
	\left\{	E\left[|X_{it}U_{it}|^{4(r+\delta)} \right]\cdot E\left[|U_{it'} X_{it'}|^{4(r+\delta)} \right]\right\}^{1/2}<\infty.
	\end{align*}
	We next verify
	Condition (2.3) in \cite{yoshihara1976limiting}, which requires
	\begin{align*}
	\int \int \left|E[X_{it}U_{it} U_{it'} X_{it'}'|\gamma_t=u ,\gamma_{t'}=v]\right|^{2(r+\delta)} dF(u)dF(v)<\infty,
	\end{align*}
	where $F(\cdot)$ is the common CDF of $\gamma_i$.
	This holds since, by Cauchy-Schwarz and  Assumption \ref{a:NW_OLS}(i) 
	\begin{align*}
	\left|E[X_{it}U_{it} U_{it'} X_{it'}|\gamma_t=u ,\gamma_{t'}=v]\right|^2\le E[(X_{it}U_{it})^2|\gamma_t=u ]  E[ (U_{it'} X_{it'})^2|\gamma_{t'}=v],
	\end{align*}
	and thus by Jensen's inequality, 
	\begin{align*}
	&\int \int \left|E[X_{it}U_{it} U_{it'} X_{it'}|\gamma_t=u ,\gamma_{t'}=v]\right|^{2(r+\delta)} dF(u)dF(v)\\
	& \le\int \int \left|E[(X_{it}U_{it})^2|\gamma_t=u ]\cdot  E[ (U_{it'} X_{it'})^2|\gamma_{t'}=v]\right|^{(r+\delta)} dF(u)dF(v)\\
	& \le \int E[|X_{it}U_{it}|^{2(r+\delta)}|\gamma_t=u ]dF(u)\cdot  \int E[ |U_{it'} X_{it'}|^{2(r+\delta))}|\gamma_{t'}=v] dF(v)\\
	& = \left(E[E[|X_{it}U_{it}|^{2(r+\delta)}|\gamma_t ]]\right)^2=\left(E[|X_{it}U_{it}|^{2(r+\delta)}]\right)^2<
	\infty
	\end{align*}
	under Assumption \ref{a:NW_OLS}(ii).
	Applying Lemma 2 of \cite{yoshihara1976limiting} following Equation (10) in \cite{dehling2010central} now yields
	\begin{align*}
	E\left[\frac{4}{T^4}
	\sum_{t=1}^T \sum_{t'=1}^T \sum_{t''=1}^T \sum_{t'''=1}^T h_i(\gamma_t,\gamma_{t'})h_i(\gamma_{t''}\gamma_{t'''})\right]=O(T^{-1-\delta/(r-1)(r+\delta)})=o(1),
	\end{align*}
	where the last equality follows because $\delta/(r-1)(r+\delta)>0$.
	This result and Markov's inequality together imply $(9)=o_p(1)$ in \eqref{eq:proof_second_statement2}, and hence $(6) = o_p(1)$ in \eqref{eq:proof_second_statement2}. This completes the proof of \eqref{eq:lem:EW_CRVE2}.
\end{proof}

\subsection{Generalized Newey-West and Eicker-White CRVE under Degeneracy}

\begin{lemma}[Generalized Newey-West and Eicker-White CRVE under Degeneracy]\label{lem:EW_CRVE_degeneracy}
	If Assumption \ref{a:NW_OLS} holds with (iv)(2), then
	\begin{align}
	&\frac{1}{(NT)^2}\sum_{i=1}^N \sum_{t=1}^T X_{it}\hat U_{it} \hat U_{it} X_{it}'-\frac{var(X_{11}U_{11})}{NT}=o_p\left(\frac{1}{NT}\right),\label{eq:EW_CRVE_degeneracy1}\\
	&\frac{1}{(NT)^2}\sum_{i=1}^N \sum_{i'=1}^N\sum_{t=1}^T X_{it}\hat U_{it} \hat U_{i't} X_{i't}'\notag\\
	&\qquad+\sum_{m=1}^{M}\frac{w(m,M)}{(N T)^2}\sum_{t=m+1}^{T} \sum_{i=1}^N\sum_{i'\ne i}^N  \left(X_{it}\hat U_{it} \hat U_{i',t-m} X_{i',t-m}'+X_{i',t-m}\hat U_{i',t-m}\hat U_{it}X_{it}'  \right)\notag\\
	&\qquad-\frac{var(X_{11}U_{11})}{NT}=o_p\left(\frac{1}{T}\right),\label{eq:EW_CRVE_degeneracy2}\\
	\text{and}&\notag\\
	&\frac{1}{(NT)^2} \sum_{i=1}^N\sum_{t=1}^T \sum_{t'=1}^T X_{it}\hat U_{it} \hat U_{it'} X_{it'}'-\frac{var(X_{11}U_{11})}{NT}=o_p\left(\frac{1}{N}\right).\label{eq:EW_CRVE_degeneracy3}
	\end{align}
	
\end{lemma}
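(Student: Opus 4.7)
Under Assumption \ref{a:NW_OLS}(iv)(2), the classical iid CLT combined with the moment bound in (ii) gives $\delta := \hat\beta-\beta = O_p((NT)^{-1/2})$, and \eqref{eq:linear_error} yields $E[X_{it}U_{it}]=0$. I will assume $k=1$ without loss of generality (the matrix case follows componentwise), and write $\hat U_{it} = U_{it} - X_{it}\delta$ throughout. My strategy for each of the three claims is: (a) replace $\hat U$ by $U$ and analyze the resulting ``oracle'' expression, using iid and $E[X_{11}U_{11}]=0$ to annihilate all off-diagonal expectations, and bound its variance by a direct second-moment calculation; (b) expand the squares in $\delta$ and show the cross terms satisfy the required rate via Cauchy-Schwarz together with $|\delta|=O_p((NT)^{-1/2})$.

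For \eqref{eq:EW_CRVE_degeneracy1}, the oracle quantity $\frac{1}{(NT)^2}\sum_{i,t}X_{it}^2U_{it}^2$ is $(NT)^{-1}$ times the sample mean of $NT$ iid random variables with mean $\mathrm{var}(X_{11}U_{11})$ and bounded variance, so Chebyshev delivers fluctuation $O_p((NT)^{-3/2})=o_p((NT)^{-1})$. The linear-in-$\delta$ cross term is $\delta\cdot\frac{1}{(NT)^2}\sum_{i,t}X_{it}^3 U_{it} = O_p(|\delta|\cdot(NT)^{-1}) = O_p((NT)^{-3/2}) = o_p((NT)^{-1})$, and the quadratic-in-$\delta$ term is even smaller.

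For \eqref{eq:EW_CRVE_degeneracy3}, set $\bar Z_i := T^{-1}\sum_{t=1}^T X_{it}U_{it}$, so the oracle triple sum equals $N^{-2}\sum_{i=1}^N \bar Z_i^2$. The $\bar Z_i$ are iid across $i$ with $E[\bar Z_i^2] = \mathrm{var}(X_{11}U_{11})/T$ (cross-$t$ expectations vanish by iid and mean-zero of $X_{it}U_{it}$) and $E[\bar Z_i^4] = O(T^{-2})$ via the standard iid 4th-moment bound under Assumption \ref{a:NW_OLS}(ii). Chebyshev then yields $N^{-1}\sum_i\bar Z_i^2 = \mathrm{var}(X_{11}U_{11})/T + O_p(1/(T\sqrt N))$, hence $N^{-2}\sum_i\bar Z_i^2 = \mathrm{var}(X_{11}U_{11})/(NT) + o_p(1/N)$. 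The $\delta$-cross terms reduce to $\delta\cdot N^{-2}\sum_i\bar Z_i\cdot(T^{-1}\sum_{t'}X_{it'}^2)$ and similar; the inner sum has mean $O(1/T)$ and fluctuation $O_p(1/\sqrt{NT})$ by a direct moment calculation, so after multiplying by $|\delta|$ the contribution is $o_p(1/N)$.

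For \eqref{eq:EW_CRVE_degeneracy2}, split the first double-sum by $i=i'$ versus $i\neq i'$. The $i=i'$ part is the object in \eqref{eq:EW_CRVE_degeneracy1}, contributing $\mathrm{var}(X_{11}U_{11})/(NT)+o_p((NT)^{-1})$, which is a fortiori $\mathrm{var}(X_{11}U_{11})/(NT)+o_p(1/T)$. The $i\neq i'$ part of the first term and the Newey-West sum have mean zero in oracle form. Enumerating non-vanishing pairings of the relevant fourth-order products shows that at most $O(N^2T)$ index tuples per lag $m$ contribute nonzero expectation, each bounded by the 4th moment of $X_{11}U_{11}$; this yields variance $O(1/(N^2T^3))$ per lag and hence fluctuation $O_p(1/(NT^{3/2}))$ per lag. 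Aggregating across lags via triangle inequality and $|w(m,M)|\le 1$ gives total contribution $O_p(M/(NT^{3/2}))$, which by Assumption \ref{a:NW_OLS}(vi) ($M=o((N\wedge T)^{1/2})$) is $o_p(1/T)$. The $\delta$-cross terms are handled analogously. The main technical obstacle is the variance computation for the Newey-West sum: it requires careful enumeration of the non-vanishing quadruple pairings across two time indices (separated by the lag $m$) and four cross-sectional indices, since under iid with mean zero most such quadruples vanish but a controlled number do not. Once the single-lag variance bound is in hand, aggregation across the $M$ lags via Assumption \ref{a:NW_OLS}(vi) is routine.
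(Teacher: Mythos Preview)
Your proposal is correct and follows essentially the same route as the paper: split each expression into an oracle part (with $U$ in place of $\hat U$) and a remainder controlled by $|\hat\beta-\beta|=O_p((NT)^{-1/2})$, then exploit iid structure and $E[X_{it}U_{it}]=0$ to kill off-diagonal expectations and bound the variance of the oracle sums. The paper's proof is terser---it simply invokes ``the law of large numbers for i.i.d.\ data'' where you carry out the explicit quadruple-pairing variance counts---but the underlying decomposition (diagonal versus off-diagonal in $i$, per-lag analysis aggregated over $M$ lags, and the same use of Assumption~\ref{a:NW_OLS}(vi) to absorb the factor $M$) is identical, and your rates match the paper's.
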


\begin{proof}
	The first statement \eqref{eq:EW_CRVE_degeneracy1} follows from the consistency of $\hat \beta$ implied by its asymptotic normality from the first part of Theorem \ref{thm:OLS} (which does not rely on the current lemma) and the law of large numbers for i.i.d. random variables.
	
	The second statement  \eqref{eq:EW_CRVE_degeneracy2} follows from
	\begin{align*}
	&\frac{1}{N^2T}\sum_{i=1}^N \sum_{i'=1}^N\sum_{t=1}^T X_{it}\hat U_{it}\hat U_{i't} X_{i't}'=
	\\
	&\underbrace{	 \left(\frac{N-1}{N}\right) \frac{1}{N(N-1)T}\sum_{i=1}^N \sum_{i'\ne i}^N\sum_{t=1}^T X_{it}\hat U_{it}\hat U_{i't} X_{i't}'}_{=O_p((N^2T)^{-1/2})}+\underbrace{\frac{1}{N^2T}\sum_{i=1}^N \sum_{t=1}^T X_{it}\hat U_{it}\hat U_{it}X_{it}'}_{=var(R_{11})/N+O_p(N^{-1}(NT)^{-1/2})},
	\end{align*}
	where the first term on the right-hand side is $o_p(1)$ by the law of large numbers for i.i.d. data, 
	and  
	the second term on the right-hand side is $o_p(1/N)$ by the first statement \eqref{eq:EW_CRVE_degeneracy1} .
	The consistency of $\hat \beta$ and
	the law of large numbers for i.i.d. data imply that
	\begin{align*}
	&\frac{1}{N^2 T}\sum_{t=m+1}^{T} \sum_{i=1}^N\sum_{i'\ne i}^N  X_{it}\hat U_{it}\hat U_{i',t-m}X_{i',t-m}'
	\\&=	\frac{1}{N^2 T}\sum_{t=m+1}^{T} \sum_{i=1}^N\sum_{i'\ne i}^N  X_{it} U_{it} U_{i',t-m}X_{i',t-m}' +O_p\left(\|\hat \beta- \beta\|\right)\\
	&=O_p\left(\frac{1}{\sqrt{N^2T}}\right) +O_p\left(\frac{1}{\sqrt{NT}}\right).
	\end{align*}
	Thus, by Assumption \ref{a:NW_OLS} (v)(vi), we have
	\begin{align*}
	\sum_{m=1}^{M}\frac{w(m,M)}{(N T)^2}\sum_{t=m+1}^{T} \sum_{i=1}^N\sum_{i'\ne i}^N  X_{it}\hat U_{it}\hat U_{i',t-m}X_{i',t-m}' =O_p\left(\frac{ M}{T\sqrt{NT}}\right)=o_p\left(\frac{1}{T}\right).
	\end{align*}
	Combining these probability limits together yields the second statement  \eqref{eq:EW_CRVE_degeneracy2} .
	
	The third statement \eqref{eq:EW_CRVE_degeneracy3} can be shown similarly to the first part of the second statement.
\end{proof}

\section{Heterogeneous Per-Cluster Numbers of Observations}\label{sec:multiple_observations}

The main results presented in Section \ref{sec:main} in the main text straightforwardly extends to a more general case with possibly zero or multiple observations per cluster intersection.
Suppose that the $(i,t)$-th cluster contains  $J_{it}$ units $\{X_{it1},\ldots,X_{itJ_{it}}\}$ of observations, where $J_{it}$ is considered a random variable.
As in the main text, let $\theta = E[X_{itj}]=0$ without loss of generality.
In this extended setting, the sample mean is defined by
\begin{align*}
\widetilde\vartheta 
=
\frac{1}{\sum_{i=1}^N \sum_{t=1}^T J_{it}} \sum_{i=1}^N \sum_{t=1}^T \sum_{j=1}^{J_{it}} X_{itj}.
\end{align*}

The standard approach to clustered data is to treat the cluster sum
$
\overline X_{it} = \sum_{j=1}^{J_{it}} X_{itj}
$
as the effective observation for the $(i,t)$-th unit.
Accordingly, define their projections
$\overline a_i = E[\overline X_{it}|\alpha_i]$ and
$\overline b_t = E[\overline X_{it}|\gamma_t]$.
Let their (long-run) variances be denoted by
$\overline\Sigma_a = E[\overline a_i \overline a_i']$ and
$\overline\Sigma_b = \sum_{\ell = -\infty}^\infty E[\overline b_t \overline b_{t+\ell}']$.
With
$
\widehat\mu_J = \frac{1}{NT}\sum_{i=1}^N \sum_{t=1}^T J_{it},
$
we now extend Assumptions \ref{a:NW} and \ref{a:non-degenerate} in the baseline model as follows.

\begin{assumption}\label{a:NW_multiple}
	(i) Assumption \ref{a:NW} holds with $X_{it}$ replaced by $\overline X_{it}$.
	(ii) Assumption \ref{a:non-degenerate} holds with $\Sigma_a$ and $\Sigma_b$ replaced by $\overline\Sigma_a$ and $\overline\Sigma_b$, respectively.
	(iii) $\widehat\mu_J \stackrel{p}{\to}\mu_J \in (0,\infty)$.
\end{assumption}
Note that more low level conditions on sampling formulated in terms of $J_{it}$ and $X_{itj}$ can be done following the approach taken in Assumption 1 in \cite{DDG2018}.
The following theorem states that the conclusions of Theorems \ref{thm:variance_mean} and \ref{thm:main_asymptotics} continue to hold under this generalized setting.

\begin{theorem}\label{thm:main_asymptotics_multiple}
	Suppose that Assumption \ref{a:NW_multiple} holds. Then,
	\begin{align*}
	\widehat \vartheta \stackrel{p}{\rightarrow} \theta
	\qquad\text{ and }\qquad
	var( \hat \vartheta )^{-1/2}(\hat \vartheta - \theta)\stackrel{d}{\to} N(0,I_m).
	\end{align*}
\end{theorem}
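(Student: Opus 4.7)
The plan is to reduce the proof to Theorems \ref{thm:variance_mean} and \ref{thm:main_asymptotics} applied to the cluster sums $\overline X_{it} = \sum_{j=1}^{J_{it}} X_{itj}$, combined with a Slutsky argument that absorbs the random normalization $\widehat\mu_J$. The key identity is
\[
\widehat\vartheta \;=\; \widehat\mu_J^{-1} \cdot \overline\vartheta,
\qquad
\overline\vartheta := \frac{1}{NT}\sum_{i=1}^N\sum_{t=1}^T \overline X_{it},
\]
so every asymptotic claim about $\widehat\vartheta$ follows from the corresponding claim about $\overline\vartheta$ together with the convergence of $\widehat\mu_J$.

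For consistency, Assumption \ref{a:NW_multiple}(i) guarantees that $\overline X_{it}$ satisfies Assumption \ref{a:NW}, so Theorem \ref{thm:variance_mean} yields $\overline\vartheta \stackrel{p}{\to} \overline\theta := E[\overline X_{it}]$ together with the variance expansion $var(\overline\vartheta) = N^{-1}\overline\Sigma_a + T^{-1}\overline\Sigma_b(1+o(1)) + (NT)^{-1}\overline\Sigma_e(1+o(1))$. Assumption \ref{a:NW_multiple}(iii) gives $\widehat\mu_J \stackrel{p}{\to} \mu_J > 0$, and the continuous mapping theorem then yields $\widehat\vartheta \stackrel{p}{\to} \overline\theta/\mu_J$. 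This limit equals $\theta = E[X_{itj}]$ via the iterated-expectation identity $\overline\theta = E[E[\overline X_{it}\mid J_{it}]] = E[J_{it}]\,\theta = \mu_J\,\theta$ implicit in the sampling framework.

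For asymptotic normality, Assumptions \ref{a:NW_multiple}(i)--(ii) permit applying Theorem \ref{thm:main_asymptotics} to $\overline\vartheta$, giving $var(\overline\vartheta)^{-1/2}(\overline\vartheta - \overline\theta) \stackrel{d}{\to} N(0,I_m)$. Using the algebraic decomposition
\[
\widehat\vartheta - \theta
\;=\; \widehat\mu_J^{-1}(\overline\vartheta - \overline\theta) + \overline\theta\,(\widehat\mu_J^{-1} - \mu_J^{-1}),
\]
the first term, premultiplied by $var(\overline\vartheta)^{-1/2}$, converges in distribution to $\mu_J^{-1} N(0,I_m)$ by Slutsky since $\widehat\mu_J^{-1} \stackrel{p}{\to}\mu_J^{-1}$. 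The second term is of the same order as (or smaller than) $var(\overline\vartheta)^{1/2}$: applying Theorem \ref{thm:main_asymptotics} a second time to $J_{it}$ (whose moments and mixing structure are inherited from those of $\overline X_{it}$) yields $\widehat\mu_J - \mu_J = O_p((N\wedge T)^{-1/2})$ in the non-degenerate case and $O_p((NT)^{-1/2})$ in the i.i.d.\ case, which matches or dominates the leading rate of $\overline\vartheta - \overline\theta$ only up to a bounded constant. Combining these two pieces with the asymptotic equality $var(\widehat\vartheta) = \mu_J^{-2}\, var(\overline\vartheta)(1+o(1))$, obtained by the same decomposition at the second-moment level, gives $var(\widehat\vartheta)^{-1/2}(\widehat\vartheta - \theta) \stackrel{d}{\to} N(0, I_m)$.

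The main obstacle I anticipate is the variance-normalization bookkeeping: $var(\widehat\vartheta)$ is not literally $\mu_J^{-2}\, var(\overline\vartheta)$ because of the joint randomness in $\widehat\mu_J$ and $\overline\vartheta$, so a careful expansion of the cross-term $Cov(\overline\vartheta,\widehat\mu_J)$ and a uniform bound on its contribution are needed to pin down the leading-order behavior and justify the self-normalization. Under Assumption \ref{a:NW_multiple}(ii), these cross-contributions are $o(\cdot)$ relative to the leading $N^{-1}\vee T^{-1}$ term (respectively $(NT)^{-1}$ in the i.i.d.\ subcase), so the self-normalized CLT goes through; the remainder of the argument is routine delta-method bookkeeping on top of the mean-estimation theorems already established.
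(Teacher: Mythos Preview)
Your overall strategy coincides with the paper's: reduce to the cluster sums $\overline X_{it}$, invoke Theorems~\ref{thm:variance_mean} and~\ref{thm:main_asymptotics} for $\overline\vartheta$, and then pass to $\widehat\vartheta=\widehat\mu_J^{-1}\overline\vartheta$ via Slutsky. The consistency argument is fine. The gap is in your treatment of the second term $\overline\theta\,(\widehat\mu_J^{-1}-\mu_J^{-1})$ in the CLT step. Assumption~\ref{a:NW_multiple}(iii) only gives $\widehat\mu_J\stackrel{p}{\to}\mu_J$; it does not give a rate, and nothing in the stated assumptions licenses applying Theorem~\ref{thm:main_asymptotics} to $J_{it}$ itself. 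Even if you could justify $\widehat\mu_J-\mu_J=O_p((N\wedge T)^{-1/2})$, your own observation that this ``matches\ldots the leading rate of $\overline\vartheta-\overline\theta$'' means the second term is of the \emph{same} order as the first, so it is not negligible: it would contribute to the limit and you would need a joint CLT for $(\overline\vartheta,\widehat\mu_J)$ plus the delta method, which you do not carry out.

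The paper avoids this problem by taking $\theta=E[X_{itj}]=0$ without loss of generality (this is set up in the paragraph preceding the theorem). Under the implicit identity $\overline\theta=\mu_J\theta$ that you yourself note, this forces $\overline\theta=0$, and your second term vanishes identically. What remains is $\widehat\vartheta=\widehat\mu_J^{-1}\overline\vartheta=\mu_J^{-1}\overline\vartheta+(\widehat\mu_J^{-1}-\mu_J^{-1})\overline\vartheta$, where the last piece is $o_p(1)\cdot O_p(var(\overline\vartheta)^{1/2})=o_p(var(\overline\vartheta)^{1/2})$ using only the consistency of $\widehat\mu_J$. This immediately yields $(\mu_J^{-2}var(\overline\vartheta))^{-1/2}\widehat\vartheta\stackrel{d}{\to}N(0,I_m)$, and the self-normalized statement then reduces to the variance identity $var(\widehat\vartheta)=\mu_J^{-2}var(\overline\vartheta)(1+o(1))$, which the paper (tersely) asserts and which follows from the same expansion at the second-moment level. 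Your concern about the variance bookkeeping is legitimate, but once $\overline\theta=0$ the cross-term you worry about is between $\widehat\mu_J^{-1}$ and $\overline\vartheta$, and its contribution is lower order for the same reason.
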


\begin{proof}
	Under Assumption \ref{a:NW_multiple} (i), it immediately follows that the conclusion of Theorem \ref{thm:variance_mean} holds with $\theta$, $\widehat\theta$, $a_i$, $b_t$, and $e_{it}$ replaced by 
	$\vartheta = E[\overline X_{it}]$,
	$\widehat\vartheta = \frac{1}{NT}\sum_{i=1}^N \sum_{t=1}^T \overline X_{it}$,
	$\overline a_i$,
	$\overline b_t$, and
	$\overline e_{it} = \overline X_{it} - \overline a_i - \overline b_t$, respectively.
	Since
	$
	\widetilde\vartheta 
	=
	\widehat\mu_J^{-1} \widehat\vartheta,
	$
	it follows that
	$\widehat \vartheta \stackrel{p}{\rightarrow} \theta$ as $N,T \rightarrow \infty$ under Assumption \ref{a:NW_multiple} (iii).
	
	Under Assumption \ref{a:NW_multiple} (i)--(ii), the conclusion of Theorem \ref{thm:main_asymptotics} holds with $\widehat\theta$ replaced by $\widehat\vartheta$.
	Since
	$
	\widetilde\vartheta 
	=
	\widehat\mu_J^{-1} \widehat\vartheta,
	$
	it follows that
	$
	var(\widetilde\vartheta)^{-1/2}(\widetilde\vartheta-\theta) \stackrel{d}{\rightarrow} N(0,I_m)
	$
	holds under Assumption \ref{a:NW_multiple} (iii).
\end{proof}

\section{Additional Information about Data Analyses}\label{sec:additional_details_data}

\subsection{Data used for Section \ref{sec:time_effects}}

For the analysis in Section \ref{sec:time_effects}, we use the data from \citet*{bloom2013identifying}.
This data set is publicly available as a supplementary material of \citet*{bloom2013identifying} from the Econometric Society.
We use two variables contained in spillovers.dta.
The log Tobin's average Q is available as the variable named lq.
The log R\&D stock divided by capital stock is available as the variable named grd\_k.

\subsection{Estimation of $\gamma$ for the Preliminary Analysis in Section \ref{sec:time_effects}}\label{sec:estimation_gammax_gammay}

In the preliminary analysis of the market value data to motivate our novel standard error formula, we estimate $\hat\gamma_t$ 
for each $t$ and its partial autocorrelation coefficient in Section \ref{sec:time_effects}.
The current appendix section presents a concrete estimation procedure used to obtain the estimates in Section \ref{sec:time_effects}.

To eliminate the firm effect $\alpha_i$, we first apply the within transformation of $Y_{it}$ and obtain $\ddot{Y}_{it} = \gamma_t - \bar\gamma + \varepsilon_i - \bar\varepsilon_i$ for each $(i,t)$, where $\bar\gamma = T^{-1}\sum_{t=1}^T \gamma_t$ and $\bar\varepsilon_i = T^{-1} \sum_{t=1}^T \varepsilon_{it}$.
We can then estimate $\gamma_t$ up to location $\bar\gamma$ by $\hat\gamma_t = N^{-1}\sum_{i=1}^N \ddot{Y}_{it}$ for each $t$.
The partial autocorrelation of $\{\gamma_t\}_t$ in can be also estimated by running the first-order autoregression of $\ddot{Y}_{it}$.
We report the HC0 standard error in Section \ref{sec:time_effects} because our robust standard error formula is yet to be introduced as of Section \ref{sec:time_effects}.
The autocorrelogram is produced based on the series $\{\hat\gamma_t\}_{t=1}^T$.


\subsection{Data used for Section \ref{sec:application}}\label{sec:data_asset_pricing}

For the analysis in Section \ref{sec:application}, we use the data from \citet*{gagliardini2016time}.
This data set is publicly available as a supplementary material of \citet*{gagliardini2016time} from the Econometric Society.
We combined data from multiple files located in the folder named GOS\_dataCodes\_paper.
The returns of the 44 industry portfolios are available in Wspace\_44Indu.mat,
the returns of the 9936 individual stocks are available in Wspace\_CRSPCMST\_ret.mat, 
the Fama-French factors are available in Wspace\_Fact.mat, and 
the risk-free rates (monthly 30-day T-bill yields) are available in RiskFree.mat. 

\section{Additional Simulations}\label{sec:additional_simulations}

\subsection{Power}\label{sec:simulations_power}

The baseline simulation studies presented in Section \ref{sec:simulations} focuses on the coverage probabilities.
In this section, we present additional simulation studies focusing on the power.

We continue to use the same simulation designs as in Section \ref{sec:simulations}.
Instead of computing the coverage probabilities, however, we now compute the rejection probabilities for the hypothesis $H_0: \beta_1=b$ for various values of $b \in [0.5,1.5]$ with the nominal size of 5\%. 
Recall that the true value is $\beta_1=1$.
We use the sample size of $N=T=50$ throughout, and run 10,000 Monte Carlo iterations for each set of simulations.

Figure \ref{fig:power} illustrates the power curves for EHW, CR$i$, CR$t$, CGM, MNW, M, T, and CHS.
Panel (A) illustrates power curves under the i.i.d. design. Panels (B), (C), and (D) illustrate power curves under the dependence designs with $\rho=0.25$, 0.50, and 0.75, respectively.

\begin{figure}
	\centering
	\begin{tabular}{cc}
		(A) I.I.D. Design
		&
		(B) Dependence Design with $\rho=0.25$\\
		\includegraphics[width=0.45\textwidth]{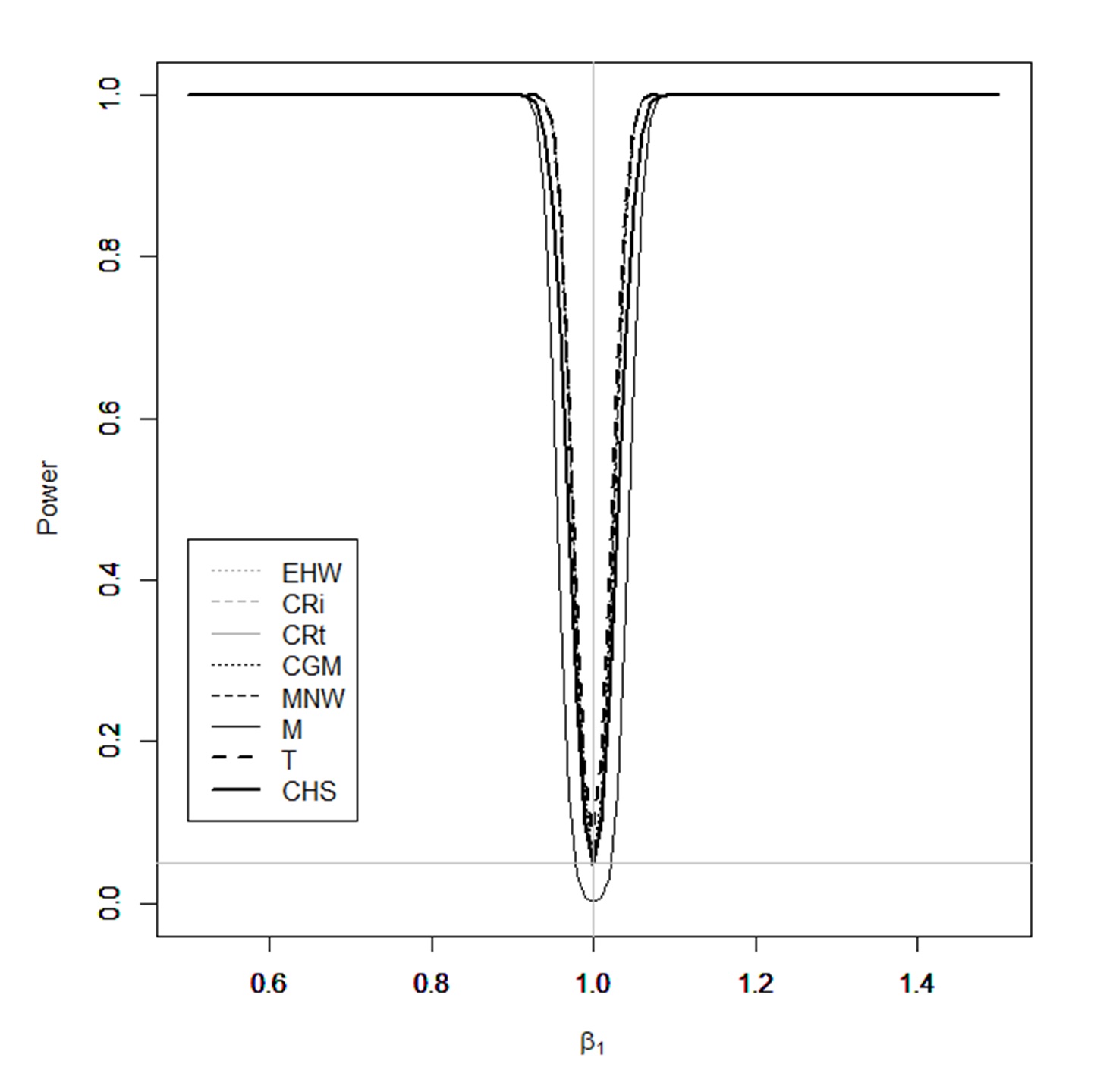}
		&
		\includegraphics[width=0.45\textwidth]{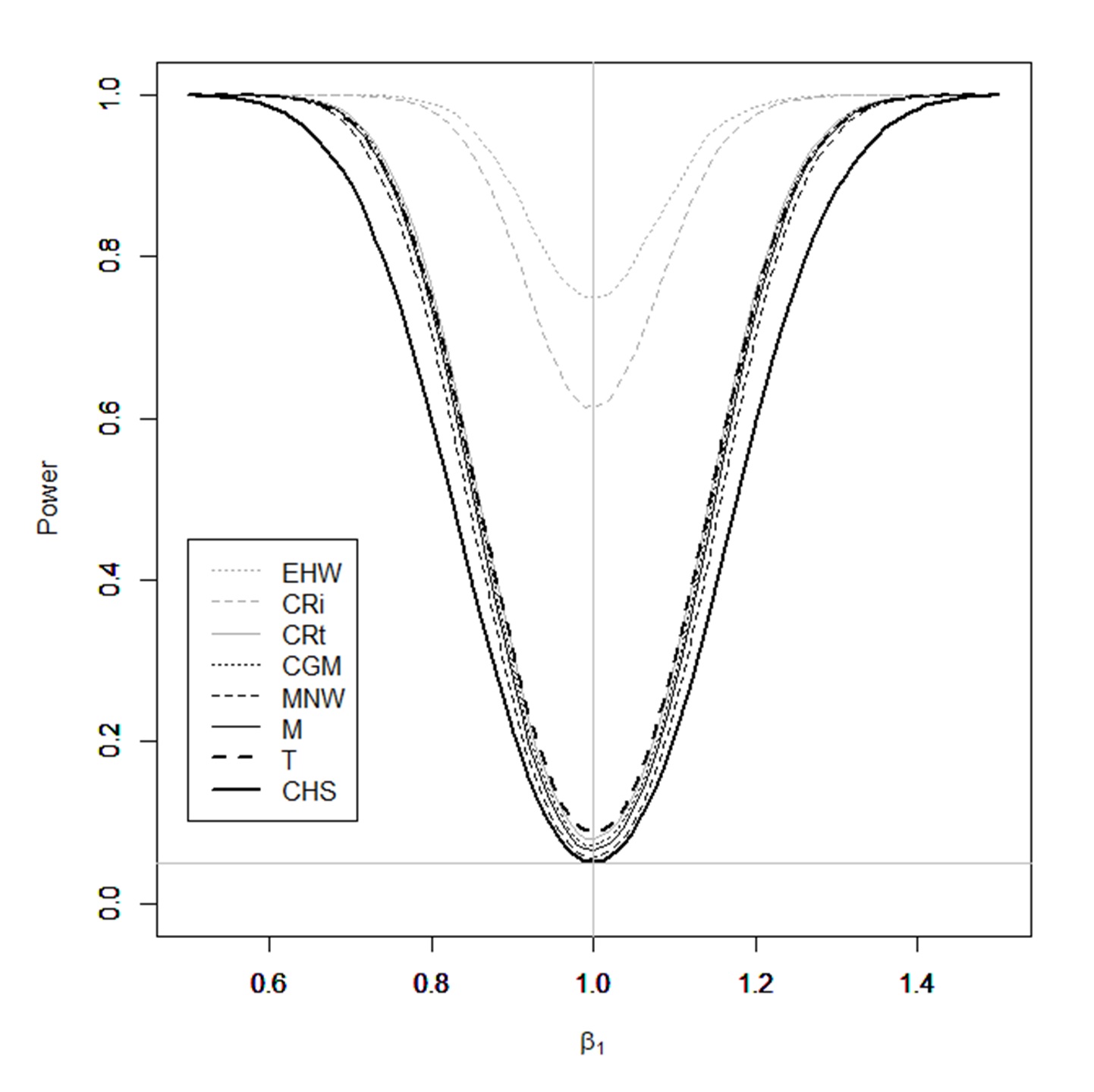}
		\\
		(C) Dependence Design with $\rho=0.50$
		&
		(D) Dependence Design with $\rho=0.75$\\
		\includegraphics[width=0.45\textwidth]{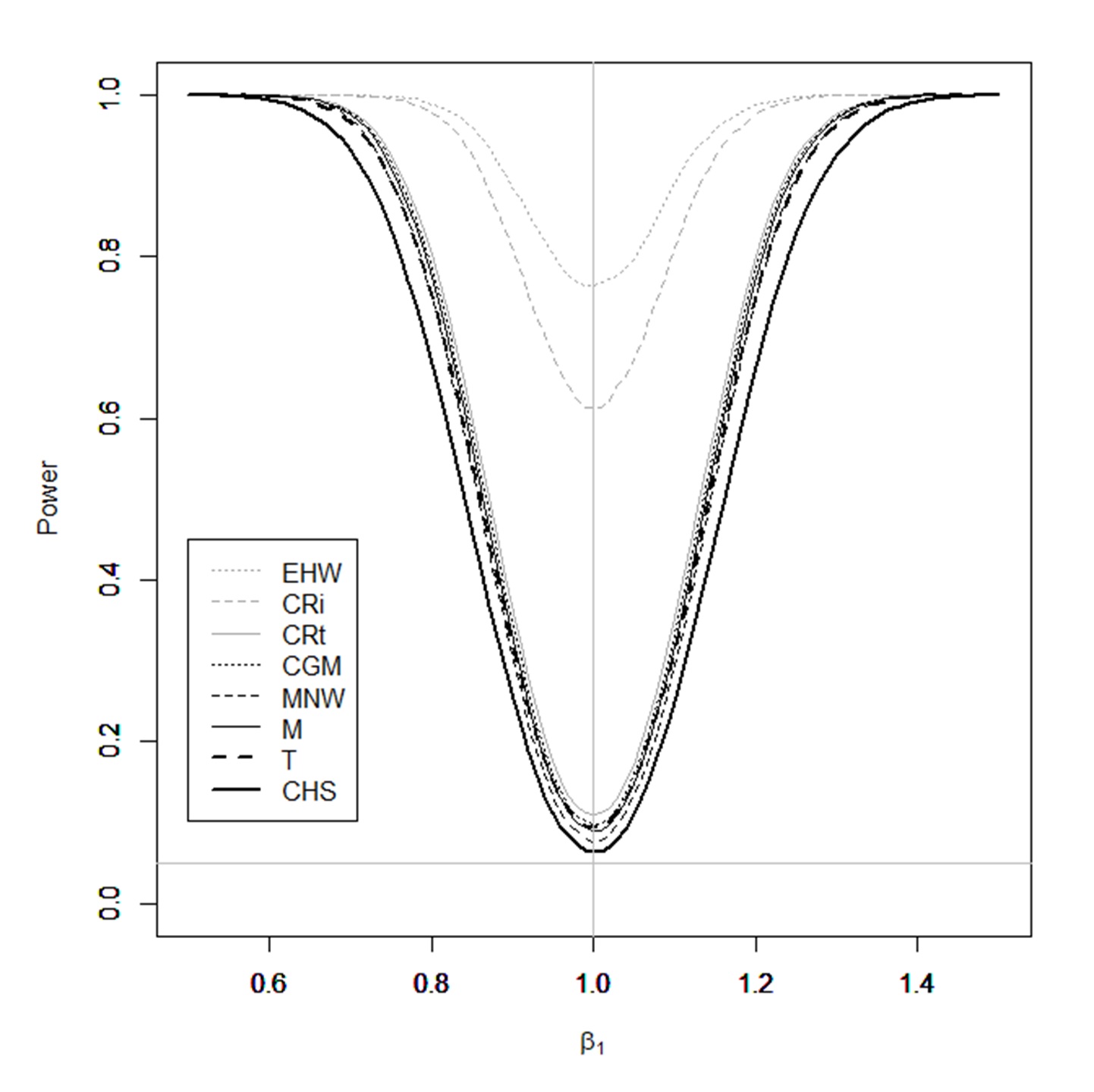}
		&
		\includegraphics[width=0.45\textwidth]{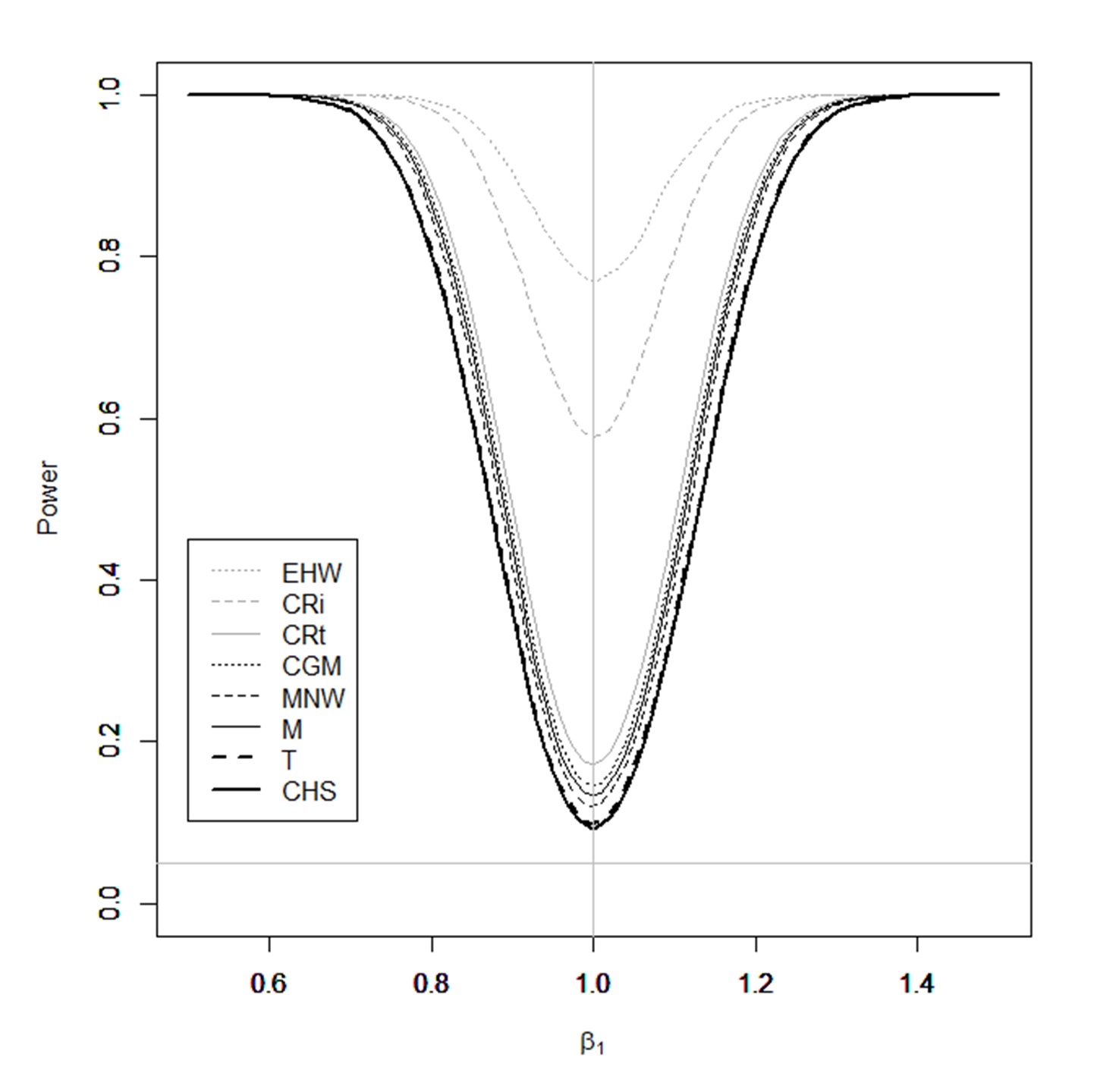}
	\end{tabular}
	\caption{Power curves for EHW, CR$i$, CR$t$, CGM, MNW, M, T, and CHS based on 10,000 Monte Carlo iterations. The true parameter value is $\beta_1=1$ and the nominal size is 5\%. Panel (A) illustrates power curves under the i.i.d. design. Panels (B), (C), and (D) illustrate power curves under the dependence designs with $\rho=0.25$, 0.50, and 0.75, respectively. The sample size is set to $N=T=75$ throughout.}${}$\\${}$
	\label{fig:power}
\end{figure}

The sizes are complementary to the coverage probabilities reported in Section \ref{sec:simulations}.
As the hypothesized value of $\beta_1$ deviates away from the true value $\beta_1=1$, the power increases for each method.
Some methods show higher power than the others, but only at the expense of size distortions.

\subsection{Simulations for the Two-Way Fixed-Effect Estimator}\label{sec:simulations_fixed_effect}

The baseline simulation studies presented in Section \ref{sec:simulations} focuses on the OLS.
In this section, we present additional simulation studies focusing on the two-way fixed-effect estimator for fixed-effect models.

Motivated by the example model \eqref{eq:FE_example} in Section \ref{sec:fixed_effect_necessary}, consider the following data generating process.
\begin{align*}
Y_{it} &= \beta_0 + \beta_1 X_{it} + U_{it},
\end{align*}
where the right-hand side variables $(X_{it},U_{it})'$ are generated through the panel dependence structure
\begin{align*}
X_{it} &= w_1 \alpha_{i1} \gamma_{t2} + w_2 \alpha_{i2} \gamma_{t1} + w_3 \varepsilon_{it0}
\qquad\text{and}\\
U_{it} &= w_4 \alpha_{i0} + w_5 \gamma_{t0} + w_6 \alpha_{i1} \gamma_{t3} + w_7 \alpha_{i3} \gamma_{t1} + w_8 \varepsilon_{it1}.
\end{align*}
For the weight parameters, we use $(w_1,w_2,w_3,w_4,w_5,w_6,w_7,w_8) = (0,0,1,0,0,0,0,1)$ to generate i.i.d. data and also use $(w_1,w_2,w_3,w_4,w_5,w_6,w_7,w_8) = (0.25,0.25,1.00,0.25,0.25,0.25,0.25,1.00)$ to generate dependent data.
The latent components $(\alpha_{i0},\alpha_{i1},\alpha_{i2},\alpha_{i3},\varepsilon_{it0},\varepsilon_{it1})$ are all mutually independent $N(0,1)$.

Similarly to the baseline simulation design presented in Section \ref{sec:simulations}, the latent common time effects $(\gamma_{t0},\gamma_{t1},\gamma_{t2},\gamma_{t3})$ are dynamically generated according to the AR(1) design:
\begin{align*}
&
\gamma_{t0} = \rho\gamma_{(t-1)0} + \tilde\gamma_{t0}
\text{ where $\tilde\gamma_{t0}$ are independent draws from $N(0,1-\rho^2)$; }
\\
&
\gamma_{t1} = \rho\gamma_{(t-1)1} + \tilde\gamma_{t1}
\text{ where $\tilde\gamma_{t1}$ are independent draws from $N(0,1-\rho^2)$; }
\\
&
\gamma_{t2} = \rho\gamma_{(t-1)2} + \tilde\gamma_{t2}
\text{ where $\tilde\gamma_{t2}$ are independent draws from $N(0,1-\rho^2)$; and}
\\
&
\gamma_{t3} = \rho\gamma_{(t-1)3} + \tilde\gamma_{t3}
\text{ where $\tilde\gamma_{t3}$ are independent draws from $N(0,1-\rho^2)$.}
\end{align*}
The initial values are drawn from $N(0,1)$.
We vary the AR coefficient $\rho \in \{0.25,0.50,0.75\}$ across sets of simulations.

For each realization of observed data $\{(Y_{it},X_{it}) : 1 \le i \le N, 1 \le t \le T\}$ constructed according to the data generating process described above, we estimate $\beta_1$ by the two-way fixed-effect estimator and compute its standard error as in Section \ref{sec:fixed_effect_theory}.
As in the baseline simulation studies, we compare our estimator CHS with EHW, CR$i$, CR$t$, CGM, MNW, M and T;
see Section \ref{sec:simulations} for details.

Table \ref{tab:sim_fixed_effect} reports simulation results.
Reported values are the coverage frequencies for the slope parameter $\beta_1$ for the nominal probability of 95\% based on 10,000 Monte Carlo iterations. 
The top and bottom panels show coverage probability results under the i.i.d. design and the dependence design, respectively.
In each group of three consecutive rows, the panel sample sizes $(N,T)$ vary by rows.
Cells are shaded based on the proximity of the simulated coverage probability to the nominal probability of 0.95; the darker shades indicate more correct coverage.

\begin{table}
	\renewcommand{\arraystretch}{0.975}
	\centering
	\begin{tabular}{lcccccccccccc}
		\multicolumn{12}{c}{I.I.D. Design: Nominal Probability = 95\%}\\
		\hline\hline
		& $N$ & $T$ & $\rho$ & EHW & CR$i$ & CR$t$ & CGM & MNW & M & T & CHS\\
		\hline
		(I)    & 50 & 100 & --- & \one 0.945 & \two 0.936 & \one 0.942 & \two 0.934 & \one 0.947 & \fiv 0.999 & \fou 0.915 & \one 0.950\\
		(II)   & 75 & 75 & --- & \one 0.949 & \one 0.945 & \one 0.946 & \two 0.939 & \one 0.952 & \fiv 0.999 & \fou 0.911 & \one 0.954\\
		(III)  & 100 & 50 & --- & \one 0.946 & \one 0.943 & \two 0.939 & \two 0.937 & \one 0.948 & \fiv 0.999 & \six 0.890 & \one 0.948\\
		\hline\hline
		&&&& \hspace{1.12cm} & \hspace{1.12cm} & \hspace{1.12cm} & \hspace{1.12cm} & \hspace{1.12cm} & \hspace{1.12cm} & \hspace{1.12cm} & \hspace{1.12cm} \ \\
		\multicolumn{12}{c}{Dependence Design: Nominal Probability = 95\%}\\
		\hline\hline
		& $N$ & $T$ & $\rho$ & EHW & CR$i$ & CR$t$ & CGM & MNW & M & T & CHS\\
		\hline
		(IV)    & 50 & 100 & 0.25 & \sev 0.888 & \fou 0.919 & \fiv 0.909 & \two 0.935 & \one 0.948 & \fiv 0.997 & \thr 0.925 & \one 0.952\\
		(V)   & 75 & 75 & 0.25 & \six 0.892 & \fou 0.918 & \fou 0.918 & \one 0.940 & \one 0.949 & \fiv 0.995 & \thr 0.927 & \one 0.954\\
		(VI)  & 100 & 50 & 0.25 & \six 0.891 & \fou 0.911 & \thr 0.924 & \two 0.938 & \one 0.951 & \fiv 0.996 & \fou 0.911 & \one 0.953\\
		\hline
		(VII)    & 50 & 100 & 0.50 & \sev 0.880 & \thr 0.912 & \fiv 0.903 & \thr 0.928 & \one 0.942 & \fiv 0.995 & \thr 0.925 & \one 0.951\\
		(VIII)   & 75 & 75 & 0.50 & \sev 0.883 & \fou 0.910 & \fiv 0.909 & \two 0.932 & \one 0.942 & \fiv 0.994 & \thr 0.925 & \one 0.951\\
		(IX)  & 100 & 50 & 0.50 & \eig 0.877 & \six 0.895 & \fou 0.911 & \thr 0.924 & \two 0.939 & \fiv 0.995 & \fiv 0.906 & \one 0.949\\
		\hline
		(X)    & 50 & 100 & 0.75 & 0.857 & \six 0.894 & \eig 0.879 & \fou 0.910 & \thr 0.925 & \fiv 0.992 & \fou 0.919 & \one 0.940\\
		(XI)   & 75 & 75 & 0.75 & 0.848 & \sev 0.884 & \eig 0.877 & \fiv 0.907 & \thr 0.921 & \fou 0.989 & \fou 0.918 & \two 0.935\\
		(XII)  & 100 & 50 & 0.75 & 0.842 & \nin 0.863 & \eig 0.872 & \six 0.890 & \fiv 0.907 & \thr 0.986 & \six 0.892 & \thr 0.923\\
		\hline\hline
	\end{tabular}
	\caption{Coverage probabilities for the slope parameter $\beta_1$ for the two-way fixed-effect estimator with the nominal probability of 95\% based on 10,000 Monte Carlo iterations. The top and bottom panels show results under the i.i.d. and dependence designs, respectively. The sample size is indicated by $(N,T)$. The parameter $\rho$ indicates the AR coefficient in the dependence design. EHW stands for Eicker–Huber–White, CR$i$ stands for cluster robust within $i$, CR$t$ stands for cluster robust within $t$, CGM stands for Cameron-Gelbach-Miller, MNW stands for MacKinnon-Nielsen-Webb, M stands for Menzel, T stands for Thompson, and CHS stands for Chiang-Hansen-Sasaki.\\${}$}
	\label{tab:sim_fixed_effect}
\end{table}

Observe that we have similar qualitative patterns in these results to those presented for the baseline simulation studies presented in Section \ref{sec:simulations}.

\bibliography{biblio}

\begin{thebibliography}{39}
\newcommand{\enquote}[1]{``#1''}
\expandafter\ifx\csname natexlab\endcsname\relax\def\natexlab#1{#1}\fi

\bibitem[\protect\citeauthoryear{Andrews}{Andrews}{1991}]{Andrews1991}
\textsc{Andrews, D.~W.} (1991): \enquote{Heteroskedasticity and autocorrelation
  consistent covariance matrix estimation,} \emph{Econometrica}, 817--858.

\bibitem[\protect\citeauthoryear{Arellano}{Arellano}{1987}]{arellano1987computing}
\textsc{Arellano, M.} (1987): \enquote{Computing robust standard errors for
  within-groups estimators,} \emph{Oxford Bulletin of Economics and
  Statistics}, 49, 431--434.

\bibitem[\protect\citeauthoryear{Arora, Belenzon, and Sheer}{Arora
  et~al.}{2021}]{arora2021knowledge}
\textsc{Arora, A., S.~Belenzon, and L.~Sheer} (2021): \enquote{Knowledge
  spillovers and corporate investment in scientific research,} \emph{American
  Economic Review}, 111, 871--98.

\bibitem[\protect\citeauthoryear{Bloom, Schankerman, and Van~Reenen}{Bloom
  et~al.}{2013}]{bloom2013identifying}
\textsc{Bloom, N., M.~Schankerman, and J.~Van~Reenen} (2013):
  \enquote{Identifying technology spillovers and product market rivalry,}
  \emph{Econometrica}, 81, 1347--1393.

\bibitem[\protect\citeauthoryear{Cameron, Gelbach, and Miller}{Cameron
  et~al.}{2011}]{CGM2011}
\textsc{Cameron, C.~A., J.~B. Gelbach, and D.~L. Miller} (2011):
  \enquote{Robust inference with multiway clustering,} \emph{Journal of
  Business and Economic Statistics}, 29, 238--249.

\bibitem[\protect\citeauthoryear{Chen and Vogelsang}{Chen and
  Vogelsang}{2023}]{chen2023fixed}
\textsc{Chen, K. and T.~J. Vogelsang} (2023): \enquote{Fixed-b Asymptotics for
  Panel Models with Two-Way Clustering,} \emph{arXiv preprint
  arXiv:2309.08707}.

\bibitem[\protect\citeauthoryear{Chernozhukov, Chetverikov, and
  Kato}{Chernozhukov et~al.}{2019}]{chernozhukov2019inference}
\textsc{Chernozhukov, V., D.~Chetverikov, and K.~Kato} (2019):
  \enquote{Inference on causal and structural parameters using many moment
  inequalities,} \emph{Review of Economic Studies}, 86, 1867--1900.

\bibitem[\protect\citeauthoryear{Chiang, Hansen, and Sasaki}{Chiang
  et~al.}{2022}]{chiang2022standard}
\textsc{Chiang, H.~D., B.~E. Hansen, and Y.~Sasaki} (2022): \enquote{Standard
  errors for two-way clustering with serially correlated time effects,}
  \emph{arXiv preprint arXiv:2201.11304}.

\bibitem[\protect\citeauthoryear{Chiang, Kato, and Sasaki}{Chiang
  et~al.}{2021}]{chiang2020inference}
\textsc{Chiang, H.~D., K.~Kato, and Y.~Sasaki} (2021): \enquote{Inference for
  high-dimensional exchangeable arrays,} \emph{Journal of the American
  Statistical Association}, forthcoming.

\bibitem[\protect\citeauthoryear{Davezies, D'Haultf{\oe}uille, and
  Guyonvarch}{Davezies et~al.}{2018}]{DDG2018}
\textsc{Davezies, L., X.~D'Haultf{\oe}uille, and Y.~Guyonvarch} (2018):
  \enquote{Asymptotic Results under Multiway Clustering,} ArXiv:1807.07925.

\bibitem[\protect\citeauthoryear{Davezies, D'Haultf{\oe}uille, and
  Guyonvarch}{Davezies et~al.}{2021}]{DDG2019}
---\hspace{-.1pt}---\hspace{-.1pt}--- (2021): \enquote{Empirical process
  results for exchangeable arrays,} \emph{Annals of Statistics}, 49, 845--862.

\bibitem[\protect\citeauthoryear{Davidson}{Davidson}{1994}]{davidson1994stochastic}
\textsc{Davidson, J.} (1994): \emph{Stochastic Limit Theory: An Introduction
  for Econometricians}, OUP Oxford.

\bibitem[\protect\citeauthoryear{Dehling and Wendler}{Dehling and
  Wendler}{2010}]{dehling2010central}
\textsc{Dehling, H. and M.~Wendler} (2010): \enquote{Central limit theorem and
  the bootstrap for U-statistics of strongly mixing data,} \emph{Journal of
  Multivariate Analysis}, 101, 126--137.

\bibitem[\protect\citeauthoryear{Driscoll and Kraay}{Driscoll and
  Kraay}{1998}]{driscoll_kraay_1998}
\textsc{Driscoll, J.~C. and A.~C. Kraay} (1998): \enquote{Consistent Covariance
  Matrix Estimation with Spatially Dependent Panel Data,} \emph{The Review of
  Economics and Statistics Review of Economics and Statistics}, 80, 549--560.

\bibitem[\protect\citeauthoryear{Gagliardini, Ossola, and Scaillet}{Gagliardini
  et~al.}{2016}]{gagliardini2016time}
\textsc{Gagliardini, P., E.~Ossola, and O.~Scaillet} (2016):
  \enquote{Time-varying risk premium in large cross-sectional equity data
  sets,} \emph{Econometrica}, 84, 985--1046.

\bibitem[\protect\citeauthoryear{Gon{\c{c}}alves}{Gon{\c{c}}alves}{2011}]{gonccalves2011moving}
\textsc{Gon{\c{c}}alves, S.} (2011): \enquote{The moving blocks bootstrap for
  panel linear regression models with individual fixed effects,}
  \emph{Econometric Theory}, 27, 1048--1082.

\bibitem[\protect\citeauthoryear{Griliches}{Griliches}{1981}]{griliches1981market}
\textsc{Griliches, Z.} (1981): \enquote{Market value, R\&D, and patents,}
  \emph{Economics Letters}, 7, 183--187.

\bibitem[\protect\citeauthoryear{Hall, Jaffe, and Trajtenberg}{Hall
  et~al.}{2005}]{hall2005market}
\textsc{Hall, B.~H., A.~Jaffe, and M.~Trajtenberg} (2005): \enquote{Market
  value and patent citations,} \emph{RAND Journal of Economics}, 16--38.

\bibitem[\protect\citeauthoryear{Hansen}{Hansen}{1992}]{hansen1992consistent}
\textsc{Hansen, B.~E.} (1992): \enquote{Consistent covariance matrix estimation
  for dependent heterogeneous processes,} \emph{Econometrica}, 967--972.

\bibitem[\protect\citeauthoryear{Hansen}{Hansen}{2022}]{hansen2021econometrics}
---\hspace{-.1pt}---\hspace{-.1pt}--- (2022): \emph{Econometrics}, Princeton
  University Press.

\bibitem[\protect\citeauthoryear{Hidalgo and Schafgans}{Hidalgo and
  Schafgans}{2021}]{hidalgo2021inference}
\textsc{Hidalgo, J. and M.~Schafgans} (2021): \enquote{Inference without
  smoothing for large panels with cross-sectional and temporal dependence,}
  \emph{Journal of Econometrics}, 223, 125--160.

\bibitem[\protect\citeauthoryear{Jenish and Prucha}{Jenish and
  Prucha}{2009}]{jenish2009central}
\textsc{Jenish, N. and I.~R. Prucha} (2009): \enquote{Central limit theorems
  and uniform laws of large numbers for arrays of random fields,} \emph{Journal
  of econometrics}, 150, 86--98.

\bibitem[\protect\citeauthoryear{Juodis}{Juodis}{2021}]{juodis2021shock}
\textsc{Juodis, A.} (2021): \enquote{This shock is different: Estimation and
  inference in misspecified two-way fixed effects panel regressions,} Tech.
  rep., Working Paper.

\bibitem[\protect\citeauthoryear{Kallenberg}{Kallenberg}{2006}]{Kallenberg2006}
\textsc{Kallenberg, O.} (2006): \emph{Probabilistic Symmetries and Invariance
  Principles}, Springer Science \& Business Media.

\bibitem[\protect\citeauthoryear{Lazarus, Lewis, Stock, and Watson}{Lazarus
  et~al.}{2018}]{lazarus2018har}
\textsc{Lazarus, E., D.~J. Lewis, J.~H. Stock, and M.~W. Watson} (2018):
  \enquote{HAR inference: Recommendations for practice,} \emph{Journal of
  Business \& Economic Statistics}, 36, 541--559.

\bibitem[\protect\citeauthoryear{Liang and Zeger}{Liang and
  Zeger}{1986}]{liang1986longitudinal}
\textsc{Liang, K.-Y. and S.~L. Zeger} (1986): \enquote{Longitudinal data
  analysis using generalized linear models,} \emph{Biometrika}, 73, 13--22.

\bibitem[\protect\citeauthoryear{Lu and Su}{Lu and Su}{2022}]{lu2022uniform}
\textsc{Lu, X. and L.~Su} (2022): \enquote{Uniform inference in linear panel
  data models with two-dimensional heterogeneity,} \emph{Journal of
  Econometrics}.

\bibitem[\protect\citeauthoryear{MacKinnon, Nielsen, and Webb}{MacKinnon
  et~al.}{2021}]{mackinnon2021wild}
\textsc{MacKinnon, J.~G., M.~{\O}. Nielsen, and M.~D. Webb} (2021):
  \enquote{Wild bootstrap and asymptotic inference with multiway clustering,}
  \emph{Journal of Business \& Economic Statistics}, 39, 505--519.

\bibitem[\protect\citeauthoryear{Menzel}{Menzel}{2021}]{menzel2021bootstrap}
\textsc{Menzel, K.} (2021): \enquote{Bootstrap with cluster-dependence in two
  or more dimensions,} \emph{Econometrica}, forthcoming.

\bibitem[\protect\citeauthoryear{Newey and West}{Newey and
  West}{1987}]{newey1987simple}
\textsc{Newey, W.~K. and K.~D. West} (1987): \enquote{A simple, positive
  semi-definite, heteroskedasticity and autocorrelationconsistent covariance
  matrix,} \emph{Econometrica}, 55, 703.

\bibitem[\protect\citeauthoryear{Petersen}{Petersen}{2009}]{petersen2009estimating}
\textsc{Petersen, M.~A.} (2009): \enquote{Estimating standard errors in finance
  panel data sets: Comparing approaches,} \emph{Review of Financial Studies},
  22, 435--480.

\bibitem[\protect\citeauthoryear{Stock and Watson}{Stock and
  Watson}{2020}]{stock_watson}
\textsc{Stock, J.~H. and M.~W. Watson} (2020): \emph{Introduction to
  Econometrics}, New York: Pearson, 4 ed.

\bibitem[\protect\citeauthoryear{Thompson}{Thompson}{2011}]{Thompson2011}
\textsc{Thompson, S.~B.} (2011): \enquote{Simple formulas for standard errors
  that cluster by both firm and time,} \emph{Journal of Financial Economics},
  99, 1--10.

\bibitem[\protect\citeauthoryear{van~der Vaart and Wellner}{van~der Vaart and
  Wellner}{1996}]{vdVW1996}
\textsc{van~der Vaart, A.~W. and J.~A. Wellner} (1996): \emph{Weak Convergence
  and Empirical Processes}, Springer.

\bibitem[\protect\citeauthoryear{Verdier}{Verdier}{2020}]{verdier2020estimation}
\textsc{Verdier, V.} (2020): \enquote{Estimation and inference for linear
  models with two-way fixed effects and sparsely matched data,} \emph{Review of
  Economics and Statistics}, 102, 1--16.

\bibitem[\protect\citeauthoryear{Vogelsang}{Vogelsang}{2012}]{vogelsang2012}
\textsc{Vogelsang, T.~J.} (2012): \enquote{Heteroskedasticity, Autocorrelation,
  and Spatial Correlation Robust Inference in Linear Panel Models with
  Fixed-Fffects,} \emph{Journal of Econometrics}, 166, 303--319.

\bibitem[\protect\citeauthoryear{White}{White}{1984}]{white1984asymptotic}
\textsc{White, H.} (1984): \emph{Asymptotic Theory for Econometricians},
  Academic Press.

\bibitem[\protect\citeauthoryear{Yoshihara}{Yoshihara}{1976}]{yoshihara1976limiting}
\textsc{Yoshihara, K.-i.} (1976): \enquote{Limiting behavior of U-statistics
  for stationary, absolutely regular processes,} \emph{Zeitschrift f{\"u}r
  Wahrscheinlichkeitstheorie und verwandte Gebiete}, 35, 237--252.

\bibitem[\protect\citeauthoryear{Yu}{Yu}{1994}]{yu1994rates}
\textsc{Yu, B.} (1994): \enquote{Rates of convergence for empirical processes
  of stationary mixing sequences,} \emph{Annals of Probability}, 94--116.

\end{thebibliography}
\end{document}